\newcommand{\C}{\mathsf{C}}
\newcommand{\E}{\mathsf{E}}
\newcommand{\LC}{\mathsf{L}}
\newcommand{\cast}{($\bm{\ast}$)}
\newcommand{\qed}{\hfill\rule{2mm}{2mm}}                % QED (1)
\newtheorem{theorem}{Theorem}[section]
\newtheorem{corollary}[theorem]{Corollary}
\newtheorem{observation}[theorem]{Observation}
\newenvironment{proof}{{\noindent \em Proof:~}}{\hfill{\hfill$\Box$}}
\def\captionstyle{}
\def\boxcaptionstyle{\raggedright}
\def\maxfigfraction{.6}%        --- max frac of column for marg figure
\newdimen\figboxmargin%         --- margin between text and picture
\newdimen\figboxhang%           --- amount to hang picture into margin
\def\DVIscaling{1}%             --- for dvi2ps
\def\globalscaling{1}%          --- for scaling everything
\def\figuredirectory{./figures}%--- where to find the .ps files 
\let\boxer=\llboxer%            --- default origin in lower left for non-
\def\missingfigure#1{\hbox{Missing figure #1.ps}}
\newbox\figurebox
\def\figbox{%   [place] box style label caption
\@ifnextchar[{\figboxaux}{\figboxaux[htb]}}
\long\def\figboxaux[#1#2]#3#4#5#6{%     --- want left margin figure?
\writepict{{#3}{#4}{#5}{#6}}%           --- save figure if journalpicts
\setbox\figurebox\hbox{#3}%
\if l#1\tryleftbox{#4}{#5}{#6}%
\else%                                  --- want right margin figure?
\if r#1\tryrightbox{#4}{#5}{#6}%
\else%                                  --- want one or two column figure?
\if *#1\checktwocoloptions#2]{\box\figurebox}{#4*}{#5}{#6}%
\else\tryonecol[#1#2]{#4}{#5}{#6}%
\fi
\fi
\fi\ignorespaces}
\long\def\tryleftbox#1#2#3{%            --- does left margin figure fit?
\ifdim\wd\figurebox>\maxfigfraction\columnwidth \tryonecol[htb]{#1}{#2}{#3}%
\else\leftbox{\captionbox{\box\figurebox}{#1}{#2}{#3}}\fi}
\long\def\tryrightbox#1#2#3{%   --- does right margin figure fit?
\ifdim\wd\figurebox>\maxfigfraction\columnwidth \tryonecol[htb]{#1}{#2}{#3}%
\else\rightbox{\captionbox{\box\figurebox}{#1}{#2}{#3}}\fi}
\def\checktwocoloptions{%       --- make sure twocol has htb or p
\@ifnextchar]{\floatbox[htb}{\floatbox[}}
\long\def\tryonecol[#1]#2#3#4{% --- does one column figure fit?
\ifdim\wd\figurebox>\columnwidth \floatbox[#1]{\box\figurebox}{#2*}{#3}{#4}%
\else\floatbox[#1]{\box\figurebox}{#2}{#3}{#4}\fi}
\long\def\floatbox[#1]#2#3#4#5{%
\begin{#3}[#1]
\hbox to \hsize{\hfil#2\hfil}
\captionandlabel{#3}{#4}{#5}
\end{#3}
}
\long\def\captionbox#1#2#3#4{% box style label {[optshortcap] caption}
\setbox\figurebox\hbox{#1}%
\parbox[t]{\wd\figurebox}{%
\bigskip\box\figurebox
\let\captionstyle=\boxcaptionstyle
\captionandlabel{#2}{#3}{#4}
\bigskip
}}
\def\captionandlabel#1#2#3{%    --- {style} {label} {caption}
\def\testit{#3}%
\ifx\testit\empty\else%         --- if non empty caption
\writecapt{{#1}{#2}{#3}}%       --- save for journal figs
\captypeunstarred#1*.%          --- remove optional stars
\getcaption#3\endc@ption%       --- parse the caption
\def\testit{#2}%                --- check for empty label
\ifx\testit\empty\else\label{#2}\fi
\fi}
\def\captypeunstarred#1*#2.{%   --- remove two-column star from style
\def\@captype{#1}}
\def\getcaption{\@ifnextchar[{\getcaptwo}{\getcapone}}
\long\def\getcapone#1\endc@ption{\caption[#1]%
{\def\baselinestretch{1}\Large\normalsize\captionstyle\ignorespaces #1}}
\long\def\getcaptwo[#1]#2\endc@ption{\caption[#1]%
{\def\baselinestretch{1}\Large\normalsize\captionstyle\ignorespaces #2}}
\newdimen\figboxht
\newdimen\figboxwid%            --- picture width
\newif\ifisleftbox
\long\def\leftbox#1{%
\setbox\figurebox\hbox{#1}\global\isleftboxtrue
\startmarginbox
\vadjust{\smash{\rlap{\hskip\hsize\hskip\figboxhang
\llap{\raise.7\baselineskip\box\figurebox\hskip\rightskip}}}}%
\endmarginbox%
}
\long\def\rightbox#1{%
\setbox\figurebox\hbox{#1}\global\isleftboxfalse
\startmarginbox
\smash{\llap{\raise.7\baselineskip\box\figurebox\hskip\figboxmargin}}%
\endmarginbox%
}
\def\startmarginbox{%
\ifvmode\passpict\let\endmarginbox=\indent
\else\message{WARNING: marginbox in not in vmode}\hfilneg\ \passpict
\let\endmarginbox=\relax\fi
\figboxht=\dp\figurebox%         actual height
\advance\figboxht by 1.3\baselineskip% round up
\vskip.95\figboxht\penalty-300\vskip-.95\figboxht% filbreak trick
\divide\figboxht by\baselineskip%      how many lines?
\global\figboxlines=\figboxht
\global\figboxwid=\wd\figurebox
\global\advance\figboxwid by \figboxmargin
\global\advance\figboxwid by -\figboxhang
\setmypar\noindent}
\def\addlines#1{\global\advance\figboxlines by #1\myparshape}
\def\zerolines{\origpar\global\figboxlines=0\myparshape}
\def\passpict{\par\ifnum\figboxlines>1\vskip\figboxlines\baselineskip
\zerolines\fi}
\def\emptybox#1#2{\hbox to #1{\vbox to #2{\vss}\hss}}
\global\let\origpar=\@@par%     --- Latex's saved version of Tex's \par
\global\let\dopar=\origpar
\global\def\@@par{\dopar}
\def\setmypar{\global\let\dopar=\mypar
\global\prevgraf=0\myparshape}
\def\mypar{\origpar\global\advance\figboxlines by -\prevgraf%
\global\prevgraf=0\myparshape}
\def\myparshape{\relax%
\ifnum\figboxlines>1\theparshape \else
\global\hangindent=0pt\global\hangafter=1%  these sometimes get restored...
\global\let\dopar=\origpar\fi}
\def\theparshape{%
\ifisleftbox\global\hangindent=-\figboxwid 
\else\global\hangindent=\figboxwid \fi
\global\hangafter=-\figboxlines \global\advance\hangafter by 1%
}
\def\definefnum#1{%             --- ugly latex hacking to get the
%                                       right figure/table name 
\def\fnum@figure{Figure \ref{#1}}%
\def\fnum@table{Table \ref{#1}}%
\def\fnum@code{Algorithm \ref{#1}}%
}
\def\writepict#1{}
\def\writecapt#1{}
\def\journalpicts#1{%           global scale factor
\newwrite\pictfile
\newwrite\captfile
\openout\pictfile\jobname.pic
\openout\captfile\jobname.cap
\gdef\writepict##1{\unexpandedwrite\pictfile{\doit##1}}%
\gdef\writecapt##1{\unexpandedwrite\captfile{\doit##1}}%
\global\let\ENDdocument=\enddocument
\gdef\enddocument{\DOjournalpicts{#1}\ENDdocument}
}
\def\DOjournalpicts#1{{%
\def\writepict##1{}\closeout\pictfile
\def\writecapt##1{}\closeout\captfile
\@fileswfalse%                  --- turn off toc, lof, and so on.
\onecolumn
\def\globalscaling{#1}
\def\doit##1##2##3##4{%         --- print a picture on blank page
%                                       {box}{style}{label}{caption}
\figboxaux[t]{\hss##1\hss}{##2}{}{}%
\vspace*{1in}
\definefnum{##3}%               --- use Figure \ref{label} or Table \ref{label}
\captionandlabel{##2}{}{##4}
\clearpage}%
\input\jobname.pic
\def\doit##1##2##3{%            --- print caption
%                                       {style}{label}{caption}
\definefnum{##2}%               --- use Figure \ref{label} or Table \ref{label}
\captionandlabel{##1}{}{##3}}%
\raggedright\let\captionstyle=\raggedright
\def\@makecaption##1##2{##1: ##2\par}%  --- Be sure captions aren't centered
\input\jobname.cap
}}%                             --- to be followed by \enddocument
\def\llboxer#1{\vbox to \figboxht{\vfil\hbox to \figboxwid{#1\hfill}}}
\def\lcboxer#1{\vbox to \figboxht{\vfil\hbox to \figboxwid{\hfill#1\hfill}}}
\def\oldboxer#1{\vbox to \figboxht{\vfil
                      \hbox to \figboxwid{\hfill\llap{#1\hskip4.25in}\hfill}}}
\def\ulboxer#1{\vbox to \figboxht{\hbox to \figboxwid{#1\hfill}\vfil}}
\def\ccboxer#1{\vbox to \figboxht{\vfil
                        \hbox to \figboxwid{\hfill#1\hfill}\vfil}}
\gdef\removedimen#1pt{#1}}
\def\defscaled#1#2{#2=\DVIscaling#2%
\xdef#1{\expandafter\removedimen\the#2}}
\def\DVIspace{ }
\newdimen\hscalefactor
\newdimen\vscalefactor
\def\scale#1{\horizscale{#1}\vertscale{#1}}
\def\horizscale#1{\hscalefactor=#1\hscalefactor\figboxht=#1\figboxht}
\def\vertscale#1{\vscalefactor=#1\vscalefactor\figboxwid=#1\figboxwid}
\def\boxps{%
\@ifnextchar[{\boxpsaux}{\boxpsaux[\relax]}}
\def\boxpsaux[#1]#2#3#4#5{%
{\figboxwid#4\figboxht#5\hscalefactor=1pt\vscalefactor=1pt%
\scale{#3}%
\scale{\globalscaling}%
#1%
\defscaled\DVIhscale\hscalefactor
\defscaled\DVIvscale\vscalefactor
\boxer{\includegraphics{\figpsfilename\DVIspace}}}%
}
\newread\Epsffilein
\newif\ifEpsffileok
\newif\ifEpsfbbfound
\newdimen\pspoints
\def\boxeps{%
\@ifnextchar[{\boxepsaux}{\boxepsaux[\relax]}}
\def\boxepsaux[#1]#2#3{%
%
%   The first thing we need to do is to open the
%   PostScript file, if possible.
%
%\tracingall
\gdef\figpsfilename{\figuredirectory/#2.ps}
\openin\Epsffilein=\figuredirectory/#2.ps
\ifeof\Epsffilein 
\gdef\figpsfilename{\figuredirectory/#2.eps}
\openin\Epsffilein=\figuredirectory/#2.eps \fi
\ifeof\Epsffilein\message{I couldn't open \figuredirectory/#2.ps or \figpsfilename}%
\missingfigure{#2}
\else
%
%   Okay, we got it.  Now we scan lines until we find one
%   that doesn't start with a percentage sign.  We're
%   looking for the bounding box comment.
%
   {\Epsffileoktrue\Epsfbbfoundfalse
    \catcode`\%=11 \catcode`\\=11
    \catcode`\{=11 \catcode`\}=11
    \catcode`\$=11 \catcode`\^=11
    \catcode`\&=11 \catcode`\#=11
    \catcode`\~=11 \catcode`\_=11
    \loop
       \read\Epsffilein to \Epsffileline
       \ifeof\Epsffilein\Epsffileokfalse\else
%
%   Now we check to make sure the first character is a % sign,
%   and that the rest are `%BoundingBox:' if necessary.
%
          \expandafter\Epsfaux\Epsffileline . .\\%
       \fi
   \ifEpsffileok\repeat
   \ifEpsfbbfound
        \figboxht=\Epsfury\pspoints
        \advance\figboxht by-\Epsflly\pspoints
        \figboxwid=\Epsfurx\pspoints
        \advance\figboxwid by-\Epsfllx\pspoints
   \else
        \message{No bounding box comment in \figpsfilename }%
        \figboxwid=2in\figboxht=1in%
   \fi%
   \immediate\closein\Epsffilein
   \hscalefactor=1pt\vscalefactor=1pt%
   \scale{#3}%
   \scale{\globalscaling}%
   #1%
   \defscaled\DVIhscale\hscalefactor
   \defscaled\DVIvscale\vscalefactor
   \hscalefactor=-\Epsfllx\hscalefactor
   \hscalefactor=1.00375\hscalefactor%  --- 72.27/72 = TeXpoints/PSpoints
   \defscaled\DVIhoffset\hscalefactor
   \vscalefactor=-\Epsflly\vscalefactor
   \vscalefactor=1.00375\vscalefactor%  --- 72.27/72 = TeXpoints/PSpoints
   \defscaled\DVIvoffset\vscalefactor
   \llboxer{\includegraphics{\figpsfilename\DVIspace}} }%
\fi
}%
\global\let\Epsfpar=\par
\global\let\Epsfpercent=%\global\def\Epsfbblit{%BoundingBox:}}%
\long\def\Epsfaux#1#2 #3\\{\relax\ifx#1\Epsfpercent
   \def\testit{#2}\ifx\testit\Epsfbblit
      \Epsfsize #3 . . . .\\%
      \global\Epsffileokfalse
      \global\Epsfbbfoundtrue
   \fi\else\ifx#1\Epsfpar\else\global\Epsffileokfalse\fi\fi}%
\def\Epsfsize#1 #2 #3 #4 #5\\{\global\def\Epsfllx{#1}\global\def\Epsflly{#2}%
   \global\def\Epsfurx{#3}\global\def\Epsfury{#4}}%
\def\pic#1;#2;#3;#4\par{\picsc#1;#2;#3;1;#4\par}% file; wd; ht; caption 
\def\picsc#1;#2;#3;#4;#5\par{% file; wd; ht; scale; caption
\figbox[htb]{\boxeps{#1}{#4}%{#2in}{#3in}%
}{figure}{#1}{%
% (#1: #2x#3 sc #4) %                   Uncomment for debugging
#5}}
\def\mpic#1;#2;#3;#4\par{\mpicsc#1;#2;#3;1;#4\par}% file; wd; ht; caption 
\def\mpicsc#1;#2;#3;#4;#5\par{% file; wd; ht; scale; caption
\figbox[l]{\boxeps{#1}{#4}%{#2in}{#3in}%
}{figure}{#1}{%
% (#1: #2x#3 sc #4) %                   Uncomment for debugging
#5}}
\def\figboxmargin{0.2in}
\def\figboxmargin{0.2in}
\begin{document}

\begin{titlepage}

\def\thepage{}

% Author macros::begin %%%%%%%%%%%%%%%%%%%%%%%%%%%%%%%%%%%%%%%%%%%%%%%%
\title{
   Improved Upper Bounds on the Growth Constants of Polyominoes and
   Polycubes\footnote{
      Work on this paper by the two authors has been supported
      in part by Grant~575/15 from the Israel Science Foundation (ISF).
   }
}

\date{}

%\titlerunning{Improved Upper Bounds on the Growth constants of Polyominoes and
%   Polycubes}

%% Please provide for each author the \author and \affil macro, even when authors have the same affiliation, i.e. for each author there needs to be the  \author and \affil macros
\begin{comment}
\author[1]{Gill Barequet}
\author[1,2]{Mira Shalah}
\affil[1]{The Technion, Haifa, Israel \\
\texttt{\{barequet,mshalah\}@cs.technion.ac.il}}
\affil[2]{Stanford University, CA \\
\texttt{mira@cs.stanford.edu}}
\authorrunning{G. Barequet and M. Shalah}

\Copyright{Gill Barequet and Mira Shalah}%mandatory, please use full first names. LIPIcs license is "CC-BY";  http://creativecommons.org/licenses/by/3.0/

\subjclass{TBD}
%%%\subjclass{Dummy classification -- please refer to \url{http://www.acm.org/about/class/ccs98-html}}% mandatory: Please choose ACM 1998 classifications from http://www.acm.org/about/class/ccs98-html . E.g., cite as "F.1.1 Models of Computation".
\keywords{
   Polyominoes, polycubes, Klarner's constant,
   square lattice, cubical lattice.
}
% Author macros::end %%%%%%%%%%%%%%%%%%%%%%%%%%%%%%%%%%%%%%%%%%%%%%%%%

%Editor-only macros:: begin (do not touch as author)%%%%%%%%%%%%%%%%%%%%%%%%%%%%%%%%%%
%\EventEditors{John Q. Open and Joan R. Acces}
%\EventNoEds{2}
%\EventLongTitle{42nd Conference on Very Important Topics (CVIT 2016)}
%\EventShortTitle{CVIT 2016}
%\EventAcronym{CVIT}
%\EventYear{2016}
%\EventDate{December 24--27, 2016}
%\EventLocation{Little Whinging, United Kingdom}
%\EventLogo{}
%\SeriesVolume{42}
%\ArticleNo{23}
% Editor-only macros::end %%%%%%%%%%%%%%%%%%%%%%%%%%%%%%%%%%%%%%%%%%%%%%%
\end{comment}

\author{
   Gill Barequet\thanks{
      Center for Graphics and Geometric Computing,
      Computer Science Department,
      The Technion---Israel Institute of Technology,
      Haifa~3200003, Israel.
      E-mail: \texttt{{barequet}@cs.technion.ac.il}
   } \and
	Mira Shalah\thanks{
		Computer Science Department,
		Stanford University, CA.
		E-mail:
		\texttt{mira@cs.stanford.edu}
	}
}

\maketitle

% ----------------------------------------------------------------------------
% Abstract
% ========

\begin{abstract}
	A $d$-dimensional polycube is a facet-connected set of cells (cubes) on
	the $d$-dimensional cubical lattice~$\mathbb{Z}^d$.
	%2-dimensional polycubes are known as polyominoes.
	%The size of a polycube is the number of cubes it contains, and~
	Let~$A_d(n)$ denote the number of $d$-dimensional polycubes (distinct
	up to translations) with~$n$ cubes,
	%The growth constant~
	and~$\lambda_d$ denote the limit of the ratio~$A_d(n{+}1)/A_d(n)$
	as~$n \to \infty$.
	%Although it is known that~$\lambda_d$, the growth constant of~$A_d(n)$,
	% the sequence enumerating polycubes,
	%exists in any fixed dimension~$d$, its
	The exact value of~$\lambda_d$ is still unknown rigorously for any
	dimension~$d \geq 2$; the asymptotics of~$\lambda_d$, as~$d \to \infty$,
	also remained elusive as of today.
	%and setting rigorous lower and upper bounds on~$\lambda_d$
	%is an extremely challenging combinatorial problem.
	%While significant progress has been made
	In this paper, we revisit and extend the approach presented by Klarner
	and Rivest in~1973 to bound $A_2(n)$ from above.
	Our contributions are:
	\begin{itemize}
      \item Using available computing power, we prove
            that~$\lambda_2 \leq 4.5252$.  This is the first improvement of the
            upper bound on~$\lambda_2$ in almost half a century;
      \item We prove that~$\lambda_d \leq (2d-2)e+o(1)$ for any value
            of~$d \geq 2$, using a novel construction of a rational generating
            function which dominates that of the sequence~$\left(A_d(n)\right)$;
      \item For $d=3$, this provides a subtantial improvement of the upper bound
            on~$\lambda_3$ from~12.2071 to~9.8073;%~10.016;
      \item However, we implement an iterative process in three dimensions,
            which improves further the upper bound on~$\lambda_3$
            to~$9.3835$;
		%, while the value predicted for~$\lambda_3$
		%is about~8.3439.
	\end{itemize}
   %
   %along the years with improving the lower bound on~$\lambda_2$,
   %very little progress was made for the upper bounds.
   %The known rigorous upper bounds are quite far from the estimated values
   %of~$\lambda_d$, and the only known method for setting nontrivial upper
   %bounds was so far limited to two dimensions.
   %--- anything about $\lambda_d$? ---
\end{abstract}

\end{titlepage}

%-----comment-------%
%MATERIAL FOR JOURNAL
\begin{comment}
The popular computer game Tetris features polyominoes of size 4.
Polyominoes and polycubes play a fundamental role in the analysis of percolation processes and the collapse transition branched polymers undergo under heat.
polyominoes and polycubes are ...
Some bound are rigorously known,
and better bounds are provided without proof in the physics literature.
There is a rich amount of work on giving a lower bound for lambda.
On the other hand, only few attempts for improving the upper bound in two dimensions exist, and a straightforward generalization (see .. and sec..) applies to any dimension~$d$, yielding ..
In this work, we give a nontrivial generalization of .. to 3 and higher dimensions.
This improves the upper bound to any~$d$.
\end{comment}
% ----------------- end comment ----------------

% ----------------------------------------------------------------------------
% Introduction
% ============

\section{Introduction}

\figbox[l]{
	\includegraphics[scale=0.4]{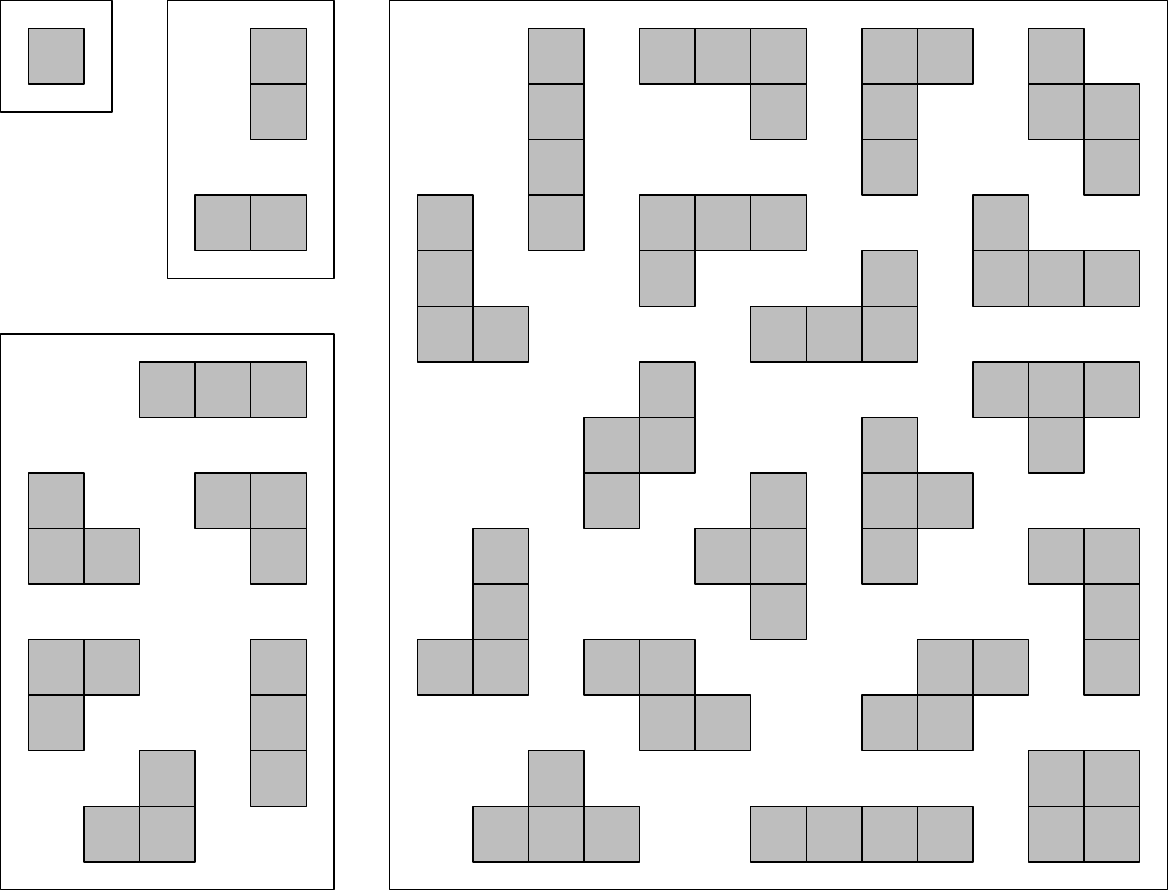}
}{figure}{fig:polyominoes}{Polyominoes of sizes $1 \leq n \leq 4$}
\emph{Polyominoes} are edge-connected sets of squares on the square lattice.
The size of a polyomino is the number of squares it contains.
All polyominoes of size up to~4 are shown in Figure~\ref{fig:polyominoes}.
%%% \begin{figure}
%%% 	\centering
%%% 	\includegraphics[scale=0.6]{fig-1}
%%% 	\caption{Polyominoes of sizes $1 \leq n \leq 4$}
%%% 	\label{fig:polyominoes}
%%% \end{figure}
%-----comment-------%
%MATERIAL FOR JOURNAL
\begin{comment}
Polyominoes are one of the most popular subjects in mathematical
recreations, and counting polyominoes by size is a fascinating
combinatorial problem.
\end{comment}
% ----------------- end comment ----------------
Likewise, \emph{polycubes} are facet-connected sets of
$d$-dimensional unit cubes, where connectivity is through
$(d{-}1)$-dimensional faces.
% ------------- begin comment --------------
% JOURNAL
\begin{comment}
To understand this term, consider, for example,
the dual graph of the hypercubic lattice.
The dual entities of a cube and an adjacency relation of two cubes are a
vertex and an edge (bond) connecting two vertices, respectively.
\end{comment}
% ------------- end comment --------------
%An animal is a connected subgraph of the dual graph of the lattice.
%When considering lattice animals simply as connected subsets of cells on
% a lattice, there are infinitely-many lattice animals of any size.
Two \emph{fixed} polycubes are considered identical
if one can be obtained by a translation of the other.
In this paper, we consider only fixed polycubes.
%and so we omit this adjective in the sequel.
Polyominoes and polycubes are specific types of \emph{lattice animals}, the
term used in the statistical physics literature to refer to
connected sets of cells on \textit{any} lattice.
%Consider the dual graph of the square or hypercubic lattice.
%The dual entities of a square or cube and an adjacency relation of two cubes are a cell and an edge (bond) connecting two cells, respectively.
%An ``animal'' is a connected subgraph (cluster) of cells.

% Polycubes have been used as mathematical models to study a broad range of important applications in chemistry, physics, and biology. % studied by mathematicians and natural scientists for many years and are the subject of an extensive literature.
%Polyominoes .
The fundamental combinatorial problem concerning polycubes is
``How many polycubes with~$n$ cubes are there?''
This problem originated in parallel in the theory of
\emph{percolation}~\cite{GSR76,SG76}, in the analysis of chemical
graphs~\cite{DH83,FGSW94,LI79,MSWMSFG90},
and in the graph-theoretic treatment of cell-growth problems~\cite{Kl67}
more than half a century ago.
% ------------- begin comment --------------
\begin{comment}
Temperley~\cite{Te56} investigated the mechanics of macro-molecules,
and Broadbent and Hammersley~\cite{BH57} studied percolation processes.
At about the same time, Harary~\cite{Ha60} composed a list of unsolved
problems in the enumeration of graphs,
and Eden~\cite{Ed61} analyzed cell growth processes.
Thus, aside from being a pure mathematical problem, enumeration of lattice animals has a broad range of applications to many other fundamental problems.
\end{comment}
% original ref here was {Ha67}
%cryptography~\cite{SE05}.
% ------------- end comment --------------
% As a result, the problem
%and has been studied by mathematicians and natural scientists.
However, despite much research in those areas,
most of what is known relies primarily on heuristics and empirical studies,
%by both mathematicians and natural scientists, %over the last five decades,
and very little is known rigorously, %about this problem,
even for the low-dimensional lattices in~$2 \leq d \leq 4$ dimensions.
%since then in the statistical physics and mathematics communities.
%However,
% and despite serious efforts over the last five decades in both the statistical physics and mathematics communities,
% it remains far from solved.
% Even
The simplest instance of counting polyominoes
%not much is known,
%and even much less is known for higher dimensions.
%Counting polyominoes
is considered
one of the fundamental open problems in combinatorial geometry~\cite{opp}.

Let~$A_d(n)$
(sequence~A001168 in
the On-line Encyclopedia of Integer Sequences~\cite{oeis})
denote the number of polycubes of size~$n$.
%Thus, for example, $A_2(1){=}1$, $A_2(2){=}2$, $A_2(3){=}6$, and
%$A_2(4){=}19$.
Since no analytic formula for~$A_d(n)$ is known for any dimension $d{>}1$,
many researchers have focused on efficient algorithms
for \emph{counting} polycubes by size, primarily the square lattice.
These methods are based on either explicitly enumerating all polycubes
(e.g., by an efficient back-tracking algorithm~\cite{ML92,Re81}), or on
implicit enumeration (e.g., by a transfer-matrix algorithm~\cite{Co95,Je03}).
The sequence~$A_2(n)$ has been determined so far up to $n=56$~\cite{Je03}.
Enumerating polycubes in higher dimensions is an even more elusive problem.
% ------------- begin comment --------------
\begin{comment}
Lunnon~\cite{Lu72} manually counted 3-dimensional polycubes (up to size~6)
by considering symmetry groups.
In a subsequent work, Lunnon~\cite{Lu75} computed the number of small-sized
polycubes in up to~6 dimensions.
\end{comment}
% ------------- end comment --------------
Most notably,
Aleksandrowicz and Barequet~\cite{AB09a,AB09b} extended polycube counting by efficiently generalizing Redelmeier's algorithm~\cite{Re81} to higher
dimensions.
The statistical-physics literature provides extensive enumeration data of polycubes~\cite{GSR76,Ma90,MSWMSFG90,Ga80}, the most comprehensive being
by Luther and Mertens~\cite{LM11}, in particular, listing~$A_3(n)$ up
to~$n{=}19$.

One key fact that holds in all dimensions was discovered in~1967 by
Klarner~\cite{Kl67}, showing that the limit
$\lambda_2 := \lim_{n \to \infty} \sqrt[n]{A_2(n)}$ exists.
This is a straightforward consequence of the fact that the
sequence~$(\log A_2(n))$ is supper-additive,
i.e.,~$A_2(n) A_2(m) \leq A_2(n+m)$.
% -------------- begin comment ------------
\begin{comment}
The proof uses a standard concatenation argument,
in which pairs of polyominoes of size~$n$ are uniquely concatenated to
produce a subset of all polyominoes of size~$2n$.
\end{comment}
% -------------- end comment ------------
Since then, $\lambda_2$ has been called ``Klarner's constant.''
Only in~1999, Madras~\cite{Ma99} proved
%, using a novel pattern-frequency argument,
%a stronger statement about
the existence of the
asymptotic growth rate, namely, $\lim_{n \to \infty} A_2(n+1) / A_2(n)$,
which clearly equals~$\lambda_2$.
Klerner's and Madras's results hold, in fact, in any dimension.

A great deal of attention has been given to estimating the values
of~$\lambda_d$, especially for~$d{=}2,3$.
Their exact values are not known and
have remained elusive for many years.
% Elementary considerations show that $\lambda_d \in (d, 2^{2d})$
Based on interpolation methods, applied to the known values of the
sequences~$(A_2(n))$ and~$(A_3(n))$,
it is estimated (without a rigorous proof),
that~$\lambda_2 \approx 4.06$~\cite{Je03}
and~$\lambda_3 \approx 8.34$~\cite{Gu09}.
%Setting good bounds on these constants is a hard and challenging problem in
%enumerative combinatorics.
%In particular,
There have been several attempts to bound~$\lambda_2$ from
below, with significant progress over the
years~\cite{BMRR06,BRS16,Ed61,Kl65,Kl67,RW81,Re62},
but almost nothing is known for higher dimensions.
For~$d=2$, it has been proven that $\lambda_2 \geq 4.0025$~\cite{BRS16}.
For~$d>2$, the only known way to set a lower bound on~$\lambda_d$ is by
using the fact~\cite{Kl67} that
$\lambda_d = \lim_{n\to \infty}\sqrt[n]{A_d(n)} =
   \sup_{n \ge 1} \sqrt[n]{d A_d(n)}$.
In particular, for $d{=}3$, the
value~$A_3(19) {=} 651,459,315,795,897$ yields the lower
bound~$\lambda_3 \ge \sqrt[19]{3 A_3(19)} \approx 6.3795$,
which is quite far from the best estimate of~$\lambda_3$ mentioned above.

On the other hand, only one procedure (Eden~\cite{Ed61}) is known
for bounding~$\lambda_d$
from above.\footnote{Another method for bounding~$\lambda_2$ from above
   was presented elsewhere~\cite{BB15}, but G.~Rote (personal
   communication) discovered an error in the computations thereof.
   Fixing this error raised the obtained upper bound above the known
   bound~\cite{KR73}.
}
\begin{comment}
Eden~\cite{Ed61} was the first to give an upper bound on~$\lambda_2$
by showing that~$A_2(n){\leq}\binom{3n}{n-1}$, implying that
$
   \lambda_2 \leq \lim_{n \to \infty} \sqrt[n]{\binom{3n}{n-1}} = 6.75.
$
Barequet et al.~\cite{BBR10} generalized Eden's method to show that
\[
   \lambda_d \leq \lim_{n \to \infty} \sqrt[n]{\binom{(2d-1)n}{n-1}}.
\]
Since $\lim_{n \to \infty} \sqrt[n]{\binom{(2d-1)n}{n-1}} \leq \frac{(2d-1)^{2d-1}}{(2d-2)^{2d-2}}$, ...
\end{comment}
This procedure (explained in detail in the next section)
shows that~$\lambda_2 \leq 6.7500$, $\lambda_3 \leq 12.2071$ and
that~$\lambda_d \leq (2d-1)e$.
It was shown in~\cite{BBR10}
that~$\lambda_d \sim 2ed-o(d)$ as~$d$ tends to infinity,
%based on a pattern of so-called \emph{diagonal formulae} (see reference~\cite{BS17}),
and conjectured (based on an unproven assumption)
% commonly used in the literature of statistical physics),
that~$\lambda_d$ is asymptotically equal
to~$(2d{-}3)e{+}O(1/d)$.

%In~$d>2$ dimensions,
%no procedure has been yet known for improving upon this upper bound.
As we detail in the next section, Klarner and Rivest~\cite{KR73} enhanced
Eden's method by using a more sophisticated system of ``twigs,'' proving
that~$\lambda_2 \leq 4.6496$.
In this paper, we extend this enhancement to higher dimensions, and show that
it results in the two-variable rational generating function
\[
   g^{(d)}(x,y) = \sum_{n,m=0}^{\infty} l_d(n,m)x^ny^m =
      \sum_{n=0}^{\infty} x y^n \left((1+x)^{2(d-1)} + x^2\right)^n =
      \frac{x}{1-y\left((1+x)^{2(d-1)}+x^2\right)},
\]
whose diagonal function~$\sum_{n=0}^{\infty} l_d(n,n) x^n y^n$
generates a sequence which dominates the sequence $(A_d(n))$.
Using elementary calculus, Klarner and Rivest~\cite{KR73} proved that
for~$d=2$,  $l_2(n,n) \leq 4.8285^n$.
Similarly, we prove that for~$d=3$, we have that~$l_3(n,n) \leq 9.8073^n$,
giving the first nontrivial upper bound on~$\lambda_3$.
Finally, we prove that~$\lambda_d \leq (2d-2)e+1/(2d-2)$,
by proving that $l_d(n,n) \leq \left((2d-2)e+1/(2d-2)\right)^n$.
To the best of our knowledge, this is the first generalization of Klarner and Rivest's method to higher dimensions, and
the first algorithmic approach for improving the upper bound
on~$\lambda_d$ for any value of~$d > 2$.
An important result of this enhancement for dimensions~$d \geq 3$ is an
improved upper bound on the space complexity required to encode polycubes,
namely, a polycube of size~$n$ can be encoded with~$O(2d-2)n)$ bits.

Lastly, we revisit the computer-assisted approach that Klarner and Rivest~\cite{KR73} used to further improve the upper bound on~$\lambda_2$ to $4.6495$.
We are not aware of any published attempt to reproduce their result.
%let alone carry out this computation beyond~$i=10$.
With the computing resources currently available to us, we
%computed~$\sigma_{i}$ up to~$i = 21$, and
improve the upper bound
on~$\lambda_2$ to~4.5252.
We also extend the approach to $d=3$, and prove
%TODO 2
that~$\lambda_3 \leq 9.3835$.

% -------------- begin comment ------------
\begin{comment}
This is a remarkable improvement of the difference between the known upper
bound on~$\lambda_3$ to the estimated bound (8.344) from~3.863 % 12.207-8.344
to~0.534. % 8.878-8.344

[[[REWRITE FOR THE JOURNAL VERSION]]]
The paper is organized as follows.
In the second part of the chapter[?] we describe our effort to improve the upper bound on~$\lambda_2$,
using Klarner and Rivest's approach~\cite{KR73}.
To achieve these aims we exploited to the maximum possible computer resources which were available to us.
\end{comment}
% -------------- end comment ------------

% ----------------------------------------------------------------------------
% Previous Works
% ==============

\section{Previous Works by Eden, and Klarner and Rivest}
\label{sec:prev}
\figbox[l]{
	{{\input{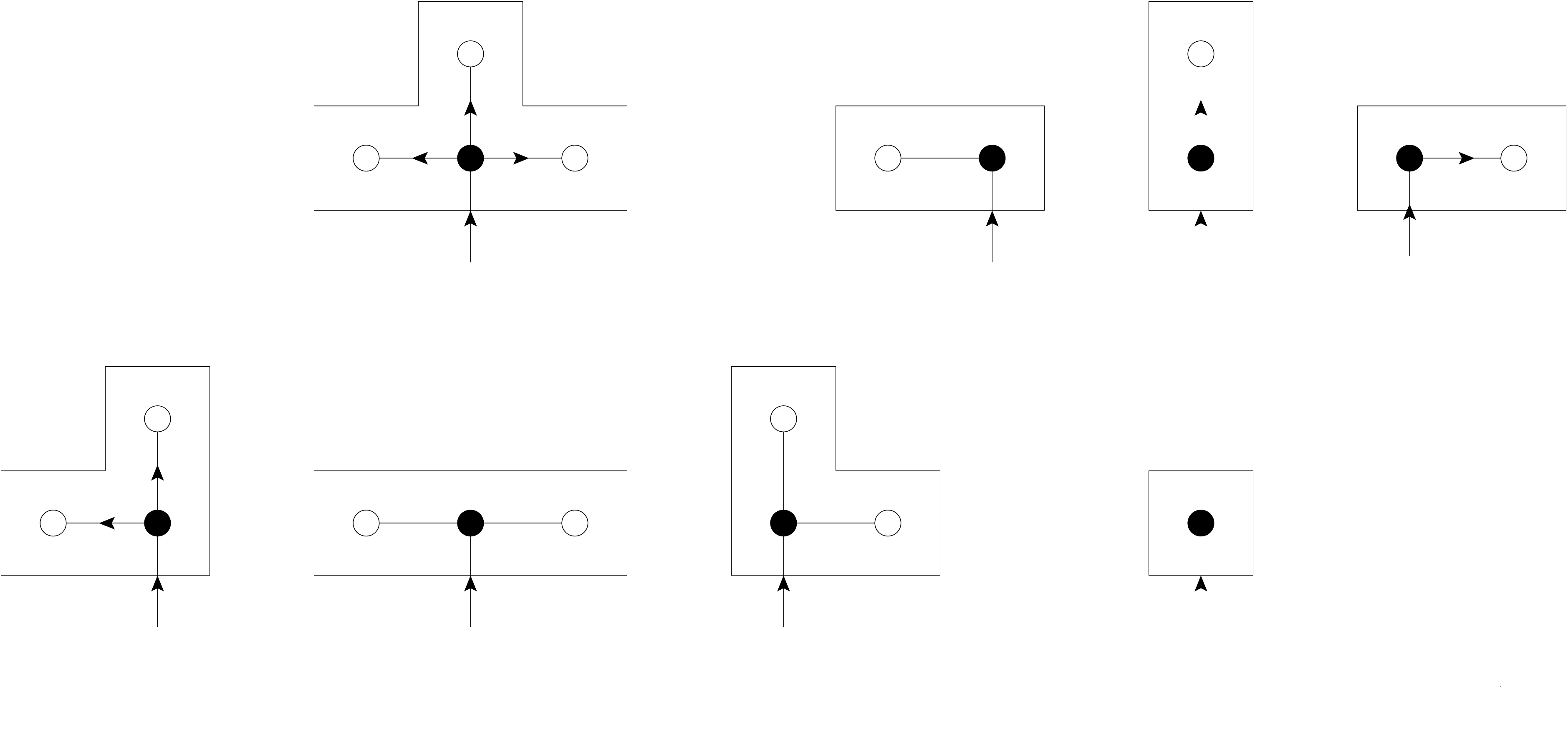_t}}}
}{figure}{fig:eden-twigs}{Eden's set of twigs~\protect{\cite[Figure~3]{KR73}}}
For two $d$-dimensional cubes with centers~$c_1=(x_1,\dots,x_d)$
and~$c_2=(y_1,\dots,y_d)$, we say that~$c_1$ is
\emph{lexicographically smaller} than~$c_2$ if~$x_i < y_i$ for the first
value of~$i$ where they differ.
Let~$P$ be an $n$-cell polycube in~$d$ dimensions.
$P$ can be uniquely encoded with a binary string~$W_P$ of
length $(2d{-}1)n{-}1$~\cite{BBR10,Ed61}, as follows.
Let~$G$ be the cell-adjacency graph of~$P$.  (The vertices of~$G$ are the
centers of the cubes in~$P$, and two vertices are connected by an edge if
their corresponding cubes are adjacent.)
Perform a breadth-first search on~$G$,
starting at cell~1 (the lexicographically smallest cell of~$P$).
%where the lexicographic order is defined by the coordinates of the centers of the
%cubes).
In the course of this procedure, every cell~$c \in P$ is reached through an
incoming edge~$e$ since~$P$ is connected.
(An imaginary edge incoming into cell~1 is fixed so as
to supposedly originate from a cell that cannot belong to~$P$.)
Clearly, the cell~$c$ is connected by edges of~$G$ to at most~$2d{-}1$ additional neighboring cells.
The procedure now traverses all these outgoing edges according to a
fixed order determined by their orientations relative to~$e$.
(For example, for polyominoes, the outgoing edges are traversed
according to their clockwise order relative to~$e$.)
Then, if such an edge leads to a cell of~$P$ which has not been labeled yet,
this cell is assigned the next unused number, and we
update~$W_P := W_P \cdot 1$, where ``$\cdot$'' is the concatenation
operator.
Otherwise, if the cell does not belong to~$P$, or it is already assigned a
number, we set~$W_P := W_P \cdot 0$.
Since each cell can be assigned a number only once,
this procedure maps polycubes in a one-to-one manner into binary sequences
with $n{-}1$ ones and $(2d{-}2)n$ zeros.
Hence, using Stirling's formula,
\begin{equation}
   A_d(n) \leq \binom{(2d-1)(n-1)}{n-1} \leq
      \left(\frac{(2d-1)^{2d-1}}{(2d-2)^{2d-2}}\right)^n.
   \label{upper_bound}
\end{equation}

% -----------------------------
\begin{comment}
This construction was presented by Eden~\cite{Ed61} for polyominoes, and generalized by Barequet et al.~\cite{BBR10} to $d$-dimensional polycubes.
In fact, this idea can be applied to any lattice in which every cell has
up to some constant number of neighbors.
\end{comment}
% -----------------------------
For polyominoes, this procedure is equivalent to assigning an element
of~$\E=\{e_1,\dots,e_8\}$ (Figure~\ref{fig:eden-twigs}) to each square
of~$P$ (in the same order).
%%% \begin{figure}
%%%	    \centering
%%%	    \scalebox{0.24}{\input{twigs.pdf_t}}
%%%    \caption{Eden's set of twigs~\cite[Figure~3]{KR73}}
%%%    \label{fig:eden-twigs}
%%% \end{figure}
%Eden's method~\cite{Ed61} is based on the observation
%polyomino~$P$ can be associated with a unique spanning tree,~$T_P$,
% embedded in the dual graph of the square lattice. $T_P$ can then be interpreted as a sequence of ``twigs''.
%a collection of small subtrees.
%The mapping from~$P$ to a sequence of twigs is done by assigning an element
%of~$\E$ to each square of~$P$, such that all squares are included and cycles
%are excluded.
%See Figure~\ref{fig:spanning-tree-L}(a) for an example. %of a spanning tree
%created by this process for a polyomino of size~10.
%It can be shown that different polyominoes are associated with
%different sequences of elements of~$\E$.
%The number of trees spanning polyominoes (and, hence, the number of
%polyominoes) is thus bounded from above by the number of such sequences (some of which cannot be realized by polyominoes).
% ---------------------------
\begin{comment}
This process can be viewed as a mapping~$f(P){=}W_P$,
which maps a polyomino~$P$ of size~$n$
to a binary string~$W_P$ of length $3n{-}1$~\cite{Ed61}.
\end{comment}
% ---------------------------

%Then, we take the $n$th root of both sides, let~$n$ tend to infinity, and
%obtain an upper bound on~$\lambda_d$.
In three dimensions, one obtains that~$\lambda_3 \leq 5^5/4^4 \leq 12.2071$.
In general, since
$\frac{(2d-1)^{2d-1}}{(2d-2)^{2d-2}} =
    (2d-1)\left(1+\frac{1}{2d-2}\right)^{2d-2}<(2d{-}1)e$,
it follows that~$\lambda_d \leq (2d{-}1)e$.
(In fact, a more thorough analysis of the last relation shows
that~$\lambda_d \leq (2d{-}1.5)e$.)  %[[[[[SHOW HOW!!]]]]]
(The latter value is also known as the ``Bethe approximation''
of~$\lambda_d$; see Gaunt et al.~\cite[p.~1904, Eq.~3.9]{GSR76}, and
Gaunt and Peard~\cite[p.~7523, Eq.~4.9]{GP00}.)

%\subsection{The set of twigs $\LC$}
\figbox[l]{
	\begin{tabular}{cccc}
      \scalebox{0.35}{\input{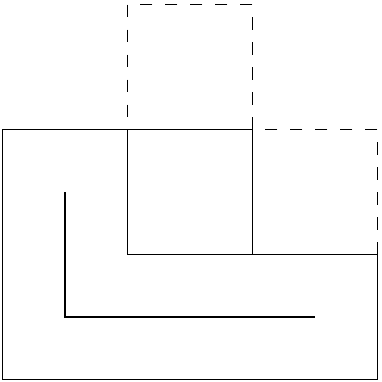_t}} &
         \scalebox{0.35}{\input{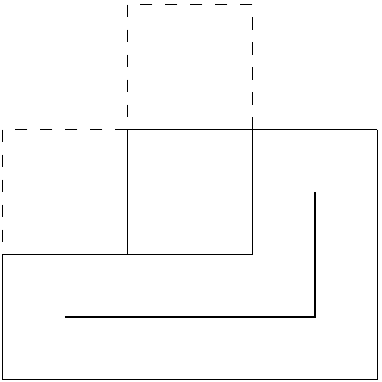_t}} &
         \scalebox{0.35}{\input{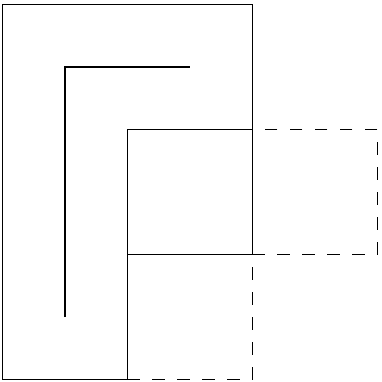_t}} &
         \scalebox{0.35}{\input{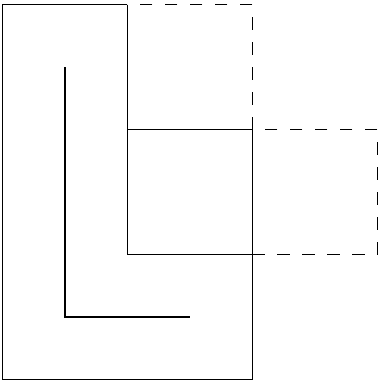_t}} \\
      (a) & (b) & (c) & (d) \medskip \\
      \scalebox{0.35}{\input{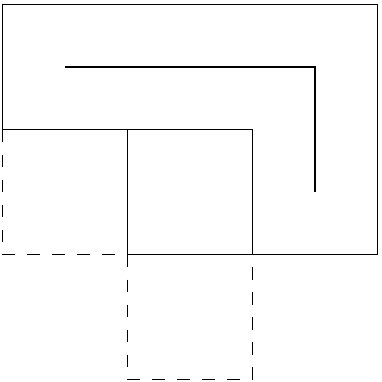_t}} &
         \scalebox{0.35}{\input{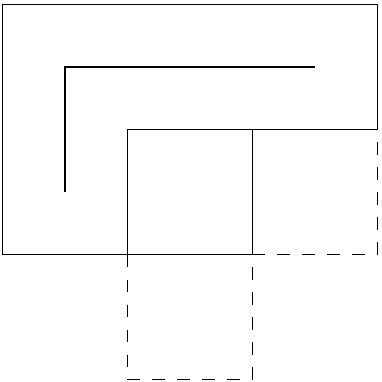_t}} &
         \scalebox{0.35}{\input{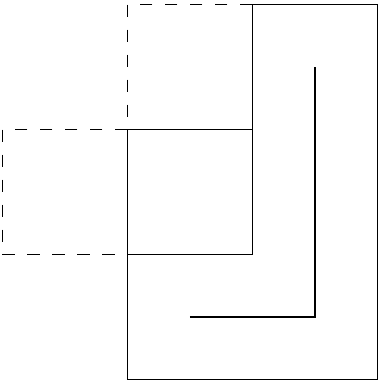_t}} &
         \scalebox{0.35}{\input{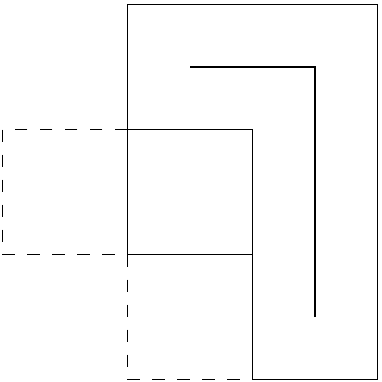_t}} \\
      (e) & (f) & (g) & (h)
	\end{tabular}
}{figure}{fig:Lcontexts}{$\LC$-contexts
                         of~\protect{\cite[Figure~6]{KR73}}}
Now refer to Figure~\ref{fig:Lcontexts}.
%%% \begin{figure}
%%%    \centering
%%%    \begin{tabular}{cccc}
%%%       \scalebox{0.5}{\input{LC1.pdf_t}} &
%%%          \scalebox{0.5}{\input{LC2.pdf_t}} &
%%%          \scalebox{0.5}{\input{LC3.pdf_t}} &
%%%          \scalebox{0.5}{\input{LC4.pdf_t}} \\
%%%       (a) & (b) & (c) & (d) \medskip \\
%%%       \scalebox{0.5}{\input{LC5.pdf_t}} &
%%%          \scalebox{0.5}{\input{LC6.pdf_t}} &
%%%          \scalebox{0.5}{\input{LC7.pdf_t}} &
%%%          \scalebox{0.5}{\input{LC1.pdf_t}} \\
%%%       (e) & (f) & (g) & (h)
%%% 	\end{tabular}
%%% 	\caption{The eight $\LC$-contexts of cell~$u$~\cite[Figure~6]{KR73}}
%%% 	\label{fig:Lcontexts}
%%% \end{figure}
Around any square~$u$ on the square lattice, there are eight $\LC$-shaped
4-sets of squares, called the ``$\LC$-contexts'' of~$u$~\cite{KR73}.
Let us use the term \emph{status} of a cell to refer to whether or not the
cell belongs to a given polyomino.
In their beautiful paper~\cite{KR73},
Klarner and Rivest designed a set of \emph{twigs}~$\LC$
(see Figure~\ref{fig:Ltwigs}),
%%% \begin{figure}
%%%    \centering
%%%    \scalebox{0.6}{\input{Ltwigs.pdf_t}}
%%%    \caption{The set of twigs $\LC$~\cite[Figure~7]{KR73}}
%%%    \label{fig:Ltwigs}
%%% \end{figure}
which is more compact than~$\E$.
They showed that similarly to $e_1,\dots,e_8$, %(in Figure~\ref{fig:eden-twigs}),
$L_1,\dots,L_5$ %(in Figure~\ref{fig:Ltwigs})
can serve as building
blocks for polyominoes:
Every $n$-cell polyomino~$P$ corresponds to a unique $n$-term sequence of
elements of~$\LC$,
however, not every such sequence represents a polyomino, immediately
implying that
$\lambda_2 < |L|=5$, which is already a substantial improvement over
%Eden's bound of
6.75.
\figbox[l]{
   \scalebox{0.35}{\input{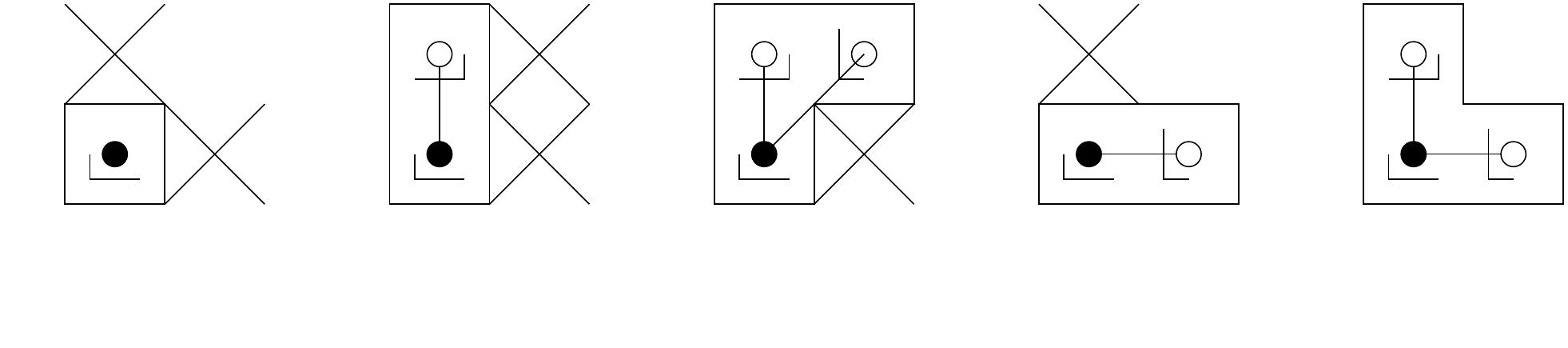_t}}
}{figure}{fig:Ltwigs}{Twig set $\LC$~\protect{\cite[Figure~7]{KR73}}}
The key idea behind the design of~$\LC$ is that one can perform a
breadth-first search on the cell-adjacency graph of a polyomino, such that
at each step, each cell can be assigned one of the eight $\LC$-contexts
shown in Figure~\ref{fig:Lcontexts}, such that
the statuses of \emph{all} cells in this $\LC$-context are already encoded.
Therefore, while~$\E$ %Eden's encoding
encodes all~$2^3 = 8$ possible status configurations of \emph{three} neighbors
at every step of the traversal,
%Klarner and Rivest's twigs
$\LC$ encodes the statuses of just \emph{two} neighbors at each step.
% Let us detail the set~$\LC$.
%(and thus $|\LC| < |\E|$).
The sequence~$\LC_p$, encoding a polyomino~$P$, can be constructed
algorithmically as follows.
Assume, w.l.o.g., that in the lexicographic order defined over the cells
of~$P$, the cells are ordered first according to their $y$-coordinate, and
secondly according to their $x$-coordinate.
Maintain a queue (initially empty) of \emph{white} (or \emph{open})
cells.
White cells are cells that have not yet been visited by the algorithm.
Start from the lexicographically-smallest cell of~$P$, and put it in the queue.
The $\LC$-context of this cell is the one shown in Figure~\ref{fig:Lcontexts}(a), since, by definition, the cells in this $\LC$-context of the cell
do not to belong to~$P$.
The addition of twigs to the configuration~$T$ constructed so far
proceeds as follows until the queue is empty.
%no living cells are left.
% addition of twigs
%to the configuration~$T$ constructed so far
%no living cells are left.
Let~$u$ be the oldest cell in the queue.
Remove~$u$ from the queue.
Let~$a,b$ denote the
cells connected to~$u$ (as shown in Figure~\ref{fig:Lcontexts}), $c$~($\neq u$) denote the cell connected to~$a,b$,
and~$\ell$ denote the last label assigned to a cell of~$P$
(initially~$\ell=0$).
Refer to Figure~\ref{fig:Ltwigs}.
The twig~$L$ assigned to~$u$ is~$L_1$ if~$a,b,c \notin P$; % (this is indicated with an~$\mathsf{X}$);
%are \emph{forbidden} and they will never become cells of the subpolyomino of
%which the twig is a root.);
$L_2$ if $b,c \notin P$ and~$a \in P$;
$L_3$ if $b \notin P$
and~$a,c \in P$;
$L_4$ if~$a \notin P$ and $b \in P$; or $L_5$
if~$a,b \in P$.
%TODO
A new configuration $T*L$ is then constructed as follows:
\begin{enumerate}
	\itemsep0em
	\item The root cell of~$L$ (shown in black in Figure~\ref{fig:Ltwigs}) is
	placed over~$u$, such that the
	orientation ($\LC$-context) of~$L$ and~$u$ coincide (this may
	require a reflection and/or rotation of~$L$).
	\item The white cell, where~$L$ has been added, is turned black (\emph{dead}).	
	%The addition is legal if no other cells of~$L$ overlap with any
	%part of~$T$ and no cell of~$L$ occupies a forbidden cell.
	\item The cells of~$L$ marked with an~$\mathsf{X}$ (called \emph{forbidden})
	become forbidden in $T$, namely, they will never become cells of $T$.
	\item The white cells of~$L$ are added (in their indicated order) to the
	queue.
	Note that each white cell has an assigned $\LC$-context (indicated by the
    shape~$\LC$), and that the statuses of all cells in this~$\LC$-context are
    known (and thus, already encoded).
	%of living cells of~$T$.
	\item The white cell(s) of~$L$ are assigned the label~$\ell{+}1$
	(or~$\ell{+}1$ and~$\ell{+}2$, in order).
	%\item The oldest cell is removed from the list of living cells.
\end{enumerate}

%, and different polyominoes are assigned different sequences.
%The cells of a twig are divided into two types,
%namely ``dead'' (closed) and ``living'' (open),
%colored black and white, respectively.
% Every twig contains at least one dead cell,
% but not necessarily any living cells.
%Each element of~$\LC$ is composed of a root cell (colored black) with a specified $\LC$-context, which is also marked with asterisks,
%and a (possibly empty) set of open cells (colored white),
%which is ordered linearly (see~$L_3$ and~$L_5$).
Note that when $a\in P$ and $b\notin P$,
%for the construction to work,
it is necessary to encode %information (empty or occupied)
whether or not~$c$ is in~$P$ (twigs~$L_3$ and~$L_2$, resp.), so that
when~$a$ is inserted to the queue,
the statuses of all cells in its
indicated~$\LC$-context are encoded.
%Hence, we need two twigs to encode the configurations:~$a$ is occupied $b$ is not.
Note also that the linear order of the white cells in~$L_3$ and~$L_5$ is necessary % in order
to ensure the uniqueness of the construction.
%, as will be explained below.
%Consider the $\LC$-context of the root cell of all the twigs
%$L_1, \dots, L_5$, the single open cells of~$L_2$ and~$L_4$, and the first
%open cells of~$L_3$ and~$L_5$.  They all have the property that the status
%of the cells in their indicated $\LC$-context is known.
For the second white cell in either~$L_3$ or~$L_5$,
the statuses of all cells in its indicated $\LC$-context are known only
after assigning a twig to the first open cell.
% Moreover, their mapping from a polyomino to a spanning tree ensures that this holds throughout the construction for all the cells of the polyomino, not just the smallest cell.
% TODO 3
%\label{subsec:map}
An example of this process is shown in
Fig.~\ref{fig:spanning-tree-L}(b):
Let~$u_i$ be the cell visited at step~$i$,
and~$a_i,b_i$ be the cells connected to~$u_i$
(Figure~\ref{fig:Lcontexts}).
The cell~$u_1$
is assigned the label~1 and the twig~$L_5$ since~$a_1,b_1 \in P$.
In order to assign a twig~$L_i \in \LC$ to~$u_2$,
$L_i$ is rotated by~$90\degree$ and reflected around the~$x$ axis.
Note that a cell can be discovered and labeled only once.
Therefore, $u_5$ is assigned the twig~$L_1$ (not~$L_4$) since the cell on
its right was already discovered at the fourth iteration
and assigned the label~8.
\figbox[r]{
   %\begin{tabular}{ccc}
    %  \scalebox{0.4}{
    %  \parbox{.45\linewidth}{
      %   \vspace{0.2in}
       %      \includegraphics[scale=0.5]{spanning_tree.pdf}
       %     }
      %}       & &
      \scalebox{0.7}{
      \parbox{.6\linewidth}{
          \includegraphics[scale=0.5]{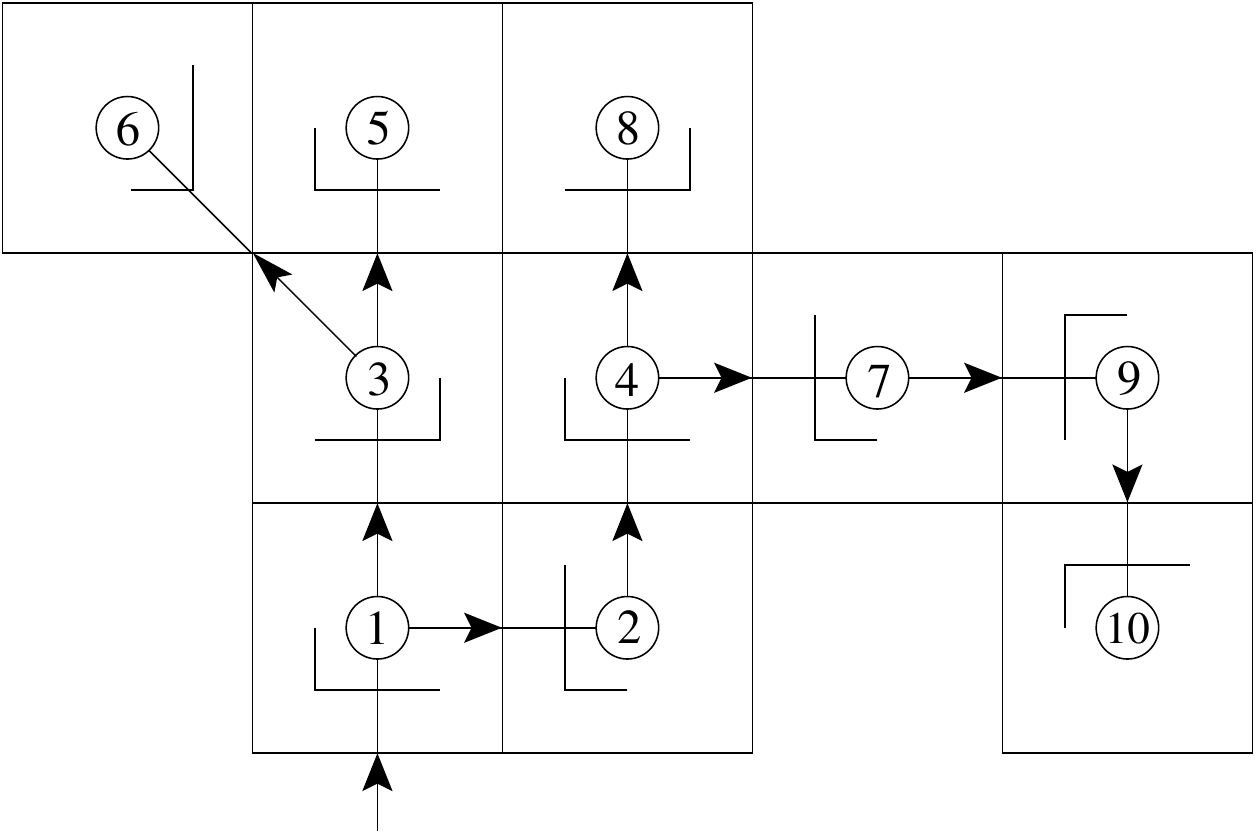}}
     }     %\\
      %(a) & & \hspace{-29mm} (b)
  % \end{tabular}
}{figure}{fig:spanning-tree-L}
{
      %\scriptsize
      %(a)~\protect{\cite[Figure~4]{KR73}}~A spanning tree generated by Eden's method;
      %(b)~
      \protect{\cite[Figure~8]{KR73}} $L_P=(L_5,L_4,L_3,L_5,L_1,L_1,L_2,L_1,L_4,L_1)$.
}

\begin{observation}
	\vspace{-8mm}
	\label{ob:diff}
	Two different polyominoes are encoded with different sequences of twigs.
\end{observation}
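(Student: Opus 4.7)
The plan is to prove injectivity of the map $P \mapsto L_P$ by exhibiting a deterministic decoding procedure that recovers $P$ from the twig sequence alone. Since the encoding algorithm is itself deterministic given $P$, a well-defined inverse suffices: if distinct polyominoes produced the same sequence, applying the decoder would yield both of them.

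I would proceed by induction on the number of twigs processed. The decoder maintains a partial configuration consisting of (i) a set of \emph{black} cells already placed in $P$, (ii) a set of \emph{forbidden} cells known to be outside $P$, and (iii) a queue of \emph{white} cells, each stamped with an assigned $\LC$-context (an orientation specifying which neighbors play the roles $a$, $b$, $c$). The base case places the first cell at a canonical origin with the $\LC$-context of Figure~\ref{fig:Lcontexts}(a), all four starred cells being forbidden; this mirrors the choice of the lexicographically smallest cell in the encoder. For the inductive step, given the partial state and the next twig $L_i$, one removes the oldest white cell $u$ from the queue, reads off its assigned $\LC$-context to identify $a$, $b$, $c$, and uses the twig's definition to fix their statuses: $L_1$ forces $a,b,c\notin P$; $L_2$ forces $a\in P$ and $b,c\notin P$; $L_3$ forces $a,c\in P$ and $b\notin P$; $L_4$ forces $a\notin P$ and $b\in P$; $L_5$ forces $a,b\in P$. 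The cell $u$ is marked black, the forbidden cells of $L_i$ are marked forbidden, and the white cells of $L_i$ are enqueued in the prescribed order, each carrying the $\LC$-context determined by the placement of $L_i$ over $u$.

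To conclude, one observes that after processing all $n$ twigs the queue is empty and the set of black cells coincides with $P$ (by fidelity to the encoder). Since each step of the decoder is forced---the cell $u$ is determined by queue order, its $\LC$-context was fixed when it was enqueued, and the twig determines the statuses of $a,b,c$---the reconstruction is unique, proving the observation.

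The main obstacle is verifying that no conflict can arise between the status a twig assigns to a cell and a status that was already fixed at an earlier step. This is exactly what the design of $\LC$ guarantees: each white cell is enqueued with an $\LC$-context whose non-white cells have their statuses pinned at the moment of enqueueing, and the twig used to process that cell assigns statuses only to $a$, $b$, and (in the case of $L_2$ and $L_3$) $c$, i.e.\ to cells whose statuses are consistent with the previously encoded context. The linear order of the two white cells in $L_3$ and $L_5$ is what ensures that by the time the second one is processed, the status of its context cell (determined by the twig assigned to the first white cell) has been fixed, preserving the invariant and hence the uniqueness of the decoded polyomino.
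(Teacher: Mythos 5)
Your proposal is correct and follows essentially the route the paper leaves implicit: the observation is stated without proof, but the paper's step-by-step description of the $T*L$ operation is exactly the deterministic decoder you construct, so injectivity follows from exhibiting a well-defined inverse of the (deterministic) encoding. One small correction to your closing invariant: the claim that the twig assigned to $u$ faithfully records the membership statuses of $a$, $b$ (and $c$) is not quite what the encoder does --- as the paper's own worked example shows, $u_5$ receives $L_1$ rather than $L_4$ because the cell to its right, although in $P$, was \emph{already discovered}; a twig records which neighbors are newly discovered, not which lie in $P$, so a neighbor whose status is already fixed may be "re-labelled" inconsistently by the nominal reading of the twig. This does not damage the argument: the decoder should simply output the set of all cells ever enqueued (each cell of $P$ other than the first is enqueued exactly once, at a location fully determined by the prefix of the sequence processed so far), and that set is still uniquely determined by the twig sequence.
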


%% This use of knowledge about the ``past'' (the $\LC$-context),
%% to constrain the number of possibilities for the ``future'' is the
%% ingenuity behind Klarner and Rivest's idea.
With this construction, Klarner and Rivest reduced the number of building blocks of
polyominoes from~8 to~5.
 It is natural to ask if a more compact set of building blocks exists.
 We can now answer this question on the negative, since the existence of
 such set of size~4
%% We now know that no smaller set of building blocks exists, since the
%% existence of such a set
would imply that~$\lambda_2 \leq 4$, while we
already know that~$\lambda_2 \geq 4.0025$~\cite{BRS16}.

\subsection{Mathematical Formulation}

The upper bound~$\lambda_2 \leq 5$ can be improved by a more delicate analysis, which assigns different ``weights'' to  different elements of~$\LC$, as follows.
Each twig~$L \in \LC$ is assigned a weight $w(L) = x^a y^b$,
where~$a$ denotes the number of cells in~$L$ minus~1,
and~$b$ denotes the number of black cells in~$L$.
(Thus,~$w(L_1) = y$, $w(L_2) = xy$, $w(L_3) = x^2y$, $w(L_4) = xy$,
and~$w(L_5) = x^2y$.)
The \emph{weight} of a sequence
$S=(\ell_1,\dots,\ell_k) \in \LC^k$ is defined as
$W(S) = x \cdot w(\ell_1) \cdot \ldots \cdot w(\ell_k)$,
and the weight of the empty sequence is defined to be~$x$.

Let~$P$ be a polyomino of size~$n$, and let~$L_p=\{\ell_1,\dots,\ell_n\}\in \LC^n$ denote the sequence of twigs encoding~$P$.
For each $\ell_i \in \LC$, we clearly have that $w(\ell_i)=x^{a_i}y$, such that
$a_i\in \{0,1,2\}$ equals the number of open cells in $\ell_i$.
Thus, $w(L_p)=x\cdot x^{\sum_{i=1}^na_i}y^n$.
Moreover, $\sum_{i=1}^n a_i = n{-}1$ because each cell of~$P$
(other than the smallest cell) becomes open only once,
and is thus accounted for by some~$a_j$ in the sum.
The smallest cell is accounted for by the term~$x$ in~$w(L_p)$.
\begin{corollary}
	\label{cor:xnyn}
	$w(L_p)=x^n y^n$.
\end{corollary}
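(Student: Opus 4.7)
The plan is to unpack the definition of $W(L_p)$ and reduce the claim to a simple counting identity about the BFS traversal. Writing $w(\ell_i) = x^{a_i} y$ with $a_i \in \{0,1,2\}$ equal to the number of white (open) cells introduced by twig $\ell_i$, one obtains immediately
\[
   W(L_p) \;=\; x \cdot \prod_{i=1}^n w(\ell_i) \;=\; x \cdot x^{\sum_{i=1}^n a_i} \cdot y^n.
\]
So the entire content of the corollary is the claim $\sum_{i=1}^n a_i = n-1$, which I would isolate as the one nontrivial step.

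To establish this, I would argue by tracking how cells of $P$ enter the queue during the encoding algorithm described just before Observation~\ref{ob:diff}. The lexicographically smallest cell of $P$ is placed in the queue during initialization (not by any twig), and every other cell of $P$ is placed in the queue exactly once, at the moment it appears as a white cell of the twig $\ell_j$ assigned to some $u_j$. Uniqueness is guaranteed by the algorithm's bookkeeping: once a cell is added to the queue (and hence labeled), it becomes forbidden from being re-added, because subsequent twigs either mark it as a starred forbidden cell or find it already labeled.

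Thus each of the $n-1$ non-minimal cells of $P$ contributes exactly $1$ to exactly one $a_j$, giving $\sum_{i=1}^n a_i = n-1$. The contribution of the smallest cell is precisely what the leading factor $x$ in the definition $W(S) = x \cdot w(\ell_1) \cdots w(\ell_k)$ accounts for. Substituting back yields $W(L_p) = x \cdot x^{n-1} \cdot y^n = x^n y^n$, as required.

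The only possible obstacle is the bookkeeping in step two, namely verifying that no cell of $P$ is ever inserted into the queue twice and that every cell of $P$ (other than the minimal one) is inserted at least once. Both facts follow directly from the algorithm specification: insertions occur only at the white cells of the chosen twig, the twig $L_i$ applied to $u$ is determined by the status of $a,b,c$ so that a cell known to be in $P$ but not yet queued necessarily appears as a white cell of the chosen $L_i$, and the ``turn black'' and ``forbidden'' updates prevent any second insertion. Once these two points are stated, the calculation above closes the proof.
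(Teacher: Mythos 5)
Your proposal is correct and follows essentially the same route as the paper: the paper likewise writes $w(\ell_i)=x^{a_i}y$ with $a_i$ the number of open cells, reduces the claim to $\sum_{i=1}^n a_i = n-1$, and justifies that identity by observing that each cell of $P$ other than the lexicographically smallest becomes open exactly once, with the smallest cell absorbed by the leading factor $x$. Your additional remarks on the queue bookkeeping only make explicit what the paper leaves implicit.
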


Now, let~$\LC^k$ denote the set of all sequences of~$k \geq 0$ elements
of~$\LC$.
The sum of weights of all finite sequences of elements of~$\LC$ is
\begin{equation}
   \sum_{k=0}^{\infty}{\sum_{S \in \LC^k} W(S)} =
      \sum_{k=0}^{\infty}x\left( \sum_{\ell\in \LC}w(\ell)\right)^k =
      x\left(1-\sum_{\ell \in \LC}w(\ell)\right)^{-1}.
   \label{eq:sum-of-weights}
\end{equation}

Clearly,
%for Eden's set of twigs $\E$,
%$\sum_{e \in \LC} w(e)=y(1+x)^3$,
$\sum_{\ell \in \LC} w(\ell) = y(2x^2+2x+1)$,
thus, the generating function given by~(\ref{eq:sum-of-weights}) is
\begin{equation}
   \sum_{m,n=0}^{\infty} l(m,n) x^m y^n =
      \frac{x}{1-y(2x^2+2x+1)} =
      \sum_{n=0}^{\infty} x y^n (2x^2+2x+1)^n,
   \label{eq:gen-func}
\end{equation}
where~$l(m,n)$ is the coefficient of the term~$x^m y^n$.
% ---------------------- maybe for the journal version --------------
% Eden described an injection of the set of polyominoes of size $n$ into the set of finite sequences of $\E$ having weight $x^n y^n$.
% At vertex $i$ ($i=1,\dots,n$) of $T_P$,
% we always find exactly one of the twigs shown in Figure~\ref{fig:eden-twigs}.
%Let % us denote this twig by $e_i$,
% and define
%$T_P=(\ell_1, \dots, \ell_n)$.
% The sequence of twigs in this example is $e=(e_4,e_4,e_8,e_3,e_6,e_8,e_8,e_6,e_7,e_8)$,
% and its weight is $w(e)=xw(e_1)\dots w(e_{10}) = x\cdot x^2y \cdot x^2y \cdot y \cdot x^2y \cdot xy \cdot y \cdot y \cdot xy \cdot xy \cdot y=%\bm
% {x^{10}y^{10}}$.
% ------------------------------------------------------------------

%removed 'by klarner and rivest'
\begin{comment}
As observed~\cite{KR73},
two different polyominoes are associated with different sequences of twigs,
and every polyomino of size~$n$ is associated with a sequence with
weight~$x^n y^n$.
Thus,
\end{comment}
%as is well known,

\begin{comment}
The coefficient~$l(n,n)$ of~$x^n y^n$ in
Equation~(\ref{eq:gen-func}) is an upper bound on~$A_2(n)$,
and if~$r$ is the radius of convergence of the ``diagonal'' power
series~$\sum_{n=0}^{\infty} l(n,n)z^n$, then~$\lambda \leq 1/r$.
Using a simple calculation, Klarner and Rivest showed that~$\lambda_2 \leq 4.8285$.
\end{comment}
By Observation~\ref{ob:diff} and Corollary~\ref{cor:xnyn}, there is an injection of the set of polyominoes of size $n$ into the set of finite sequences of elements of $\LC$ having weight $x^n y^n$. Thus,
the coefficient~$l(n,n)$ of~$x^n y^n$ in
Equation~(\ref{eq:gen-func}) is an upper bound on~$A_2(n)$.
In Section~\ref{sec:d2}, we follow the proof of Klarner and Rivest~\cite{KR73}
for showing that~$l(n,n) \leq 4.8285^n$.
In the next two sections, we extend the concept of ``$\LC$-context'' to higher
dimensions.
%%, and show how this leads to an improved upper bound on $\lambda_d$.
%In~$d$ dimensions (fixed~$d$), we obtain a set of~$2^{2(d-1)}+1$ twigs,
%which provides an upper bound on~$\lambda_d$ better than the
%bound obtained by the generalization of Eden's method.

\section{Higher Dimensions}

\subsection{Three Dimensions}
\label{sec:3d}

%\begin{theorem}
%   $\lambda_3 \leq 8.8782$. \qed
%\end{theorem}
\figbox{
    \scalebox{0.2}{{\input{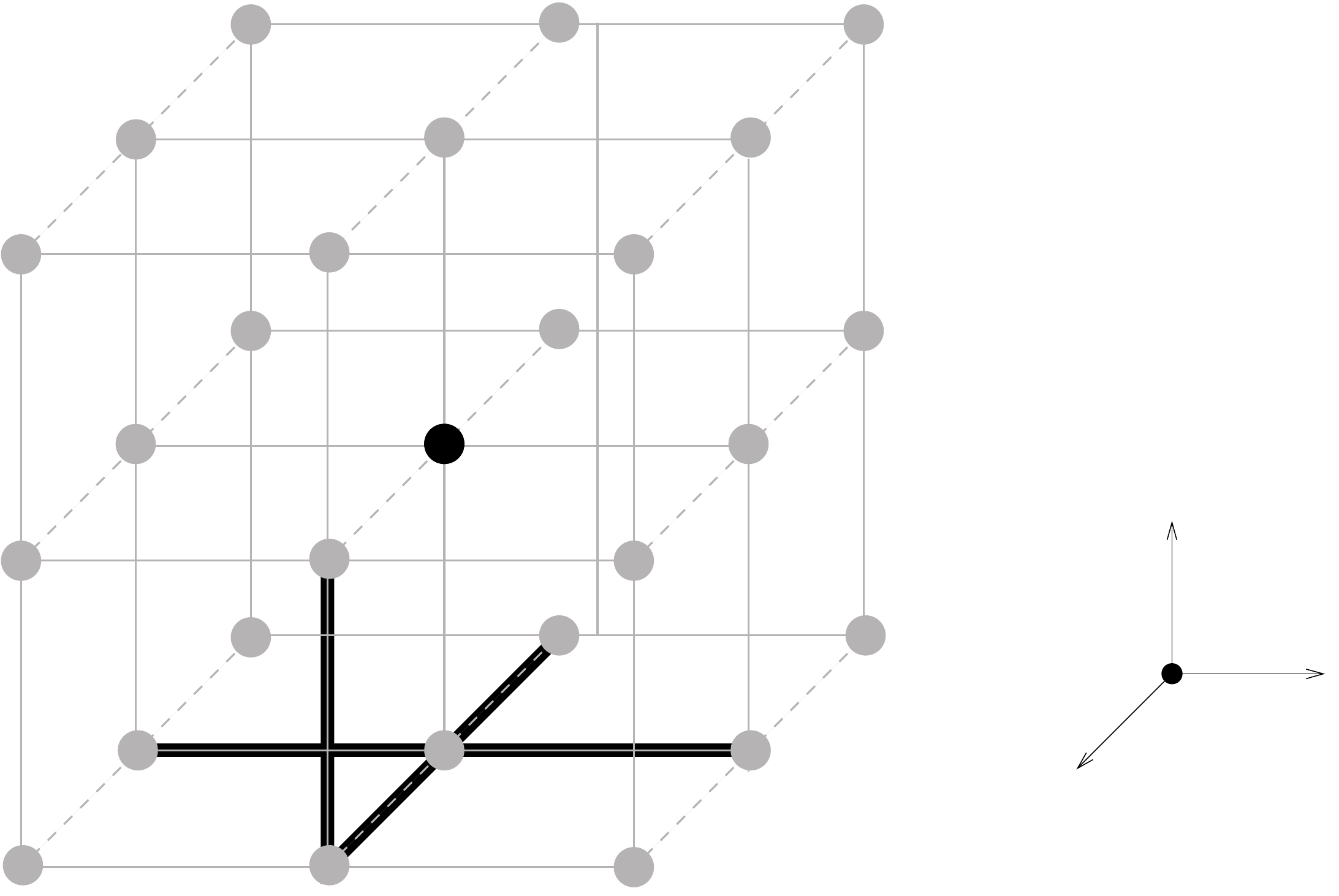_t}}}
}{figure}{fig:Lcontext-3D}{The $+\LC$ context
            (bold black lines) of a cell~$o$ on the
            3D cubical lattice}
We generalize the twigs idea to three dimensions as follows.
Refer to Figure~\ref{fig:Lcontext-3D}.
%%% \begin{figure}
%%%    \centering
%%%    \scalebox{0.17}{\input{Lcontext.pdf_t}}
%%%    \caption{The $+\LC$ context
%%%             (shown in black) of a cell~$o$ on the
%%%             3-dimensional cubical lattice}
%%%    \label{fig:Lcontext-3D}
%%% \end{figure}
\figbox{
	\begin{tabular}{cccc}
		\includegraphics[scale=0.1]{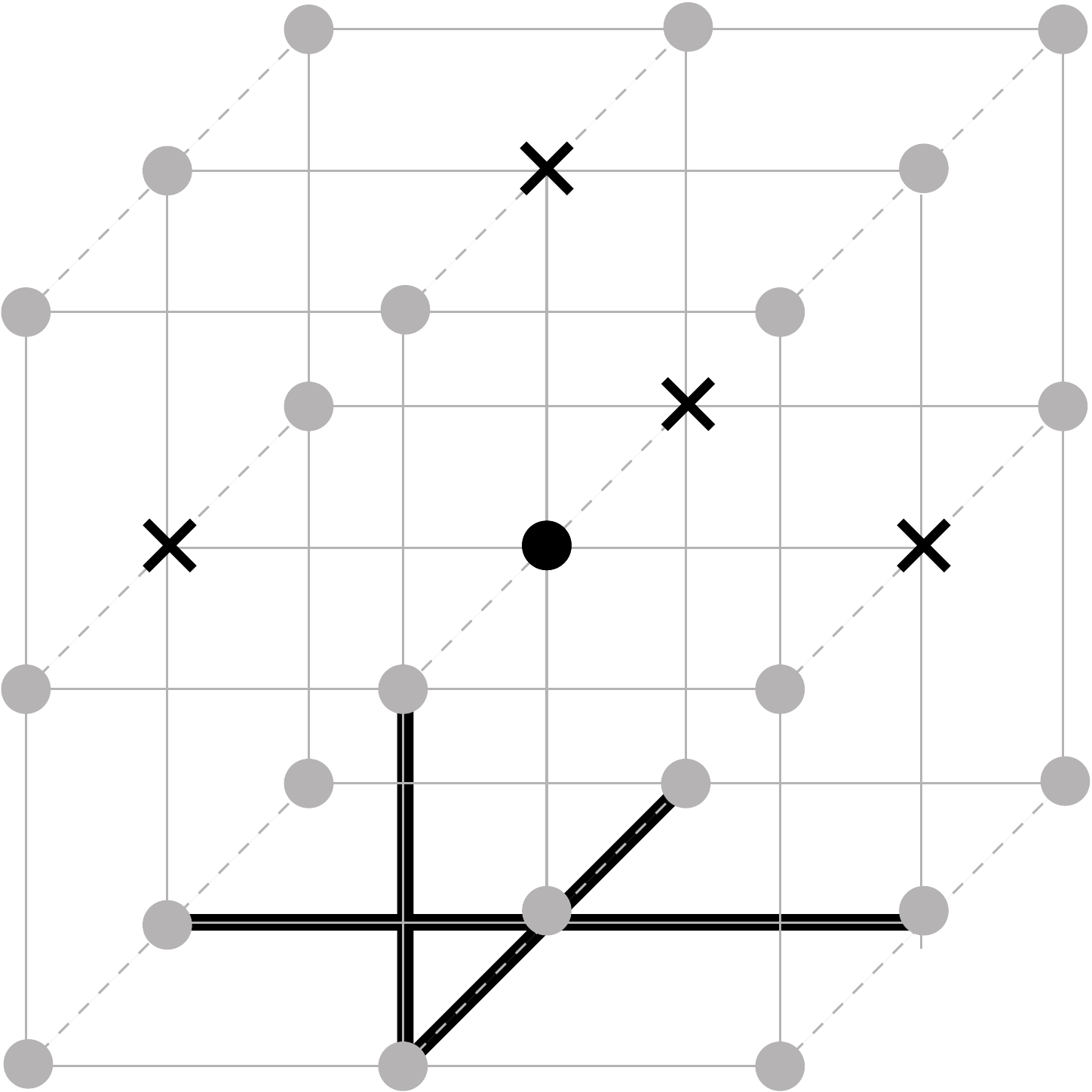} &
		\includegraphics[scale=0.1]{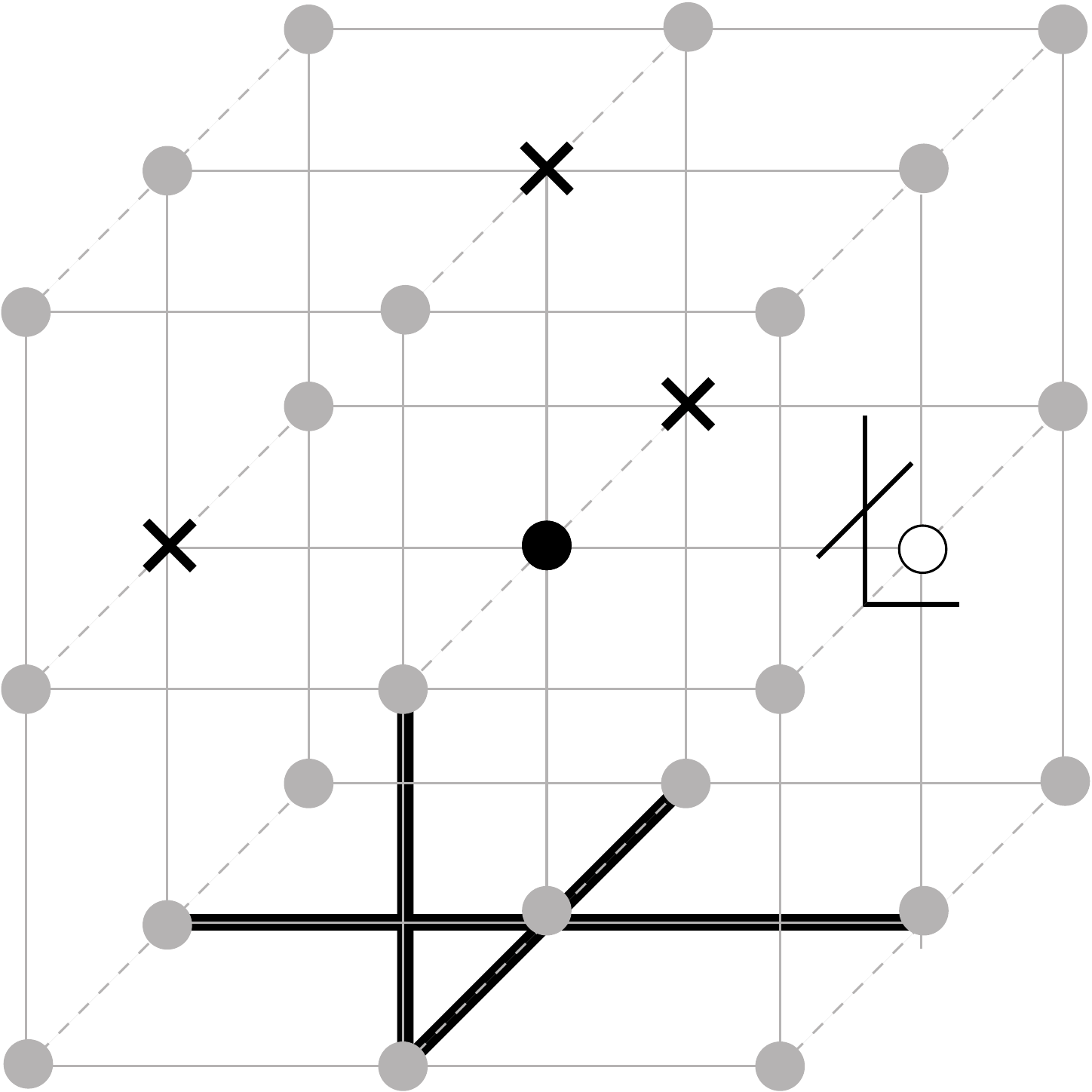} &
		\includegraphics[scale=0.1]{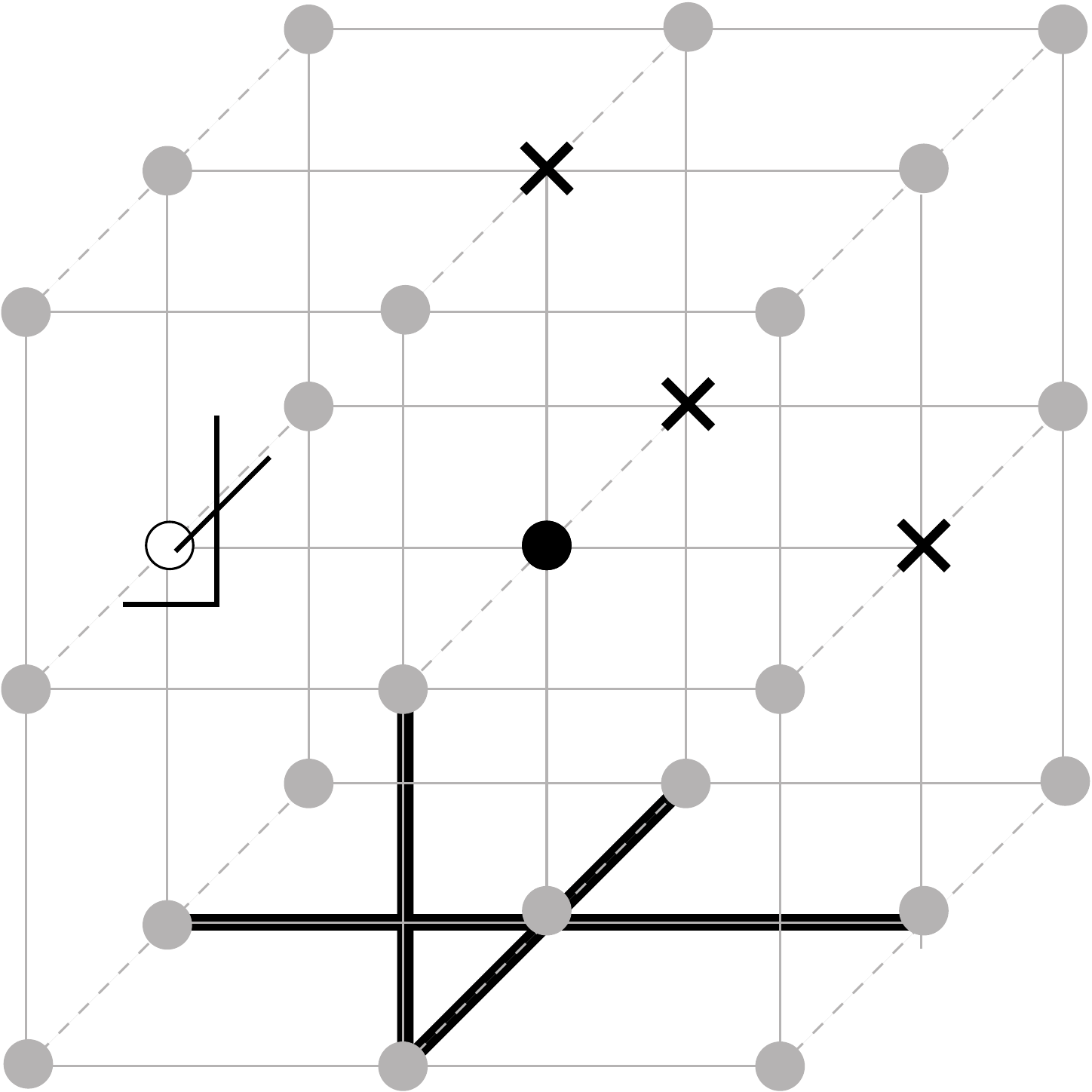} &
		\includegraphics[scale=0.1]{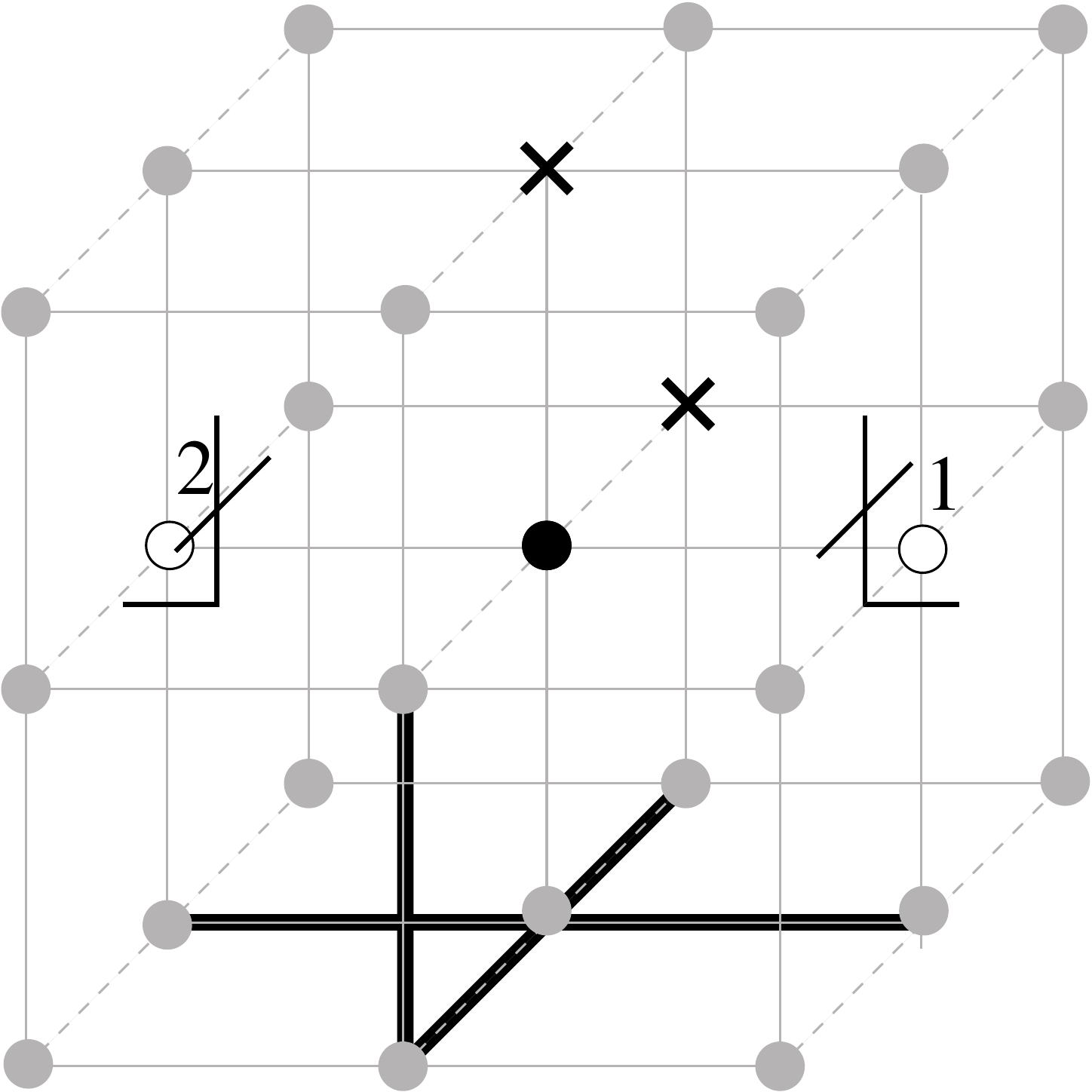} \\
		$T_1$ & $T_2$ & $T_3$ & $T_4$  \medskip \\
		\includegraphics[scale=0.1]{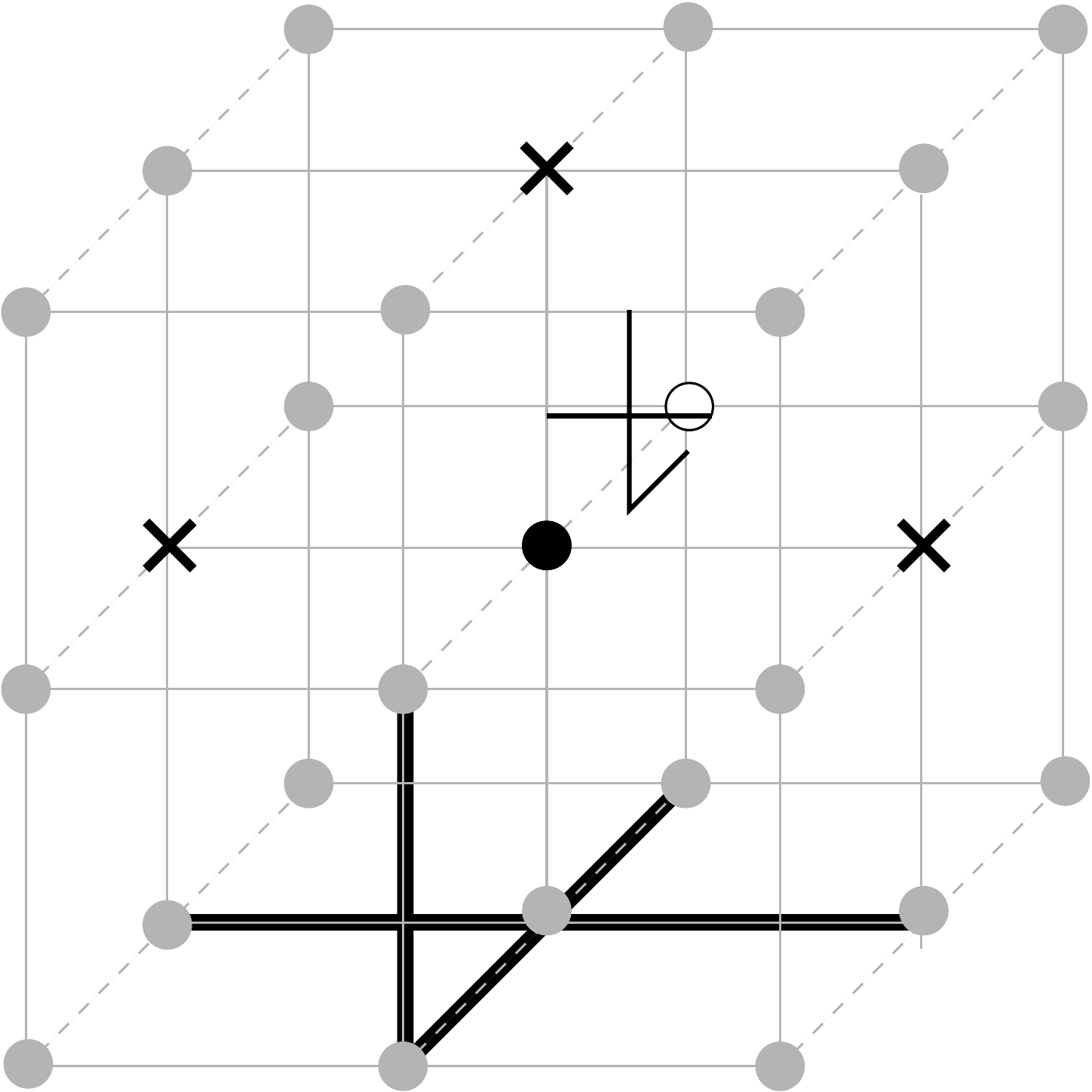} &
		\includegraphics[scale=0.1]{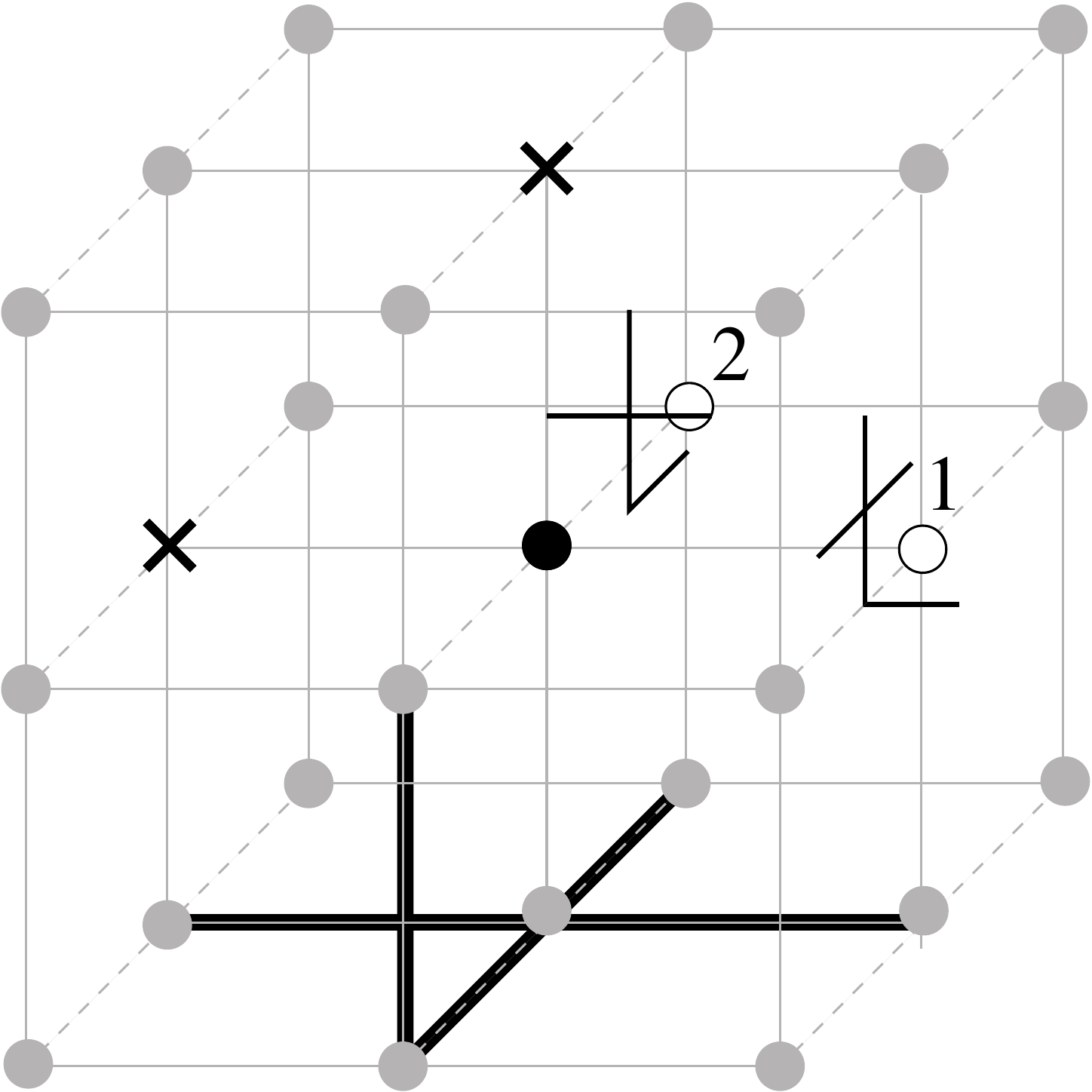} &
		\includegraphics[scale=0.1]{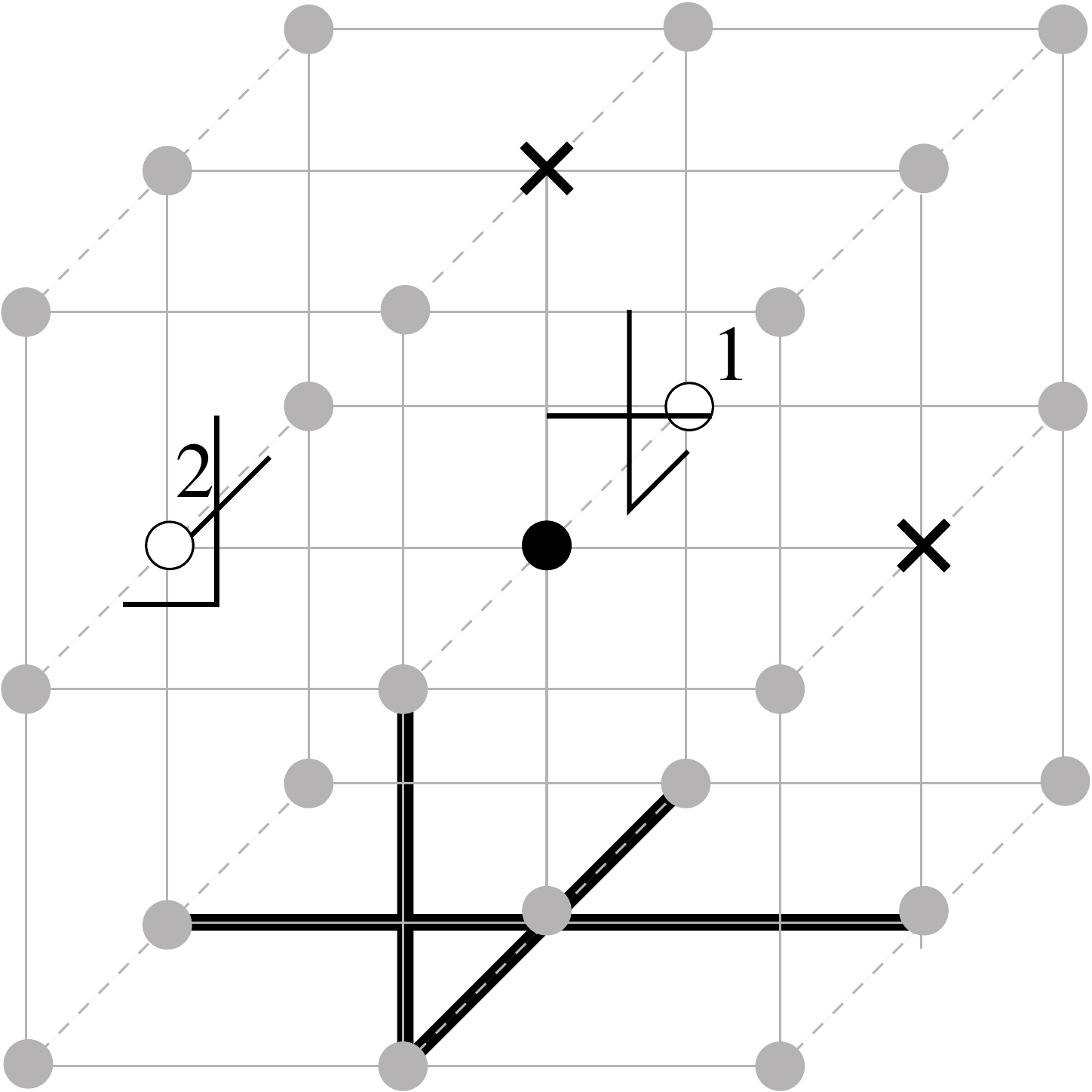} &
		\includegraphics[scale=0.1]{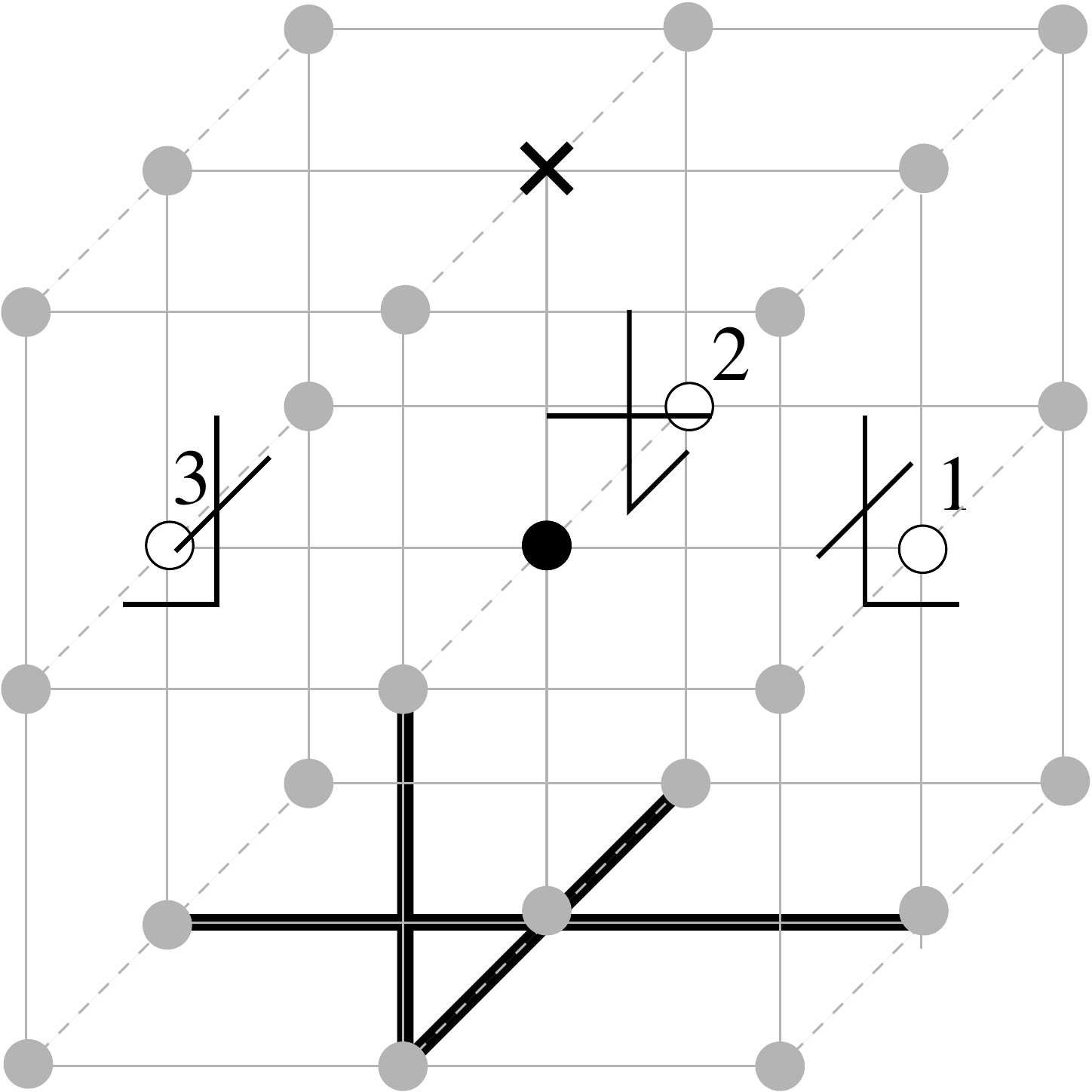} \\
		$T_5$ & $T_6$ & $T_7$ & $T_8$ \medskip \\
		\includegraphics[scale=0.11]{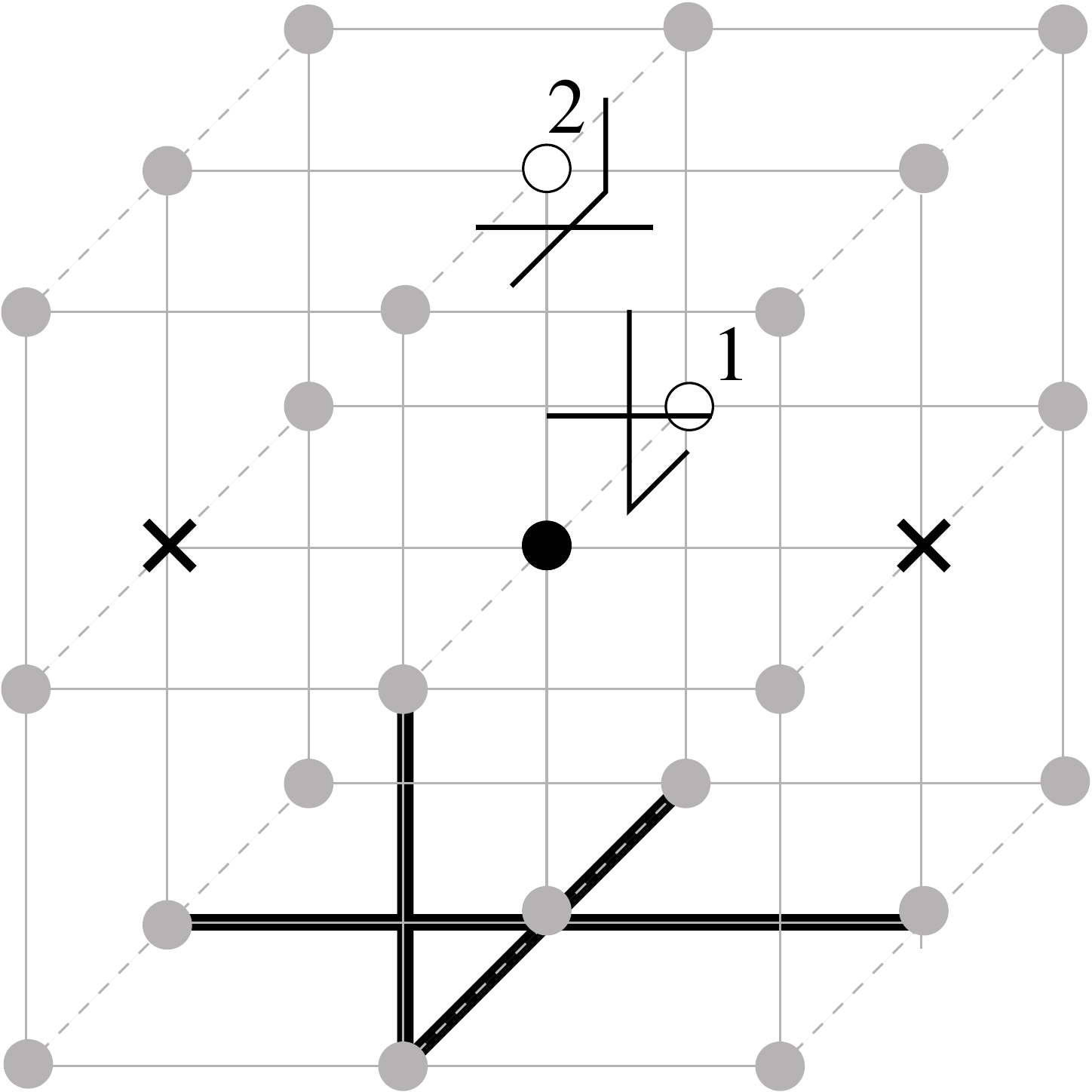} &
		\includegraphics[scale=0.11]{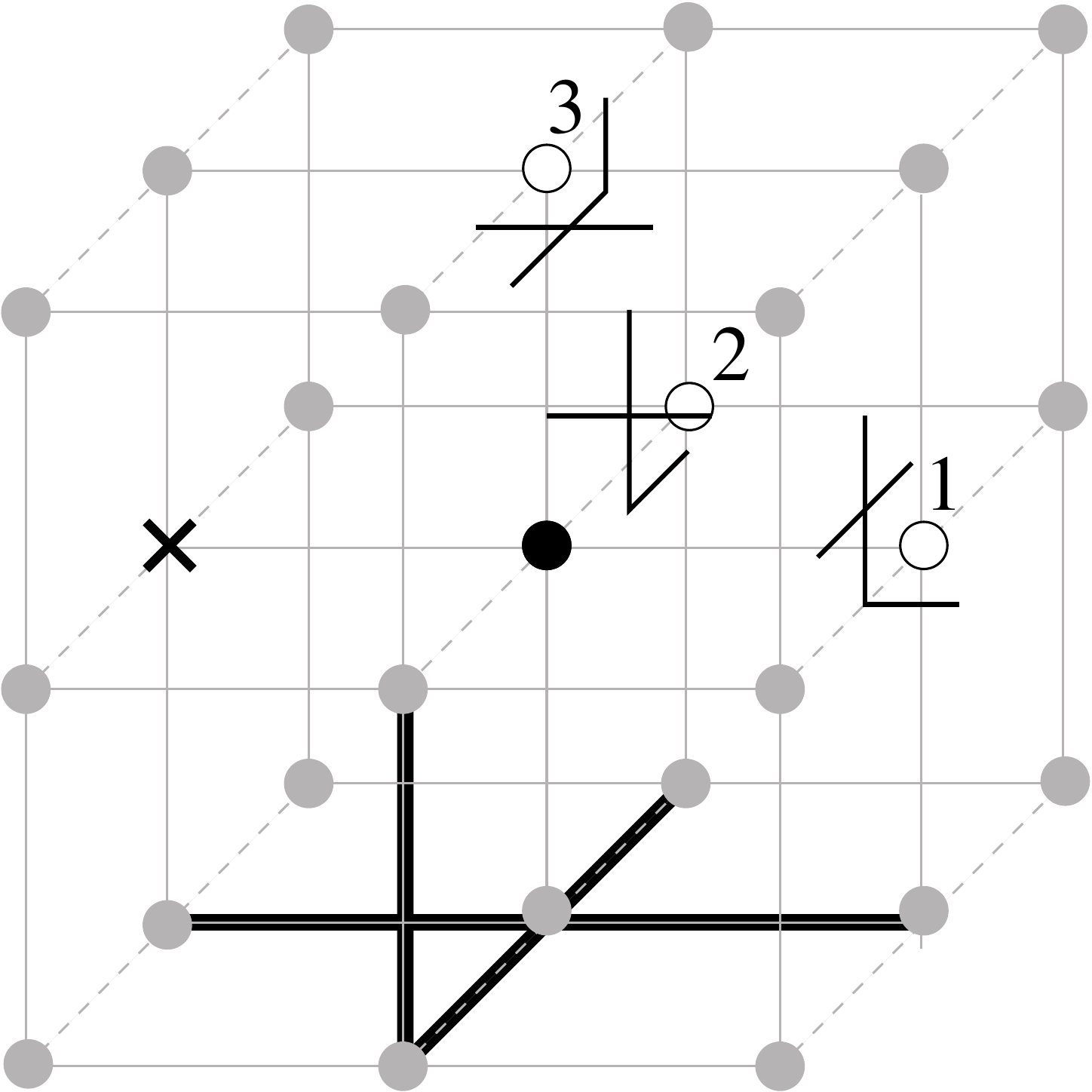} &
		\includegraphics[scale=0.11]{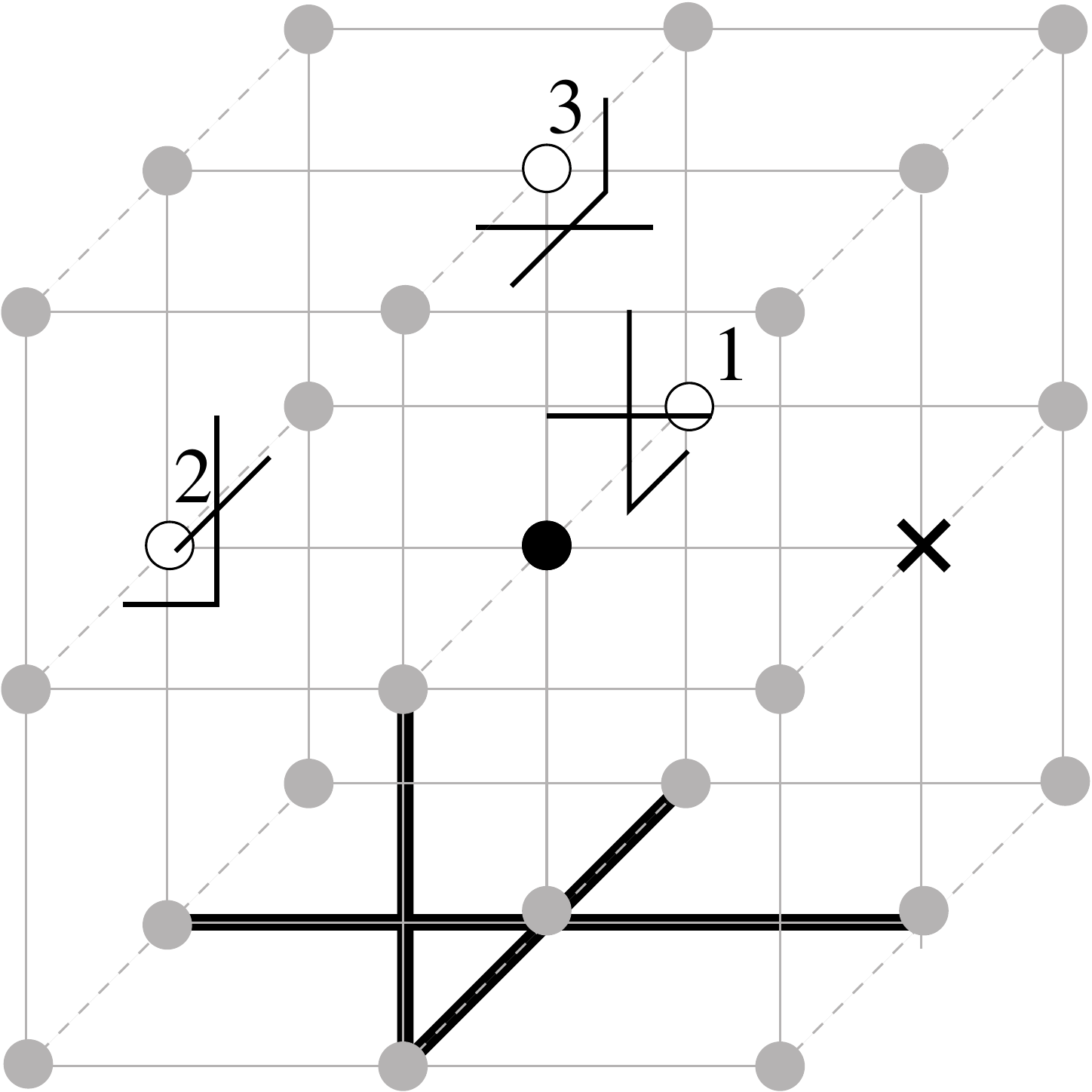} &
		\includegraphics[scale=0.11]{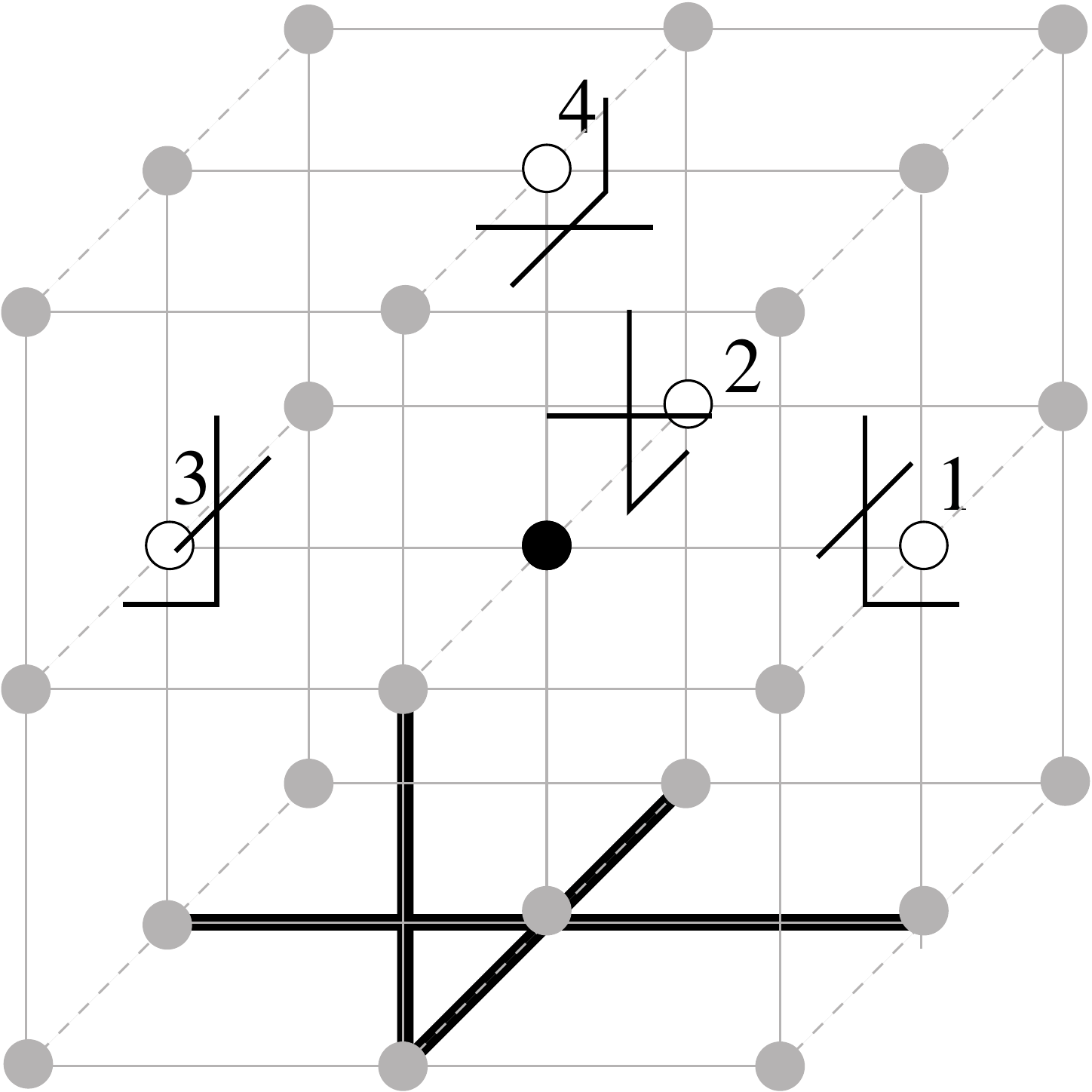}  \\
		$T_9$ & $T_{10}$ & $T_{11}$ & $T_{12}$  \\
		\includegraphics[scale=0.1]{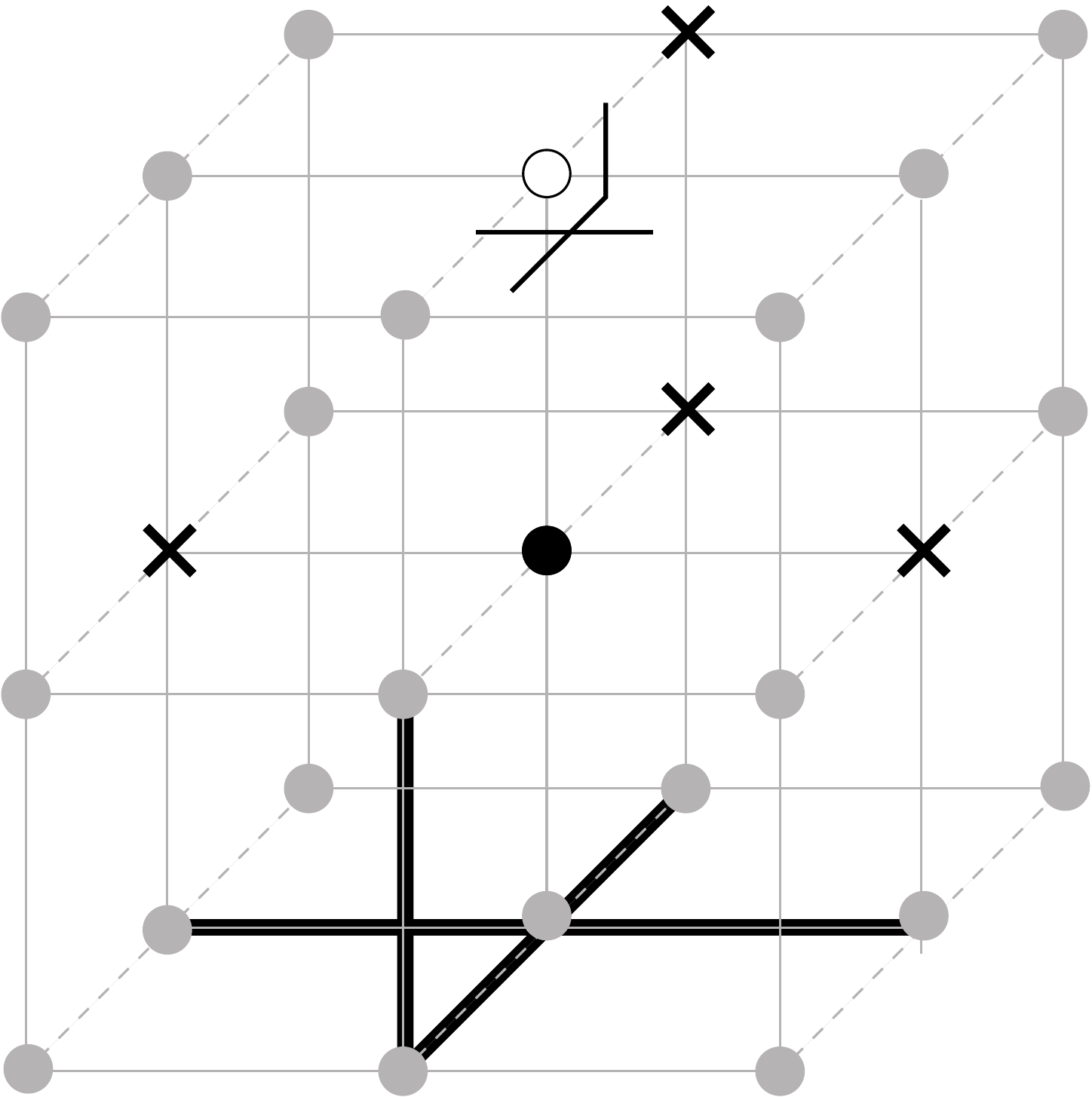} &
		\includegraphics[scale=0.1]{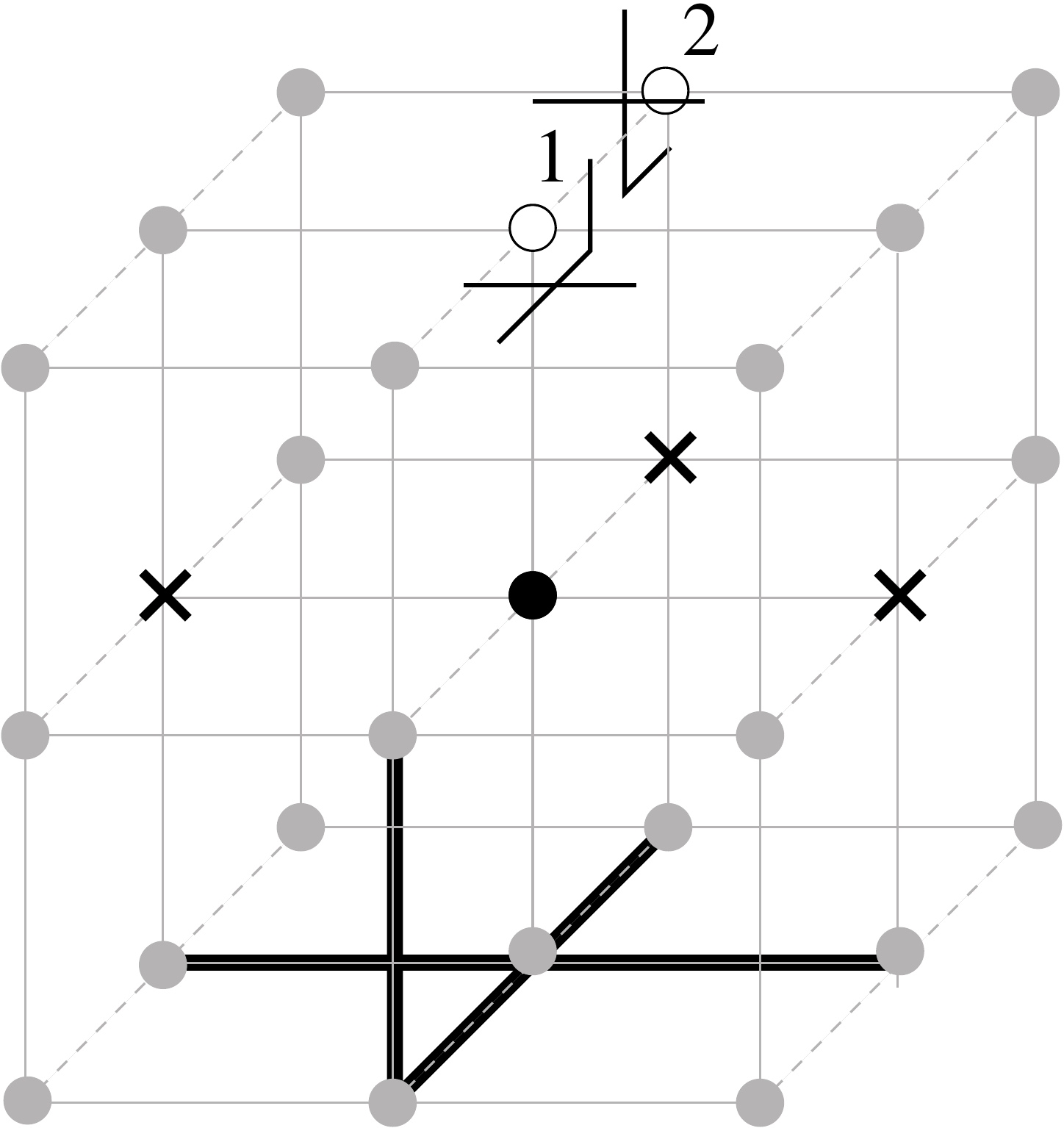} &
		\includegraphics[scale=0.1]{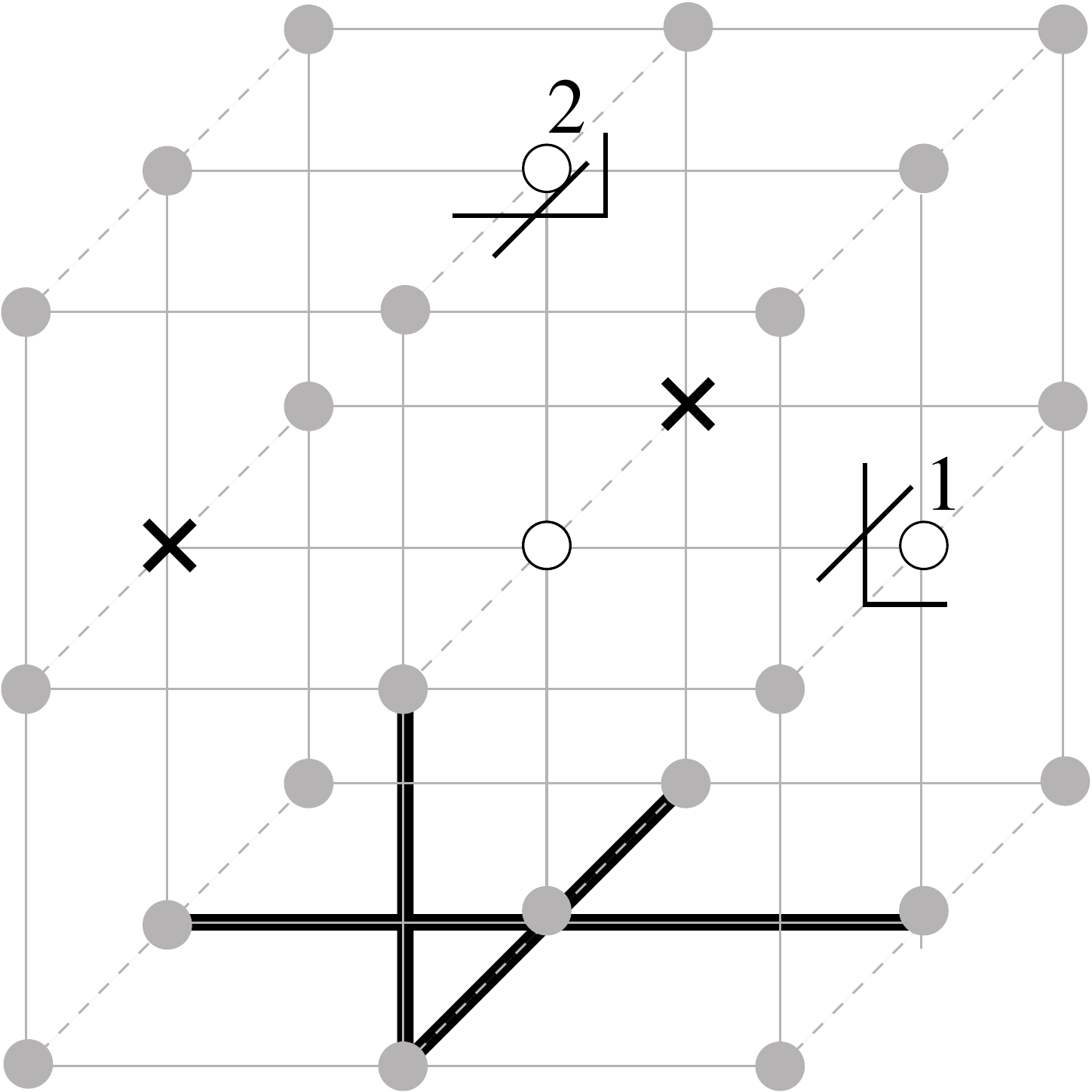} &
		\includegraphics[scale=0.1]{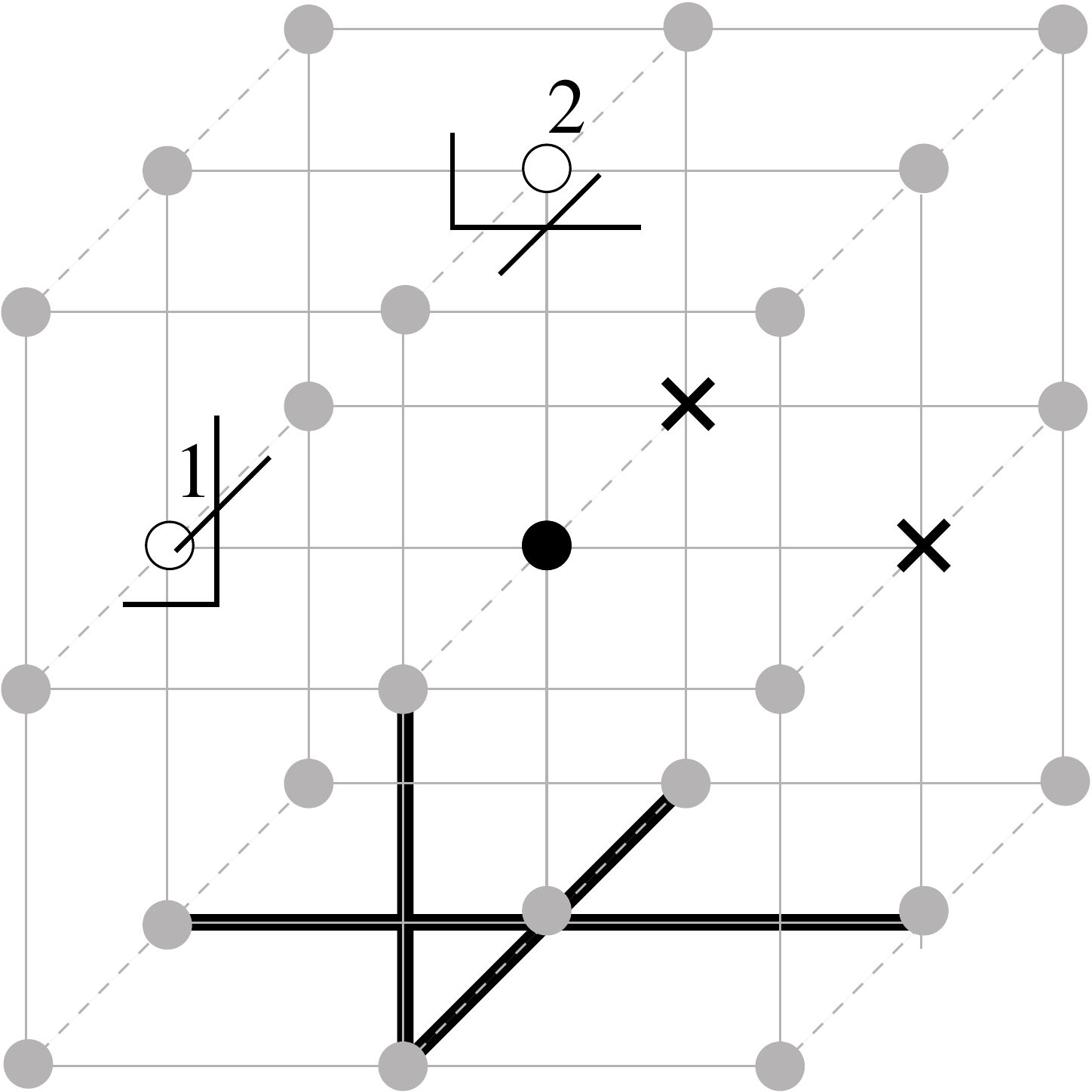} \\
		$T_{13}$ & $T_{14}$ & $T_{15}$ & $T_{16}$ \\
		\includegraphics[scale=0.1]{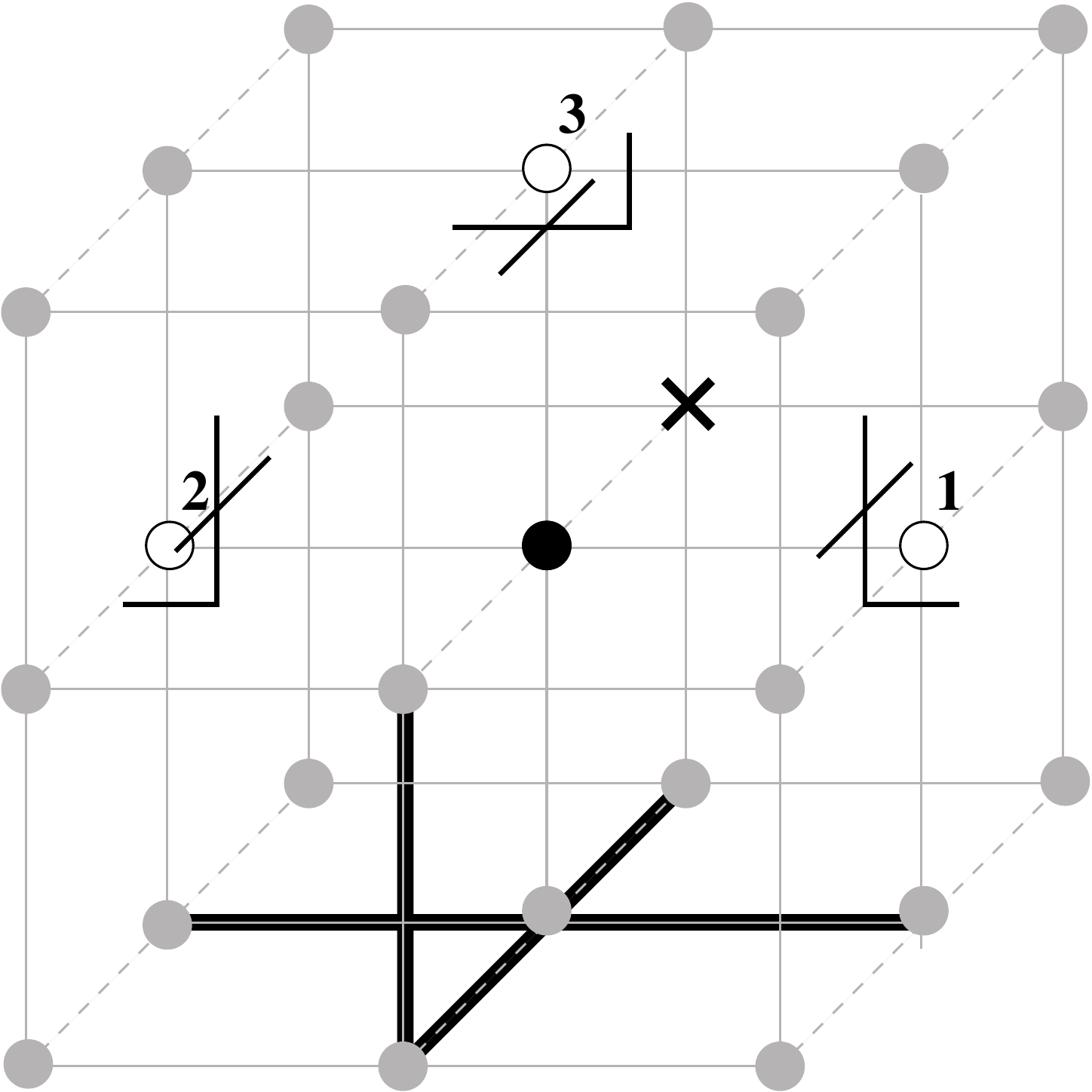}& & & \\
		$T_{17}$ & & &\\
	\end{tabular}
}{figure}{fig:3d-twigs}{3-dimensional twigs}
Let~$o=(0,0,0)$ be the lexicographically-smallest cell of the polycube.
Thus, by definition, all cubes that lie in the planes~$x_1=-1$ and~$x_2=1$ do
not belong to the polycube.
We define the ``$+\LC$-context'' of~$o$ (this name indicates a composition
of a ``Plus'' and an ``Ell'') to be the six cells around~$o$ shown in asterisks
in Figure~\ref{fig:Lcontext-3D}.
Observe that the set of 2-dimensional twigs~$\LC$ (Figure~\ref{fig:Ltwigs})
captures all possible occupancy configurations of the neighbors of~$o$ that lie in the~$x_1 x_2$ plane.
For the remaining neighbors of~$o$ (namely, cells~$(0,0,-1)$
and~$(0,0,1)$), there are~$2^2=4$ possible encodings of whether or not they
belong to the polycube.
This yields the set~$\LC^{(3)}$ of seventeen 3-dimensional twigs, shown in
Figure~\ref{fig:3d-twigs}.
%%% \begin{figure}
%%%    \begin{tabular}{ccccc}
%%%       \includegraphics[scale=0.17]{twigs/t6.pdf} &
%%%          \includegraphics[scale=0.17]{twigs/t7.pdf} &
%%%          \includegraphics[scale=0.17]{twigs/t9.pdf} &
%%%          \includegraphics[scale=0.17]{twigs/t13.pdf} & \\
%%% 		$T_1$ & $T_2$ & $T_3$ & $T_4$ & \medskip \\
%%%       \includegraphics[scale=0.17]{twigs/t8.pdf} &
%%%          \includegraphics[scale=0.17]{twigs/t12.pdf} &
%%%          \includegraphics[scale=0.17]{twigs/t14.pdf} &
%%%          \includegraphics[scale=0.17]{twigs/t2.pdf} & \\
%%%       $T_5$ & $T_6$ & $T_7$ & $T_8$ & \medskip \\
%%%       \includegraphics[scale=0.17]{twigs/t15.pdf} &
%%%          \includegraphics[scale=0.17]{twigs/t4.pdf} &
%%%          \includegraphics[scale=0.17]{twigs/t3.pdf} &
%%%          \includegraphics[scale=0.17]{twigs/t1.pdf} & \\
%%% 		$T_9$ & $T_{10}$ & $T_{11}$ & $T_{12}$ & \\
%%%       \includegraphics[scale=0.17]{twigs/t10.pdf} &
%%%          \includegraphics[scale=0.17]{twigs/t11.pdf} &
%%%          \includegraphics[scale=0.17]{twigs/t17.pdf} &
%%%          \includegraphics[scale=0.17]{twigs/t16.pdf} &
%%%          \includegraphics[scale=0.17]{twigs/t5.pdf} \\
%%%          $T_{13}$ & $T_{14}$ & $T_{15}$ & $T_{16}$ & $T_{17}$
%%%    \end{tabular}
%%%    \caption{3-dimensional twigs}
%%%    \label{fig:3d-twigs}
%%% \end{figure}
Similarly to two dimensions, the cells of a twig are either black or white, %, and
%forbidden cells are marked with an $\mathsf{X}$.
and the $+\LC$ context and linear order of the open cells is indicated.
Similarly to
$L_1,\dots,L_5$ (in Figure~\ref{fig:eden-twigs}), it is easy to check that
$T_1,\dots,T_{17}$ (in
Figure~\ref{fig:Ltwigs}) serve as complete building blocks for polycubes since they cover all possible situations (a formal proof is given in the next section).
Every $n$-cell polycube~$P$ corresponds to a unique $n$-term sequence of
elements of~$\LC^{(3)}$,
and different polycubes are assigned different sequences.
The sequence corresponding to a polycube can be constructed
algorithmically by a breadth-first search as in two dimensions.
Every twig is assigned a weight in the same manner, and we get that
\[
   \sum_{\ell\in \LC^{(3)}} w(\ell) = y(1+4x+7x^2+4x^3+x^4) =
      y\left((x+1)^4+x^2\right).
\]
Thus, the generating function given by Eq.~(\ref{eq:sum-of-weights}) is
\begin{equation}
   \frac{x}{1-y(1+4x+7x^2+4x^3+x^4)} =
      \sum_{n=0}^{\infty} x y^n (1+4x+7x^2+4x^3+x^4)^n.
   \label{eq:3d-gen}
\end{equation}
See Section~\ref{sec:agf} for the full analysis of this generating function.

%A simple analysis of generating function (given in the next section), proves that $\lambda_3 \le 9.8073$.
%A standard calculation shows that
%The radius of convergence of the two-variable
%generating function in Equation~(\ref{eq:3d-gen}) is about~0.1019649090.
%Hence, $\lambda_3 \leq 1/0.1019649090 \leq 9.8073$.

% --------------- begin comment ---------------------
\begin{comment}
DO WE NEED THE FOLLOWING TEXT?]]]
The idea is that many sequences of elements of L are illegal, namely, they contain intersections of cells or overlaps between forbidden cells and occupied cells. These sequences do not represent valid polyominoes. When building larger sets of twigs, on the one hand, we ensure that the new set of twigs is complete, and on the other hand, the number of illegal sequences (sequences that do not represent polyominoes) gets smaller, as by construction of the new set, we disqualify invalid sequences which were included in the generating function corresponding to the smaller set of twigs.
\end{comment}
% --------------- end comment ---------------------

% ----------------------------------------------------------------------------
% Higher Dimensions
% =================

\subsection{\bm{$d>3$}}
\label{sec:high-dim}

%[[[[[MAKE SURE THAT THE NUMBER OF CELLS AROUND $u$ ($2d$ AND NOT $2d-2$) DOES NOT AFFECT THE NUMBER OF TWIGS.]]]]]

Our construction in three dimensions can be applied inductively for~$d>3$.
%This is the first algorithmic approach for improving the upper bound
%on~$\lambda_d$.
% We believe that implementing a computer program to carry this approach
% is feasible for small values of $d$, say, up to~5 or~6.
%Let us, for simplicity, refer to $d$-dimensional polycubes as
%polycubes throughout this section.
The base of the induction is~$d{=}2$, where we fix a square in
the~$x_1 x_2$ plane (as in Figure~\ref{fig:Lcontext-3D}) together with its
$\LC$-context.
Going to~$d{=}3$, the square gains two new neighbors in the third
dimension~$x_3$.  In general, when we go from~$d{-}1$ to~$d$ dimensions,
a cube gains two neighbors in the new dimension~$x_d$.
Let~$o=(0,0,\dots,0)$ be a $d$-dimensional cube ($d>2$).
We define the $+_d\LC$-context of~$o$ in a recursive way.
The base of the definition is~$+_2\LC := \LC$ and~$+\LC := +_3\LC$, and
the recursion is~$+_d\LC := +_{d-1}\LC \cup \{c_1,c_2\}$,
where~$c_{1,2}=(-1,0,\dots,0,\pm1)$.
The geometric interpretation of the $+_d\LC$-context of~$o$ is an
L-shape around~$o$ in the~$x_1 x_2$ plane, which intersects $d{-}2$ lines
in the~$x_1 = -1$ plane at the point~$(-1,0,\dots,0)$.

The set of twigs~$\LC^{(d)}$ (where $\LC^{(2)}=\LC$) is comprised of
all~$2^{2d-2}$ occupancy options for the cells neighboring~$o$ (which
are not in its $+_d\LC$-context):
In dimensions~$x_3,\dots,x_d$, the construction simply covers all~$2^{2(d-2)}$
occupancy options for the two neighbors of~$o$.
In the $x_1 x_2$ plane, the occupancies of the neighbors
of~$o$ are captured by~$\LC$, as in Fig.~\ref{fig:Ltwigs},
and the only ``problematic'' case
is when the cube~$(1,0,\dots,0)$ is white and all other neighbors
of~$o$ are not (as is the case with the twigs $L_2, L_3$ in
Figure~\ref{fig:Ltwigs}, and $T_{13},T_{14}$ in Figure~\ref{fig:3d-twigs}).
It is, thus, necessary to encode the status of the cell~$(1,-1,0,\dots,0)$,
since, by construction, it is contained in the $+_d\LC$ context of the
cell~$(1,0,\dots,0)$.
This results in $2^{2(d-2)} \cdot 2^2 + 1 = 2^{2(d-1)} + 1$ twigs, and compares favorably with the generalization of Eden's construction,
which contains about two times more ($2^{2d-1}$) twigs.

In order to prove that our construction works better, all we need to show
is that for any white cell~$u$ in every twig in~$\LC^{(d)}$,
there is a set of $4+2(d-2)=2d$ cells around~$u$, which can be completely
ignored by the construction when visiting~$u$.  Those cells
can form its $+_d\LC$-context.
Note that except the second white cell in the problematics case mentioned
above, all white cells are neighbors of~$o$.
In case a new neighbor of~$o$, namely,~$(0,0,\dots,0,\pm1)$, is open,
the $\LC$-shape in its $+_d\LC$-context is formed by~$c_1$
(or~$c_2$), $(-1,0,\dots,0)$, $o$, and~$(1,0,\dots,0)$;
the rest of the cells in its $+_d\LC$-context are
$(0,\pm 1, 0,\dots,0)$, $\dots$, $(0,\dots,0,\pm 1,0)$
(that is, all coordinates are 0, except one coordinate in the
range~$2,\dots,(d-1)$, which is~$\pm 1$).
Note that the statuses of these cells are known by construction.
For the other possible white neighbors~$n=(0,\dots,0,\pm 1, 0,\dots,0)$ of~$o$,
$+_d\LC = +_{d-1}\LC \cup (0,\dots,0,\pm 1)$ since, by construction, $o$ is
where the~`$\LC$' and the~`$+$' in the~$+_{d-1}\LC$-context of~$n$ intersect.
Thus, the union~$+_{d-1}\LC \cup (0,\dots,0,\pm 1)$ forms its
$+_d\LC$-context;
The statuses of the cells in its $+_{d-1}\LC$-context
and of~$(0,\dots,0,\pm 1)$ are known by induction and construction,
respectively.
Finally, we need to address the second white cell $p=(1,-1,0,\dots,0)$ in the
problematic twig mentioned above. This cell will \emph{always} be visited after
the first open cell~$q=(0,1,0\dots,0)$ of the twig is visited and assigned a
twig.
Once~$q$ is assigned a twig, the statuses of all its neighbors will be encoded.
Then, it is easy to check that the cells~$(1,0,\dots,0)$, $o$, $(0,1,0\dots,0)$,
and $(0,2,0\dots,0)$ form the $\LC$-shape in the neighborhood of~$p$, and
together with the remaining $2(d{-}1)$ neighbors of~$q$, they form the
$+_d\LC$-context of~$p$.
The statuses of these cells are already encoded by construction.

Finally, we need to compute the weight
function~$W^{(d)}(x,y)=\sum_{t \in \LC^{(d)}}w(t)$.
Since~$o$ has~$2(d{-}1)$ neighbors that are not in its $+_d\LC$-context,
there are~$\binom{2(d-1)}{i}$ twigs in~$\LC^{(d)}$ with exactly~$i$ white cells
and one black cell~($o$), and the weight of each such twig is~$x^i y$.
Recall the problematic case mentioned above, which results in an
additional twig with three cells---1 black and 2 white---whose weight
is therefore~$x^2 y$.  Hence,
\[
   W^{(d)}(x,y) = \sum_{t \in \LC^{(d)}}w(t) =
      \sum_{i=0}^{2(d-1)}\left[ \binom{2(d-1)}{i} x^i y \right]+ x^2 y =
       y((x+1)^{2(d-1)}+x^2).
\]
Substituting~$W^{(d)}(x,y)$ in the generating function from
Equation~(\ref{eq:sum-of-weights}), we obtain
\[
   g^{(d)}(x,y) = l_d(n,m)x^my^n = \sum_{n=0}^{\infty} x y^n \left((1+x)^{2(d-1)} + x^2\right)^n =
      \frac{x}{1-y\left((1+x)^{2(d-1)} + x^2\right)},
\]

Similarly to polyominoes, polycubes of size~$n$ are mapped uniquely to
sequences of elements of~$\LC^{(d)}$ having weight $x^n y^n$.

\subsection{Analysis of the Generating Functions}
\label{sec:agf}
%a generating function which dominates that of $d$-dimensional polycubes; namely,
% We have shown that $A_d(n) \le l_d(n,n)$.
It can be easily observed that~$l_d(n,n)$, the coefficient of~$x^ny^n$
in~$g^{(d)}(x,y)$, is the coefficient of~$x^{n-1}$
in~$\left((1+x)^{2(d-1)} {+} x^2\right)^n$.
We now show how to compute~$l_d(n,n)$.
% ------------ start comment -------------
\begin{comment}
%--- do we need this text at all? ---
For any fixed value of~$d$, the reciprocal of the radius of convergence of
the diagonal of the function~$g(x,y)$ is an improved upper bound
on~$\lambda_d$.
\end{comment}
% ------------ end comment -------------
Let $h^{(d)}(x) = \left((1+x)^{2(d-1)} {+} x^2\right)^n$.
We start with the simple cases of $d=2,3$, and then generalize the
calculation to any value of~$d$.

\subsubsection{$\bm{d=2}$}
\label{sec:d2}

In two dimensions,
$h^{(2)}(x)=\left((1+x)^2 + x^2\right)^n=(1+2x+2x^2)^n$.
By the Multinomial Theorem, we have that

\[
   (1+2x+2x^2)^n=\sum_{i_1,i_2} \left[ \binom{n}{n-i_1-i_2,i_1,i_2}(2x)^{i_1}(2x^2)^{i_2}\right].
\]
Since we want to compute the coefficient of $x^{n-1}$, we require
that~$i_1{+}2i_2=n{-}1$, i.e.,~$i_1=n{-}2i_2{-}1$.
Thus,
\[
\begin{aligned}
   \displaystyle
   & l_2(n,n) =
      \sum_{i_2}\left[\binom{n}{i_2+1,n-2i_2-1,i_2}2^{n-i_2-1}\right] =
      \frac{2^n}{2} \sum_{i_2}\left[\binom{n}{i_2+1,n-2i_2-1,i_2}
           \left( \frac{1}{2} \right)^{i_2} \right]= \\
   & \frac{2^n}{\sqrt{2}}\sum_{i_2} \left[ \binom{n}{i_2+1,n-2i_2-1,i_2}
        \left( \frac{1}{\sqrt{2}} \right)^{i_2}
        \left( \frac{1}{\sqrt{2}} \right)^{i_2+1} \right] <_*
     \frac{2^n}{\sqrt{2}} \left( \frac{1}{\sqrt{2}}
        + \frac{1}{\sqrt{2}} + 1 \right)^n =
     \frac{\left( 2 (\sqrt{2}+1) \right)^n}{\sqrt{2}}.
\end{aligned}
\]
(The relation ``$<_*$'' is because the summation in its left-hand side
contains only a subset of the terms whose sum is equal to the exponential
term on the right-hand side.)
Hence, $\lambda_2 \leq 2 (\sqrt{2}+1) \approx 4.82843$.

\subsubsection{$\bm{d=3}$}

\begin{theorem}
   \label{thm:3}
   $\lambda_3 \leq 9.8073$. %-- this is wrong!: No!
   %$\lambda_3 \leq 10.016$.
\end{theorem}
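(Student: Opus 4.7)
The plan is to extract $l_3(n,n)$ from the generating function in Equation~(\ref{eq:3d-gen}) and bound it using a parametric version of the multinomial argument already used for $d=2$ in Section~\ref{sec:d2}. From~(\ref{eq:3d-gen}), $l_3(n,n)$ equals the coefficient of $x^{n-1}$ in $h^{(3)}(x) = (1 + 4x + 7x^2 + 4x^3 + x^4)^n$. Applying the Multinomial Theorem,
\[
   l_3(n,n) = \sum_{i_1 + 2i_2 + 3i_3 + 4i_4 = n-1} \binom{n}{n-i_1-i_2-i_3-i_4,\, i_1,\, i_2,\, i_3,\, i_4}\,4^{i_1}\,7^{i_2}\,4^{i_3}.
\]

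Next, I would introduce a scaling parameter $c>0$ chosen so that the constraint $i_1 + 2i_2 + 3i_3 + 4i_4 = n-1$ becomes invisible inside the sum. Write $4^{i_1} 7^{i_2} 4^{i_3} \cdot 1^{i_4} = c^{n-1} \cdot (4/c)^{i_1}(7/c^2)^{i_2}(4/c^3)^{i_3}(1/c^4)^{i_4}$, so that
\[
   l_3(n,n) = c^{n-1}\!\!\sum_{i_1+2i_2+3i_3+4i_4=n-1}\!\!\binom{n}{n-i_1-i_2-i_3-i_4,\,i_1,\,i_2,\,i_3,\,i_4}\Bigl(\tfrac{4}{c}\Bigr)^{\!i_1}\!\Bigl(\tfrac{7}{c^2}\Bigr)^{\!i_2}\!\Bigl(\tfrac{4}{c^3}\Bigr)^{\!i_3}\!\Bigl(\tfrac{1}{c^4}\Bigr)^{\!i_4}.
\]
The constrained sum is majorized by the full multinomial expansion (exactly as the $<_*$ step in Section~\ref{sec:d2}), yielding
\[
   l_3(n,n) \;\leq\; c^{n-1}\Bigl(1 + \tfrac{4}{c} + \tfrac{7}{c^2} + \tfrac{4}{c^3} + \tfrac{1}{c^4}\Bigr)^{\!n}.
\]
Taking $n$-th roots and letting $n\to\infty$, the radius-of-convergence argument gives
\[
   \lambda_3 \;\leq\; \inf_{c>0}\, f(c), \qquad f(c) := c + 4 + \tfrac{7}{c} + \tfrac{4}{c^2} + \tfrac{1}{c^3}.
\]

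Finally, I would minimize $f(c)$ by elementary calculus. Setting $f'(c) = 1 - 7/c^2 - 8/c^3 - 3/c^4 = 0$ is equivalent to the quartic $c^4 - 7c^2 - 8c - 3 = 0$, whose unique positive real root lies near $c \approx 3.139$; substituting back yields $f(c) \approx 9.8073$, establishing the claimed bound. The only real obstacle is the numerical step at the end: the optimizing $c$ is not rational, and a small verification is needed to confirm both that the critical point is a minimum on $(0,\infty)$ (clear, since $f(c)\to\infty$ at both $0^+$ and $\infty$) and that rounding up to $9.8073$ indeed strictly exceeds $f(c^*)$, which is a routine interval-arithmetic check.
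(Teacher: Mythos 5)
Your proposal is correct and is essentially the paper's own proof: the same multinomial expansion of $(1+4x+7x^2+4x^3+x^4)^n$, the same trick of inserting a free scaling parameter to absorb the constraint $i_1+2i_2+3i_3+4i_4=n-1$ before majorizing by the unconstrained multinomial sum, and the same calculus optimization. Under the substitution $c=4/b$ your objective $c+4+7/c+4/c^2+1/c^3$ equals the paper's $4f(b)$ exactly (your $c^*\approx 3.139$ corresponds to their $b_0\approx 1.2743$), so the two computations coincide, yielding $9.807295\ldots$ in both cases.
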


\begin{proof}
We repeat the calculation in the same manner as above.
\noindent
\[
\begin{aligned}
   \displaystyle
   h^{(3)}(x)=&\left((1+x)^4 + x^2\right)^n=(1+4x+7x^2+4x^3+x^4)^n= \\
   & \sum_{i_1,i_2,i_3,i_4} \left[ \binom{n}{(n-\sum_{j=1}^4i_j),i_1,i_2,i_3,i_4} 4^{i_1}7^{i_2}4^{i_3}
        x^{i_1+2i_2+3i_3+4i_4}\right].
\end{aligned}
\]

Similarly to the 2-dimensional case, we require
that~$i_1{+}2i_2{+}3i_3{+}4i_4=n{-}1$, that
is, \linebreak $i_1{=}n{-}1{-}2i_2{-}3i_3{-}4i_4$.
Substituting~$i_1$ in the right-hand side of the equality above, we obtain
\[
\begin{aligned}
   \displaystyle
   & l_3(n,n) =
      \sum \left[ \binom{n}{i_2+2i_3{+}3i_4{+}1,n{-}1{-}2i_2{-}3i_3{-}4i_4,i_2,i_3,i_4}
         4^{n-1-2i_2-3i_3-4i_4}7^{i_2}4^{i_3} \right].
\end{aligned}
\]
Therefore, by the Multinomial Theorem, we have that
\[
\begin{aligned}
   \displaystyle
   & l_3(n,n) =
      \frac{4^n}{4} \sum_{i_2,i_3,i_4} \left[
         \binom{n}{i_2+2i_3{+}3i_4{+}1,n{-}1{-}2i_2{-}3i_3{-}4i_4,i_2,i_3,i_4}
         \left( \frac{7}{4^2} \right)^{i_2} \left( \frac{4}{4^3} \right)^{i_3}
         \left( \frac{1}{4^4} \right)^{i_4} \right] < \\
   & \qquad
     \frac{4^n}{4} \left( \frac{7}{4^2}+\frac{1}{4^2}+\frac{1}{4^4}+1 +1\right)^n =
      \frac{1}{4} \left( \frac{641}{64} \right)^n.
\end{aligned}
\]
Thus, $\lambda_3 \leq \frac{641}{64} \approx 10.016$, already improving
significantly on the known upper bound of $\lambda_3 \leq 12.2071$ (see
Section~\ref{sec:prev}).
However, we can do better than that.
Let~$b > 0$ be some constant, whose value will be specified later,
and rewrite the multinomial expression above as
\[
\begin{aligned}\displaystyle &
l_3(n,n) =
\frac{4^n}{4}
\sum_{i_2,i_3,i_4} \left[
\binom{n}{i_2+2i_3{+}3i_4{+}1,n{-}1{-}2i_2{-}3i_3{-}4i_4,i_2,i_3,i_4}
\underbrace{
	\left( \frac{7}{\left(b\frac{4}{b}\right)^2} \right)^{i_2}
	\left( \frac{4}{\left(b\frac{4}{b}\right)^3} \right)^{i_3}
	\left( \frac{1}{\left(b\frac{4}{b}\right)^4} \right)^{i_4}
}_{c(b)}\right],
\end{aligned}
\]
and rearrange the three terms in~$c(b)$ as follows.

\[
\begin{aligned}\displaystyle &
c(b)=\left(\frac{1}{b^2}\right)^{i_2}
\left(\frac{7}{\left(\frac{4}{b}\right)^2}\right)^{i_2}
\left(\frac{1}{b^3}\right)^{i_3}
\left(\frac{4}{\left(\frac{4}{b}\right)^3}\right)^{i_3}
\left(\frac{1}{b^4}\right)^{i_4}
\left(\frac{1}{\left(\frac{4}{b}\right)^4}\right)^{i_4} =\\& \quad
\left(\frac{1}{b}\right)^{i_2}
\left(\frac{1}{b}\right)^{i_2}
\left(\frac{7}{\left(\frac{4}{b}\right)^2}\right)^{i_2}
\left(\frac{1}{b^2}\right)^{i_3}
\left(\frac{1}{b}\right)^{i_3}
\left(\frac{4}{\left(\frac{4}{b}\right)^3}\right)^{i_3}
\left(\frac{1}{b^3}\right)^{i_4}
\left(\frac{1}{b}\right)^{i_4}
\left(\frac{1}{\left(\frac{4}{b}\right)^4}\right)^{i_4} =\\& \quad
\left(\frac{1}{b}\right)^{i_2}
\left(\frac{7}{ \frac{16}{b} }\right)^{i_2}
\left(\frac{1}{b}\right)^{2i_3}
\left(\frac{4}{\frac{4^3}{b^2}}\right)^{i_3}
\left(\frac{1}{b}\right)^{3i_4}
\left(\frac{1}{\frac{4^4}{b^3}}\right)^{i_4}=
\left(\frac{1}{b}\right)^{i_2+2i_3+3i_4}
\left(\frac{7}{ \frac{16}{b} }\right)^{i_2}
\left(\frac{4}{\frac{4^3}{b^2}}\right)^{i_3}
\left(\frac{1}{\frac{4^4}{b^3}}\right)^{i_4}.
\end{aligned}
\]
Thus,
\[
\begin{aligned}\displaystyle &
l_3(n,n) =\\& \frac{4^n}{4}
\sum_{i_2,i_3,i_4} \left[
\binom{n}{i_2{+}2i_3{+}3i_4{+}1,n{-}1{-}2i_2{-}3i_3{-}4i_4,i_2,i_3,i_4}
\left(\frac{1}{b}\right)^{i_2+2i_3+3i_4}
\left(\frac{7}{ \frac{16}{b} }\right)^{i_2}
\left(\frac{4}{\frac{4^3}{b^2}}\right)^{i_3}
\left(\frac{1}{\frac{4^4}{b^3}}\right)^{i_4} \right] \\&
<
4^n \left(\frac{1}{b} + 1 + \frac{7b}{16} + \frac{b^2}{4^2} + \frac{b^3}{4^4} \right)^n,
\end{aligned}
\]
where the last relation is again due to the Multinomial Theorem and due to
the partial summation.

The heart of our trick is that the partial summation allows us to
choose the value of~$b$ that minimizes the sum of the chosen summands
(by assigning appropriate weights to the five components).  Define
\[
f(b)=\frac{1}{b} + 1 + \frac{7b}{16} + \frac{b^2}{4^2} + \frac{b^3}{4^4}.
\]
Our goal, then, is to choose~$b$ so as to minimize~$f(b)$.
% Consider
% $f'(b)=-\frac{1}{b^2}+\frac{7}{16}+\frac{b}{8}+\frac{3b^2}{256}$ and
% $f''(b)=\frac{2}{b^3}+\frac{1}{8}+\frac{3b}{128}$.
Elementary calculus shows that~$f(b)$ assumes its minimum
at~$b_0=1.274306378$ and that $f(b_0)=2.451823893$.
Recall that~$l_3(n,n) < 4^n f^n(b)$ for \emph{any}~$b$,
in particular, for~$b=b_0$.  Hence, finally,
\[
   l_3(n,n) < 4^n \cdot 2.451823893^n = 9.807295572^n.
\]
(Had we chosen~$b=1$, we would have obtained
again the bound~$\lambda_3 \leq 10.016$.)
The claim follows.
\end{proof}

\subsubsection{General value of~$\bm{d}$}

\begin{theorem}
   $\lambda_d \leq (2d-2)e + 1/(2d-2)$.
\end{theorem}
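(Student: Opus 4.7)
The plan is to bound the coefficient $l_d(n,n) = [x^{n-1}]\,p(x)^n$ (with $p(x) := (1+x)^k + x^2$ and $k := 2d-2$) by a one-parameter family of upper bounds, and then to choose the parameter in closed form for general~$d$. Since every polycube of size $n$ maps injectively to a sequence in $\LC^{(d)}$ of weight $x^n y^n$, we already have $A_d(n) \le l_d(n,n)$, and therefore $\lambda_d \le \limsup_{n\to\infty} \sqrt[n]{l_d(n,n)}$, so it suffices to bound $l_d(n,n)$ from above.

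The first step is an elementary coefficient-extraction inequality. Since $p(x)$ has non-negative coefficients, so does every power $p(x)^n$, and for every real $b > 0$,
\[
    p(b)^n \;=\; \sum_{j \ge 0}\bigl([x^j]\,p(x)^n\bigr)\,b^j \;\ge\; [x^{n-1}]\,p(x)^n \cdot b^{n-1},
\]
which rearranges to $l_d(n,n) \le p(b)^n / b^{n-1}$. Taking $n$-th roots yields
\[
    \lambda_d \;\le\; \frac{(1+b)^k + b^2}{b} \qquad \text{for every } b > 0.
\]
This is the same single-variable minimization problem that was resolved numerically in the $d=3$ case (under a scaled reparameterization); what remains is to pick a closed-form $b$ valid for all~$d$.

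The key step is to set $b := 1/k$, the natural scale that keeps $(1+b)^k$ bounded by $e$. Substituting,
\[
    \lambda_d \;\le\; \frac{(1+1/k)^k + 1/k^2}{1/k} \;=\; k\bigl(1 + \tfrac{1}{k}\bigr)^k + \tfrac{1}{k} \;<\; k\,e + \tfrac{1}{k} \;=\; (2d-2)\,e + \tfrac{1}{2d-2},
\]
where the strict inequality uses $(1+1/k)^k < e$ for every positive integer $k$. This establishes the stated bound.

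There is no genuinely hard step here: the coefficient-extraction inequality is one line, and the rescaling $b = 1/k$ is essentially forced by the factor $(1+x)^k$ inside $p$. One could instead minimize $((1+b)^k + b^2)/b$ exactly (analogously to the $d=3$ analysis) to obtain a slightly sharper but messier constant, but the stated asymptotic form $(2d-2)e + 1/(2d-2)$ requires only this clean choice. The one conceptual point worth flagging is that the full multinomial expansion used for $d=3$ is unnecessary for the general bound: the one-line inequality $[x^m]P(x) \le P(b)/b^m$ captures the same content more compactly, at the cost of losing the sharper numeric constant available in low dimensions.
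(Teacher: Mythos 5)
Your proof is correct, and it arrives at the bound by what is, at bottom, the same mechanism as the paper: evaluating the weight polynomial $p(x)=(1+x)^{2d-2}+x^2$ at the point $b=1/(2d-2)$ and exploiting positivity of its coefficients. The difference is in packaging, and your packaging is genuinely cleaner. The paper expands $p(x)^n$ by the multinomial theorem, isolates the terms of total degree $n-1$, and then re-embeds that partial sum into a full weighted multinomial sum whose weights $1,\,1,\,(\binom{a}{2}+1)/a^2,\,\binom{a}{j}/a^j\le 1/j!$ are, term by term, exactly the coefficients of $p$ evaluated against powers of $1/a$ --- i.e.\ the paper is computing $p(1/a)^n$ in disguise and then bounding $\sum_{j=0}^{a}1/j!$ by $e$. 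Your one-line inequality $[x^{m}]P(x)\le P(b)/b^{m}$ for polynomials with non-negative coefficients collapses all of that into a single step, and your intermediate bound $k(1+1/k)^k+1/k$ is in fact marginally sharper than the paper's $k\sum_{j=0}^{k}1/j!+1/k$ before both are relaxed to $ke+1/k$. Your observation that the factor $b^{1/n}$ from $b^{n-1}$ versus $b^{n}$ disappears in the $n$-th-root limit is handled correctly, and your remark that the same one-parameter family recovers the $d=3$ optimization is accurate: the paper's ``trick'' of introducing $b$ and rearranging $c(b)$ in Theorem~\ref{thm:3} is precisely the substitution $x\mapsto b$ followed by minimizing $p(b)/b$. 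In short, nothing is missing; you have simply identified the invariant content of the paper's computation and stated it directly.
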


\begin{proof}
The proof for a general value of~$d>3$ is similar to that for $d=2,3$.
For simplicity, let us fix~$a=2(d-1)$. We have that
\[
\begin{aligned}\displaystyle &
h^{(d)}(x)=
\left((1+x)^{a} + x^2\right)^n=
\left(1+ax+\left(\binom{a}{2}+1\right)x^2+\sum_{j=3}^{a}\binom{a}{j}x^j\right)^n=\\&
\sum_{i_1,\dots,i_a} \left[ \displaystyle\binom{n}{(n-\sum_{j=1}^{a}i_j),
i_1,\dots,i_{a}}
a^{i_1}
\left( \binom{a}{2} +1\right)^{i_2}
\left( \prod_{j=3}^{a}\binom{a}{j}^{i_j}\right)
x^{i_1+2i_2+\dots+ai_a} \right].
\end{aligned}
\]
Again, we require that~$i_1+2i_2+\dots+ai_{a}=n{-}1$, that
is,~$i_1=n-1-\sum_{j=2}^{a}(j\cdot i_j)$.  Thus,

\[\begin{aligned}
\displaystyle &
l_d(n,n){=}
\sum_{i_2,{\dots},i_a}  \left[
\binom{n}{( \sum_{j{=}2}^{a}(j-1)i_j{+}1), (n{-}1{-}\sum_{j{=}2}^{a}(j{\cdot} i_j))
,i_2,{\dots},i_{a}}
a^{n{-}1{-}\sum_{j{=}2}^{a}(j\cdot i_j)}
\left( \binom{a}{2}{+}1\right)^{i_2}
\left( \prod_{j=3}^{a}\binom{a}{j}^{i_j} \right) \right].
\end{aligned}
\]
Therefore,
\[
\begin{aligned}\displaystyle &
l_d(n,n)=
\frac{a^n}{a}
\sum_{i_2,\dots,i_a} \left[
\binom{n}{( \sum_{j=2}^{a}(j{-}1)i_j{+}1), (n{-}1{-}\sum_{j=2}^{a}(j\cdot i_j))
,i_2,\dots,i_{a}}
\frac{\left( \binom{a}{2} +1\right)^{i_2}}{a^{2i_2}}
 \prod_{j=3}^{a}
 \frac{\binom{a}{j}^{i_j}}{a^{ji_j}} \right] = \\
 &
 \frac{a^n}{a}
 \sum_{i_2,\dots,i_a} \left[
 \binom{n}{( \sum_{j=2}^{a}(j{-}1)i_j{+}1), (n{-}1{-}\sum_{j=2}^{a}(j\cdot i_j))
 	,i_2,\dots,i_{a}}
 \left(\frac{\binom{a}{2} +1}{a^{2}}\right)^{i_2}
 \prod_{j=3}^{a} \left( \frac{\binom{a}{j}}{a^{j}} \right)^{i_j} \right].
 \end{aligned}
 \]
It is well-known that for all values of~$m$ and~$k$, such that~$1 \leq k \leq m$,
we have that~$\binom{m}{k} \le \frac{m^k}{k!}$.
Hence, for $j=3,\dots,a$, we have
that~$\frac{\binom{a}{j}}{a^{j}} \leq \frac{1}{j!}$.
It is also known
that~$e=\sum_{j=0}^{\infty} \frac{1}{j!}$.
Therefore,
\[
\begin{aligned}\displaystyle &
l_d(n,n) \leq
%\frac{a^n}{a}
%\sum \binom{n}{( \sum_{j=2}^{a}(j{-}1)i_j{+}1), n{-}1{-}\sum_{k=2}^{a}ki_k
%	,i_2,\dots,i_{a}}
%\prod_{k=3}^{a-1} \left( \frac{\binom{a}{k}}{a^{k}} \right)^{i_k}
%\left(\frac{\binom{a}{2} +1}{a^{2}}\right)^{i_2}  \\&
\frac{a^n}{a}
\sum_{i_2,\dots,i_a} \left[
\binom{n}{( \sum_{j=2}^{a}(j{-}1)i_j{+}1), (n{-}1{-}\sum_{j=2}^{a}(j\cdot i_j))
	,i_2,\dots,i_{a}}
\left( \frac{1}{2} + \frac{1}{a^2} \right)^{i_2}
\prod_{j=3}^{a} \left( \frac{1}{j!} \right)^{i_j} \right]
\\&
< a^n \left( 1+1+\left(\frac{1}{2}+\frac{1}{a^2}\right)+\sum_{j=3}^{a}\frac{1}{j!} \right)^n
=
a^n\left(\frac{1}{a^2}+\sum_{j=0}^a\frac{1}{j!}\right)^n < (ae+1/a)^n
\end{aligned}
\]
%It is easy to see that $ \frac{1}{2a} + \frac{1}{a^3} \le \frac{1}{2}$.
(The relation ``$<$'' above is again because the summation in its left-hand side
contains only a subset of the terms whose sum is equal to the exponential
term on the right-hand side, and
the factor $1/a$ in its left-hand side.)
Consequently, $\lambda_d \leq (2d-2)e + \frac{1}{2d-2}$.

\end{proof}

This compares well with the conjecture
that~$\lambda_d \sim (2d-3)e$~\cite{BBR10},
and improves upon Eden's upper bound of~$(2d-1)e$ (which can actually be
shown to be~$(2d-1.5)e$; see Section~\ref{sec:prev}).
For example, for $d=4$, we obtain~$\lambda_4 \leq 15.1284$,
whereas the bound provided by the generalized Eden's method is~17.6514.
%Moreover, we can also apply the same process as in
%TODO:move
%Section~\ref{subsec:construct_C_i} for building sets of increasingly
%larger twigs for any fixed value of~$d>2$, and improve the upper bound
%further.

\begin{comment}
$$
\left((1+x)^{2(d-1)} + x^2\right)^n =
\sum_{i=0}^{n} \binom{n}{i}(1+x)^{2(d-1)i}x^{2(n-i)} =
\sum_{i=0}^{n} \binom{n}{i} x^{2(n-i)} \sum_{j=0}^{2(d-1)i} \binom{2(d-1)i}{j} x^j
$$

$2n-2i+2di-2i < n-1 \rightarrow 2n-4i+2di < n-1 \rightarrow 4i-2di > n+1$
for $d>3$, $4i-2di < 0$, thus we obtain
$ i < \frac{n+1}{2(d-2)}$

we need to consider values of $j$ satisfying: $2(n-i)+j=n-1$, namely, $j=2i-n-1$. Thus we rewrite the last expression as

$$
\sum_{i=\lceil \frac{n+1}{2(d-2)}\rceil}^{n} \binom{n}{i} \binom{2(d-1)i}{2i-n-1} x^{n-1}
$$

$$i=n: \binom{2(d-1)n}{n-1}$$
$$i=n-1: n\binom{2(d-1)(n-1)}{n-3}$$
$$i=n-2: \binom{n}{n-2}\binom{2(d-1)(n-2)}{n-5}$$
$$i=n-3: \binom{n}{n-3}\binom{2(d-1)(n-3)}{n-7}$$
\end{comment}

\section{Further Improvements of the Upper Bounds on \bm{$\lambda_2$} and \bm{$\lambda_3$}}

\begin{comment}
\begin{figure}
 	\centering
 	\scalebox{0.5}{\input{single.pdf_t}}
 	\caption{A twig ($\bar{s}$) containing a single open cell}
 	\label{fig:liv-cell}
\end{figure}
\end{comment}

\figbox{
 	\scalebox{0.5}{\input{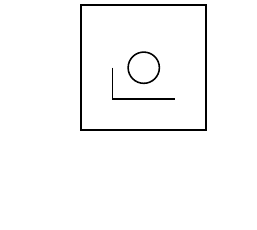_t}}
}{figure}{fig:liv-cell}{A twig with one open cell}
Klarner and Rivest~\cite{KR73} developed their idea further,
noting that it is possible to start with a configuration containing a single open cell (as shown in Figure~\ref{fig:liv-cell}),
and keep adding twigs and updating the configuration,
%noting that the same
% spanning trees of polyominoes may be viewed as sequences of elements taken
% from a set of larger twigs.
 to construct from~$\LC$ %the basic twigs
 increasingly larger
 sets~$\C_1,\C_2,\C_3,\dots$, where the set~$\C_i$ contains all possible twigs with~$i$
 black cells (and possibly some white cells) or less than~$i$ black cells (and no
 white cells).
 In particular, $\C_1=\LC$.
 The process for building all twigs with $i$ black cells is as follows:

%\subsection{Constructing~\boldmath{$\C_i$}}
 %\label{subsec:construct_C_i}

 %Since all polyominoes of size~$i$ can be constructed with $i$
 %The procedure is the following.
 \begin{enumerate}%[label={\emph{Step} \arabic*.},leftmargin=2cm]
 	\item Set $\C_i := \emptyset$, $B := \{\bar{s}\}$
 	(the twig shown in~Figure~\ref{fig:liv-cell}, and
 	$W_i(x,y) := 0$;
 	\item \texttt{If} $B = \emptyset$,
 	\texttt{then} output~$\C_i$ and halt;
 	\item Remove some twig~$T$ from $B$;
 	\item \texttt{If}~$T$ contains no open cells or exactly~$i$ dead cells,
 	\texttt{then} add~$T$ to~$\C_i$, set $W := W+w(T)$, and
 	\texttt{goto} Step~2;
 	\item \texttt{For} $j=1,\dots,5$ \texttt{do} \\
 	\mbox{} \quad Set $T_j := T * L_j$; \\
 	\mbox{} \quad \texttt{If}~$T_j$ meets condition~\cast\ below,
 	\texttt{then} add~$T_j$ to~$B$; \\
 	\texttt{od}
 	\item \texttt{Goto} Step~2.
 \end{enumerate}

 \noindent \textbf{Condition \cast}:
 %\begin{itemize}
\emph{
None of the cells of~$L_i$ (except of the black cell) overlap with any of the cells (black or white) of~$T$ nor with any of the cells of~$T$ marked with~$\mathsf{X}$. %; and
 	%\item None of the forbidden cells of~$L_i$ overlap with any cells of~$T$.
}
% \end{itemize}

Condition \cast\ guarantees that adding a new twig to the configuration will
not cause cells to overlap. %TODO

\begin{observation}
   $A_2(i) \leq |C_i|$
\end{observation}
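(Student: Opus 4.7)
The plan is to exhibit an injection from the set of polyominoes of size $i$ into $\C_i$. The map is essentially the same as the one already established in the excerpt: to each polyomino $P$ of size $i$ we associate its unique $\LC$-encoding $L_P = (\ell_1, \ldots, \ell_i) \in \LC^i$, and then form the composite twig $T_P := \bar{s} * \ell_1 * \ell_2 * \cdots * \ell_i$ built by successively applying the $*$-operation. What has to be shown is that $T_P$ is precisely one of the twigs that the algorithm places in $\C_i$, and that $P \neq P'$ yields $T_P \neq T_{P'}$.

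First I would verify that $T_P$ is produced by the enumeration procedure, i.e., that none of the intermediate partial twigs $T_P^{(k)} := \bar{s} * \ell_1 * \cdots * \ell_k$ is discarded in Step~5. This amounts to showing that Condition \cast\ is satisfied at every step $k$. Here the key point is that $L_P$ is the encoding of an actual polyomino, so when the BFS traversal assigns $\ell_{k+1}$ to the next white cell $u$, the cells that $\ell_{k+1}$ places around $u$ (black, white, or forbidden) agree with the true statuses of the corresponding cells of $P$. A cell previously marked white or black in $T_P^{(k)}$ belongs to $P$, and a cell marked $\mathsf{X}$ cannot belong to $P$ by the definition of the encoding procedure; hence none of the newly placed cells of $\ell_{k+1}$ (other than its root, which is $u$ itself) collides with either. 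Therefore \cast\ holds, $T_P^{(k+1)}$ is enqueued in $B$, and by induction $T_P$ is generated.

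Next I would argue that $T_P$ is indeed placed into $\C_i$ (Step~4) rather than expanded further. By Corollary~\ref{cor:xnyn} the weight of the sequence $L_P$ is $x^i y^i$; since each application of $*L_j$ turns exactly one white cell into a black (dead) cell, $T_P$ contains exactly $i$ black cells. At this moment Step~4 catches $T_P$ and adds it to $\C_i$ (regardless of whether white cells remain), so $T_P \in \C_i$.

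Finally, injectivity of $P \mapsto T_P$ follows from Observation~\ref{ob:diff}: different polyominoes give different sequences $L_P$, and distinct sequences produce distinct composite twigs because the geometric layout of the twig records, cell by cell, the choices $\ell_1, \ldots, \ell_i$ (one can read off each $\ell_k$ from $T_P$ by replaying the BFS). Consequently the map is an injection from the $A_2(i)$ polyominoes into $\C_i$, and the inequality $A_2(i) \leq |\C_i|$ follows. The only subtle step is the verification of Condition \cast\ at every BFS stage; the rest is bookkeeping.
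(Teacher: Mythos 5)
Your proof is correct and follows essentially the same route as the paper, which justifies the observation in one sentence by noting that every polyomino of size $i$ is built by \emph{some} valid sequence of $i$ twigs, that the algorithm enumerates \emph{all} such valid sequences, and (via Observation~\ref{ob:diff}) that distinct polyominoes yield distinct sequences. Your write-up merely makes explicit the two points the paper leaves implicit --- that the encoding of a genuine polyomino never violates Condition \cast{}, and that after $i$ steps the resulting twig has exactly $i$ dead cells and is therefore harvested into $\C_i$ --- so it is a faithful, more detailed version of the same argument.
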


Indeed, this relation is trivially justified by the facts that
every polyomino of size~$i$ can be built with \emph{some} sequence
of~$i$ twigs, and the algorithm above constructs \emph{all} valid sequences
of~$i$ twigs.

  \figbox[l]{
  	\scalebox{0.20}{\input{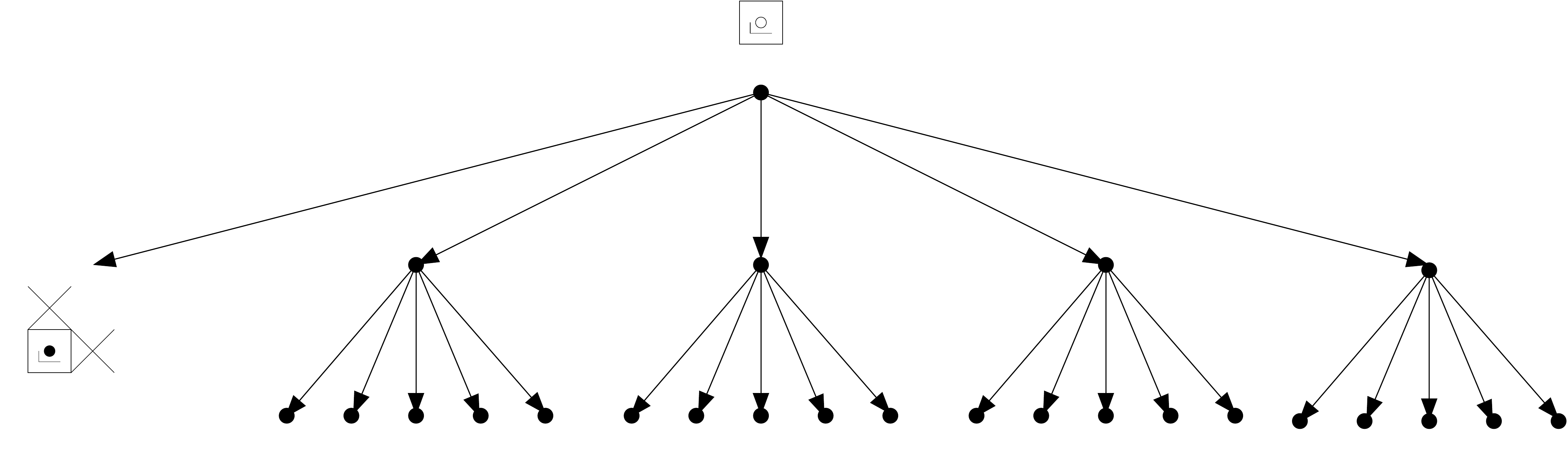_t}}
  }{figure}{fig:branching}
  {The tree modeling the algorithm that generates~$\C_i$.
  	The root~$r$ is a twig with one open cell; its $\LC$-context is
  	shown in Fig.~\ref{fig:Lcontexts}(a).
  	For~$i,j=1,\dots,5$, set~$T_i=L_i= r * L_i$, and~$T_{i,j} = T_i * L_j$.
  	The twig~$T_1$ is a leaf because it has no open cells.
  }
 The algorithm above can be viewed as a breadth-first-search traversal on an
 infinite tree (see Figure~\ref{fig:branching}) rooted at the twig~$\bar{s}$
 (Figure~\ref{fig:liv-cell}).
 All other vertices of the tree are twigs that can be ``grown'' from its
 root by repeatedly applying the operation~`$*$' (defined in Section~\ref{sec:prev}).
 The tree contains an edge directed from a twig~$T_1$ to another twig~$T_2$
 if~$T_2=T_1*L_i$ (for some~$L_i \in \LC$).
 Hence, each vertex of the tree has at most five outgoing edges, and its
 leaves are all twigs which have no open cells.

 The key idea is that, given a polyomino~$P$, it is possible to encode~$P$
 with a sequence of elements of~$\C_i$, for any~$i \geq 1$, and
 any such sequence can be converted into a
 sequence of elements of~$\LC$.

\begin{observation}
	\label{ob:ci}
 The set of converted sequences of elements of~$\C_{i+1}$ is a proper subset of
 the set of converted sequences of elements of~$\C_i$,
 since the former contains less invalid sequences (those that do not
 represent polyominoes) than the latter.
\end{observation}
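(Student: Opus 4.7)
The plan is as follows. Let $\phi : \C_j^* \to \LC^*$ be the unfolding map that replaces each element of~$\C_j$ by its defining sequence of $\LC$-twigs, and write $S_j := \phi(\C_j^*)$, so the claim becomes $S_{i+1} \subsetneq S_i$.

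For the inclusion $S_{i+1} \subseteq S_i$, I would re-fold any $\C_{i+1}$ sequence into a $\C_i$ sequence with identical $\LC$-unfolding. Given $(T'_1,\dots,T'_m) \in \C_{i+1}^*$ with unfolding $\sigma \in \LC^*$, simulate the $\C_i$ construction algorithm on $\sigma$: start from $\bar{s}$, feed in the $L$-twigs of $\sigma$ one at a time, and whenever the partial growth meets the $\C_i$ stopping rule (no open cells or exactly $i$ dead cells) close off a $\C_i$ chunk and begin a new growth at the next BFS-frontier cell. Since $\sigma$ arose from a successful $\C_{i+1}$ growth, condition~\cast{} was satisfied at every $L$-application along $\sigma$, hence along every prefix, so each closed-off chunk is a legitimate element of $\C_i$. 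Reassembling the unfoldings of these chunks recovers $\sigma$, giving $\sigma \in S_i$.

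For the proper containment, the driving observation is that $\C_{i+1}$ invokes~\cast{} over a window one step longer than $\C_i$ does: a pair $(T, L_j)$ with $T \in \C_i$ and $T \ast L_j$ violating \cast{} is pruned during the $\C_{i+1}$ construction, but under the $\C_i$ algorithm it straddles the boundary between two adjacent $\C_i$ chunks and is never compared under \cast{}. I would exhibit a separating $\sigma$ by selecting any $\C_i$ twig $T$ with at least one open cell, together with some $L_j \in \LC$ such that applying $L_j$ to $T$'s oldest open cell forces a cell of $L_j$ to overlap with a dead or forbidden cell of $T$; such configurations are readily found by inspection of Figure~\ref{fig:Ltwigs} already for very small $i$. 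The $\LC$ sequence $\phi(T) \cdot \phi(L_j)$, suitably extended by a tail that closes off the second $\C_i$ chunk, lies in $S_i$ by construction but not in $S_{i+1}$, since every $\C_{i+1}$ sequence with that unfolding would require the pruned configuration $T \ast L_j$ to arise as a subtwig.

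The main obstacle is the re-folding argument of the inclusion: one must verify that when a $\C_{i+1}$ twig with exactly $i+1$ dead cells is split across a $\C_i$ chunk boundary, the single carried-over $L$-twig can always be absorbed as the opening $L$-application of the next $\C_i$ chunk. This should follow because the $\C_i$ construction enumerates every valid growth of at most $i$ applications of~`$\ast$', so any starting $L$-twig extends to some $\C_i$ twig that respects condition~\cast{}.
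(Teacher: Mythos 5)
The paper states this observation essentially without proof (the ``since\dots'' clause is a restatement, not an argument), so the question is whether your argument is sound, and the inclusion half has a genuine gap. Your key step asserts that ``since $\sigma$ arose from a successful $\C_{i+1}$ growth, condition \cast{} was satisfied at every $L$-application along $\sigma$.'' This is false. Condition \cast{} is enforced only \emph{within} the growth of each individual $\C_{i+1}$-twig; a sequence $(T'_1,\dots,T'_m)$ of elements of $\C_{i+1}$ is an arbitrary sequence, and consecutive twigs need not fit together geometrically --- that is exactly why the generating function over-counts. Whenever some $T'_j$ has $i+1$ dead cells, your re-chunking necessarily produces a $\C_i$-chunk that straddles the boundary between $T'_j$ and $T'_{j+1}$, and the \cast-checks inside that straddling chunk (the leftover $L$-twig of $T'_j$ against the opening $L$-twigs of $T'_{j+1}$) were never performed anywhere. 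Your closing paragraph names this as ``the main obstacle'' but then dismisses it with the remark that any starting $L$-twig extends to \emph{some} valid $\C_i$-twig; what is needed is that the \emph{specific} continuation dictated by $\sigma$ is valid, and nothing guarantees that.

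In fact the unrestricted inclusion $S_{i+1}\subseteq S_i$ is simply false. Take $i=2$: since $|\C_2|=21=1+4\cdot 5$ and $|\C_3|=93=3+18\cdot 5$, no pruning occurs at levels $2$ and $3$, so $L_2\ast L_2\ast L_2\in\C_3$ and its unfolding $(L_2,L_2,L_2)$ lies in $S_3$. But the unfoldings of the elements of $\C_2$ are $(L_1)$ and the twenty pairs $(L_j,L_k)$ with $j\neq 1$, so every concatenation of them having total length $3$ decomposes as $1{+}1{+}1$, $1{+}2$ or $2{+}1$ and therefore contains the letter $L_1$; hence $(L_2,L_2,L_2)\notin S_2$. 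Any correct version of the observation must therefore be restricted --- for instance to sequences of weight $x^ny^n$, which is all that the intended application $c_{i+1}(n,n)\le c_i(n,n)$ requires --- and the proof must exploit that restriction essentially, which your argument does not. (Your properness argument is sound in spirit: a pruned pair $T\ast L_j$ does yield a word lying in $S_i$ but not in $S_{i+1}$, once an inclusion statement is actually in place.)
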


 Similarly to~$\LC$, every twig
 $T\in C_i$ is assigned a weight~$w(T) := x^a y^b$ (where~$a$ denotes the number of
 cells in~$T$ minus~1, and~$b$ denotes the number of black cells in~$T$),
and, thus, it can be shown that
 every polyomino of size~$n$ gives rise to a unique
 sequence of elements of~$\C_i$ of weight~$x^n y^n$.
% , but not all such
% sequences encode legal polyominoes.
Letting~$W_i(x,y) = \sum_{T \in \C_i} w(T)$,
we can plug $W_i(x,y)$ in the generating function in
Equation~(\ref{eq:sum-of-weights})
and obtain~$\sum_{m,n} c_i(m,n) x^m y^n{=}x/(1{-}W_i(x,y))$.
Again, we are interested in
 the diagonal term~$c_i(n,n)$
 of the series expansion~$\sum_{m,n} c_i(m,n) x^m y^n$.
 %because it
 % represents configurations that contain no living cells.
 %which a superset of the set of all polyominoes.
 Due to Observation~\ref{ob:ci},
 the sets~$\C_1,\C_2,\dots$ yield a sequence of improving (decreasing)
 upper bounds on~$\lambda_2$.
 %This method has the advantage of being amenable to systematic improvement.
 Thus, as~$i$ increases, the upper bound decreases. %$1/\sigma_i$ decreases. %, obtaining a better upper bound
 %on~$\lambda_2$.
 Therefore,
 the goal is to compute an upper bound on~$c_i(n,n)$.
 %the goal is to compute the growth constant of the number of
 % sequences of twigs taken from~$\C_i$.
 The main computational challenge in this approach is to construct
 algorithmically the sets~$\C_i$ (in order to compute $W_i(x,y)$), as $|\C_i|$ is increasing exponentially
 with~$i$, like~$A_2(i)$ does.
 Klarner and Rivest carried their approach to the limit of the resources
 they had available at the time, and computed~$\C_i$ up to $i=10$.
 %The value $1/\sigma_{10} \approx 4.6495$ has remained the best known upper
 %bound on~$\lambda_2$ for more than almost half a century.
 Their computations are summarized in Table~\ref{tab:results_2d}.

\subsection{Two Dimensions}

\label{sec:ub_imp}

\begin{theorem}
	$\lambda_2 \leq 4.5252$. \qed
\end{theorem}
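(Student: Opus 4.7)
The plan is to execute the Klarner--Rivest iterative construction to a value of $i$ substantially larger than the $i=10$ reached in 1973, and then to bound the diagonal coefficient of the resulting generating function $x/(1-W_i(x,y))$ using the same Multinomial-Theorem-with-free-parameter technique that proved Theorem~\ref{thm:3}. Concretely, I would implement the breadth-first traversal of the tree of Figure~\ref{fig:branching}, starting from the seed $\bar{s}$ of Figure~\ref{fig:liv-cell}, applying the five extensions $T*L_j$ subject to Condition~\cast{}, and emitting every produced twig into $\C_i$ as soon as it reaches $i$ black cells or runs out of open cells. Along the way I accumulate $W_i(x,y)=\sum_{T\in\C_i}w(T)$ as a bivariate polynomial. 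By Observation~\ref{ob:ci}, the resulting upper bound on $\lambda_2$ is monotonically decreasing in $i$, so the numerical target $4.5252$ should simply dictate how large $i$ must be pushed.

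Once $W_i(x,y)$ is in hand, the analysis follows the template of Section~\ref{sec:d2} and Theorem~\ref{thm:3}. Write $W_i(x,y)=\sum_{a,b} n_{a,b}\, x^a y^b$, expand $W_i(x,y)^n$ by the Multinomial Theorem, extract the coefficient of $x^{n-1}y^n$ (that is, $c_i(n,n)$), introduce a positive parameter $b$ to re-weight the contributing monomials, bound the restricted multinomial sum by the full Multinomial expansion, and finally minimise the resulting single-variable function in $b$ numerically to obtain an upper bound of the form $c_i(n,n)\le \rho_i^n$. The required $\rho_i\le 4.5252$ then yields the claim via the dominance $A_2(n)\le c_i(n,n)$, which holds for every $i\ge 1$.

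The principal obstacle, as Klarner and Rivest already identified, is purely computational: $|\C_i|$ grows roughly like $A_2(i)$, so both memory and time blow up exponentially with $i$. The implementation choices that I expect to matter most are (i)~a compact bit-mask representation of black, white, and forbidden cells on a bounded rectangular patch so that the test of Condition~\cast{} is a handful of bit-wise AND operations; (ii)~out-of-core or streaming storage of the intermediate frontier, since the BFS queue is itself the dominant memory consumer; and (iii)~embarrassingly parallel exploration of independent subtrees of Figure~\ref{fig:branching} to distribute the work across cores. A secondary but nontrivial issue is that the polynomial $W_i(x,y)$ itself has many monomials; the reweighting optimisation in the second step must therefore be performed with exact rational arithmetic on $W_i(1/b,b)$ and its derivatives, then refined numerically around the minimum. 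None of these steps is mathematically deep, but together they are what separates our bound of $4.5252$ from Klarner and Rivest's $4.6495$.
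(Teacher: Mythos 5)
Your computational plan---the breadth-first traversal of the twig tree from the seed $\bar{s}$, accumulating $W_i(x,y)$ on the fly rather than storing $\C_i$, and parallelizing over independent subtrees---is exactly what the paper does (it pushes the construction to $i=21$ on a cluster). The problem is in your analytical step. The multinomial-with-free-parameter argument of Theorem~\ref{thm:3} works only because there $W(x,y)=y\,h(x)$ is \emph{linear} in $y$: the coefficient of $x^n y^n$ in $\sum_k x\,W^k$ then comes solely from the term $k=n$, and the single remaining linear constraint on the $x$-exponents is absorbed by one reweighting parameter $b$. For $i\ge 2$ the polynomials $W_i(x,y)$ contain monomials $x^a y^b$ with $b\ge 2$ (already $W_2(x,y)=4x^4y^2+8x^3y^2+\cdots+y$), so
\[
c_i(n,n)=[x^{n-1}y^n]\sum_{k=0}^{n}W_i(x,y)^k
\]
receives contributions from \emph{every} $k\le n$; ``expand $W_i(x,y)^n$ and extract the coefficient of $x^{n-1}y^n$'' is simply not $c_i(n,n)$, and a single parameter $b$ cannot enforce the two now-independent exponent constraints. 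As written, this step fails.

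The paper takes a different route for the analysis: following Klarner and Rivest, it computes the radius of convergence $r$ of the diagonal $\sum_n c_i(n,n)z^n$ of the rational function $x/(1-W_i(x,y))$ via the substitution $x=s$, $y=z/s$ and the discriminant of the denominator in $s$ (the Maple code of Appendix~\ref{app:maple}), concluding $\lambda_2\le 1/r$ with $1/r=4.525128\ldots$ at $i=21$. If you prefer to stay elementary and close to your reweighting idea, the correct generalization is a \emph{two}-parameter saddle bound: since all coefficients of $x/(1-W_i(x,y))$ are nonnegative, for any $x_0,y_0>0$ with $W_i(x_0,y_0)<1$ one has $c_i(n,n)\le x_0^{-n}y_0^{-n}\cdot x_0/(1-W_i(x_0,y_0))$, hence $\lambda_2\le\inf\{1/(x_0y_0):W_i(x_0,y_0)\le 1\}$. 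This recovers your $b$-trick exactly when $W=y\,h(x)$ (optimize $h(x)/x$) and is what you should minimize numerically for the general $W_{21}$; with that repair, your argument would deliver the theorem.
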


We implemented the algorithm described in the previous section for constructing the sets $\C_i$ in a parallel C++ program, using
\texttt{Maple} (see Appendix~\ref{app:maple}) to derive an upper bound on~$c_i(x,y)$.
Since the size of the set~$\C_i$ is growing exponentially with~$i$,
we did not keep it in memory.
Instead, we accumulated the weights of the twigs as in Step~4 in the
algorithm.
The ``\texttt{for} loop'' in Step~5 can be run in parallel since there
are no dependencies between the twigs~$T_1,\dots,T_5$, as illustrated in
Figure~\ref{fig:branching}.
\begin{comment}
\begin{figure}
	\centering
	\scalebox{0.29}{\input{branching.pdf_t}}
	%\begin{tabular}{L{2.5cm} L{2cm} C{1cm} R{2.2cm} R{2.5cm}}
	%	{ } & $\bm{\vdots}$ & $\bm{\vdots}$ & $\bm{\vdots}$ & $\bm{\vdots}$\\
	%\end{tabular}
	\caption{The tree modeling the algorithm that generates the set $\C_i$.
		The root~$r$ is a twig with one open cell and its $\LC$-context is the one shown in Figure~\ref{fig:Lcontexts}(a).
		For~$i,j=1,\dots,5$, set~$T_i=L_i=L_i * r$, and~$T_{i,j} = L_j * T_i$.
		The twig~$T_1$ is a leaf because it has no open cells.
	}
	\label{fig:branching}
\end{figure}
\end{comment}
We used \texttt{OpenMP} and \texttt{OpenMPI} to run the program in parallel
on
%MIRA: I COMMENTED THIS
%``Tamnun'' (octopus in English)---
a high-performance computer cluster at the Technion.
We used~33 computing nodes,
each having 12 cores, for a total of~396 cores.
% We had a limit of 24 hours for a single run of the program,
The time for computing~$\C_{10}$ was negligible even without parallelizing
the program.
Results were systematically improved by increasing~$i$, the number of dead
cells of the twigs.
However, %the running time grew roughly exponentially with~$i$,
as the size of~$\C_i$ increases roughly by a factor of~4 as~$i$ is
incremented by~1,
constructing~$\C_{i+1}$ %and
%computing~$\sigma_{i+1}$
requires more than four times the computing power
needed to construct~$\C_i$. %and compute~$\sigma_i$.
The improved upper bound $\lambda_2 \leq 4.5252$ was obtained by using
twigs with~21 dead cells.
Computing~$\C_{21}$ took roughly seven hours.
Our results, alongside Klarner and Rivest's results, are summarized in
Table~\ref{tab:results_2d}.  The two sets of results differ
for~$i=6,\dots,10$.
We address these differences in Section~\ref{app:2d}.
The weight functions~$W_1(x,y),\dots,W_{21}(x,y)$ are provided in
Appendix~\ref{sec:GFs}.

For $i \geq 6$, the number of twigs ($|\C_i|$) we found is slightly (but
consistently) larger than the number reported by Klarner and
Rivest~\cite{KR73} (see Table~\ref{tab:results_2d}).
As a result, the value of the upper bound we computed for~$\C_{10}$ is slightly higher than
the value they reported.
Since they provided neither the computer program which generated the
sets~$\C_i$, nor the functions~$W_6(x,y),\dots,W_{10}(x,y)$ which they
obtained, we had no means for comparing our results to theirs.

\subsection{Three Dimensions}

We applied the process described in the previous section
to construct sets~$\C^3_1, \C^3_2, \dots$ of larger 3-dimensional twigs.
Again, we began with a single open cell on the cubical lattice,
and constructed all twigs with~$i$ dead cells or fewer dead cells and no
open cells.
We were able to reach twigs with~$i=9$ dead cells, obtaining a set of
about~$17 \cdot 10^9$ twigs, by which we proved
that~$\lambda_3 \leq 9.3835$.
Computing~$\C^3_9$ took~3 hours on the same cluster mentioned in
Section~\ref{sec:ub_imp}.
Our results (reported in Table~\ref{tab:OURresults3D}),
and~$W_9(x,y)=\sum_{\ell \in \C^3_8} w(\ell)$ are provided in
Appendix~\ref{app:3d}.

\subsection{Code}

Our code is available at \url{https://github.com/mshalah/polyominoes_polycubes_upperbounds}.

\newpage

% ----------------------------------------------------------------------------
% Appendices
% ==========

\newpage

\appendix

%$W_1(x,y)=2x^2y +2xy +y$\\[0.05in]

%$W_2(x,y)= 4x^4y^2 +8x^3y^2 +6x^2y^2 +2xy^2 +y$\\[0.05in]

%$W_3(x,y)=8x^6y^3 +24x^5y^3 +32x^4y^3 +20x^3y^3 +6x^2y^3 +2xy^2 +y$\\[0.05in]

%\parbox{5in}{
%	$W_4(x,y)=14x^8y^4 +58x^7y^4 +113x^6y^4 -124x^5y^4 +71x^4y^4 +20x^3y^4 +6x^2y^3 +2xy^2 +y$
%}\\[0.2in]

\section{Comparison of Results}

\label{app:2d}

\begin{table}[h]
	
	\centering
	
	\begin{tabular}{r|rr|rr|c}
		& \multicolumn{2}{c|}{$|\C_i|$} & \multicolumn{2}{c|}{$1/\sigma_i$} &
		Time (Hours) \\
		\multicolumn{1}{c|}{$i$} & \multicolumn{1}{c}{Ref.~\cite{KR73}} &
		\multicolumn{1}{c|}{Ours} & \multicolumn{1}{c}{Ref.~\cite{KR73}} &
		\multicolumn{1}{c|}{Ours} & Ours \\
		\hline \hline
		1 & 5 & 5 & 4.828428 & 4.828427124~~ \\
		2 & 21 & 21 & 4.828428 & 4.828427124~~ \\
		3 & 93 & 93 & 4.828428 & 4.828427124~~ \\
		4 & 409 & 409 & 4.796156 & 4.796155640~~ \\
		5 & 1,803 & 1,803 & 4.765534 & 4.765532996~~ \\
		6 & 7,929 & \textbf{7,937} & 4.738062 & \textbf{4.738743624} \\
		7 &  34,928 & \textbf{35,084} & 4.714292 & \textbf{4.716641912} \\
		8 & 151,897 & \textbf{153,458} & 4.690920 & \textbf{4.695386599} \\
		9 & 656,363 & \textbf{668,128} & 4.669409 & \textbf{4.676042980} \\
		10 & 2,821,227 & \textbf{2,899,941} & 4.649551 & \textbf{4.658412767} \\
		11 & & \textbf{12,557,503} & & \textbf{4.642235017} \\
		12 & & \textbf{54,137,703} & & \textbf{4.627069746} \\
		13 & & \textbf{232,203,877} & & \textbf{4.612780890} \\
		14 & & \textbf{991,607,177} & & \textbf{4.599355259} \\
		15 & & \textbf{4,218,349,778} & & \textbf{4.586741250} \\
		16 & & \textbf{17,881,987,659} & & \textbf{4.574877902} \\
		17 & & \textbf{75,568,307,191} & & \textbf{4.563716381} & \\
		18 & & \textbf{318,489,941,731} & & \textbf{4.553209881} & 0:04 \\
		19 & & \textbf{1,339,093,701,964} & & \textbf{4.543308340} & 0:20 \\
		20 & & \textbf{5,617,897,764,831} & & \textbf{4.533962650} & 1:30 \\
		21 & & \textbf{23,521,568,438,976} & & \textbf{4.525128839} & 7:00
		% 22 & \textbf{23521568438976} & \textbf{4.525128839}\\
		% long long limit 9,223,372,036,854,775,807
	\end{tabular}
	\caption{Left: Results obtained by Klarner and Rivest~\cite[Table~1]{KR73};
		Right: Our results for~$d{=}2$.
	}
	\label{tab:results_2d}
\end{table}

%Moreover, and although Klarner and Rivest did not state the following
%explicitly, they probably did not use the second part of
%condition~\cast\ in their program.
In fact, Klarner and Rivest claim to have used the following version of condition \cast: \\

\noindent \textbf{Condition \cast}:
\begin{itemize}
	\item None of the cells of~$L_i$ (except its root) overlap with any of
	the cells or forbidden cells of~$T$; and
	\item None of the forbidden cells of~$L_i$ overlap with any cells of~$T$.
\end{itemize}

However,
although they did not state the following
explicitly, they probably did not use the second part of condition~\cast\ in their program.
We motivate this claim by the two arguments provided below,
emphasizing that indeed using the second part of this condition \emph{as-is}
is incorrect, as we explain in the second argument.
%Therefore, we also disregarded the second part of condition~\cast,
%and allowed forbidden cells of~$L_i$ to overlap with cells of~$T$ in
%Step~5 of the construction algorithm.
Still, our results agree with those of Klarner and Rivest's only up
to~$i=5$, and we were unable to trace further the causes for the
differences for~$i \geq 6$.

\subsection*{Argument 1}

Consider the set~$\C_4$, which contains all twigs that have exactly~4 black
cells, or fewer black cells and no white cells.
Let us enumerate the twigs that have one, two, or three black cells and no
open cells.  We can easily observe the following.
\begin{enumerate}
	\item There is only \emph{one} twig ($L_1 \in \LC$, shown in
	Figure~\ref{fig:Ltwigs}) with one dead cell and no open cells.
	\item There are only \emph{two} twigs (see
	Figure~\ref{fig:two-dead-cells})
	\begin{figure}
		\centering
		\begin{tabular}{ccc}
			\scalebox{0.5}{\input{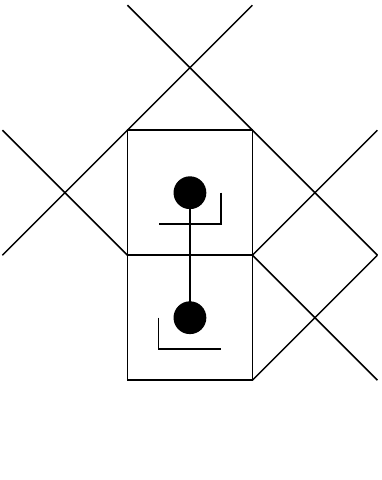_t}} & ~~ &
			\scalebox{0.5}{\input{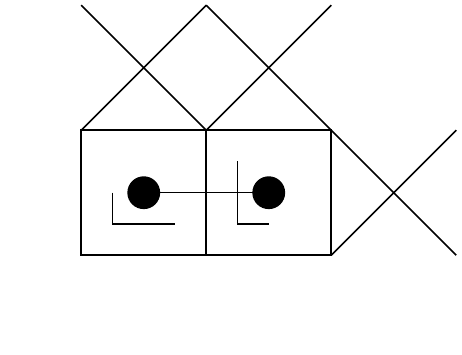_t}} \\
			$L_2 \ast L_1$ & & $ L_4 \ast L_1$
		\end{tabular}
		\caption{The only two twigs with two dead cells and no open
			cells}
		\label{fig:two-dead-cells}
	\end{figure}
	with two dead cells and no open cells.
	\item There are only \emph{six} twigs with three dead cells and no open
	cells.  The sequences corresponding to these six twigs are
	$L_2 \ast L_2 \ast L_1$,
	$L_2 \ast L_4 \ast L_1$,
	$L_3 \ast L_1 \ast L_1$,
	$L_4 \ast L_2 \ast L_1$,
	$L_4 \ast L_4 \ast L_1$, and
	$L_5 \ast L_1 \ast L_1$.
\end{enumerate}

We now show that there are~400 twigs with~4 dead cells.
% and zero open cells.
%, resulting in 409 in the set $\C_4$.
Each such twig~$T$ corresponds to a
sequence~$(\alpha, \beta, \gamma, \delta)$ of four elements of~$\LC$
(Figure~\ref{fig:Ltwigs}), such
that~$T= \alpha \ast \beta \ast \gamma \ast \delta$.
%(Note that~$T$ contains its forming twigs in reverse order.)
Trivially, in total, there are~$|\LC|^4=5^4=625$ sequences of four elements of~$\LC$.
However, some of these sequences are invalid, namely, they do not
represent valid twigs.
For example, the only valid sequence that starts with~$L_1$ is of length~1
since~$L_1$ has no open cells.
It is easy to verify that the following sequences are exactly those that
are invalid since their prefixes correspond to configurations with less
than four dead cells and no open cells:
%TODO define descendants
%has no descendants.
% till i=5 there is no overlap of dead cells
$S_1=(L_1,\beta,\gamma,\delta)$,
$S_2=(L_2,L_1,\gamma,\delta)$,
$S_3=(L_4,L_1,\gamma,\delta)$,
$S_4=(L_2,L_4,L_1,\delta)$,
$S_5=(L_4,L_2,L_1,\delta)$,
$S_6=(L_2,L_2,L_1,\delta)$,
$S_7=(L_4,L_4,L_1,\delta)$,
$S_8=(L_3,L_1,L_1,\delta)$, and
$S_9=(L_5,L_1,L_1,\delta)$.

Clearly, we have that~$|S_1|=5^3=125$, $|S_2|=|S_3|=5^2=25$,
and~$|S_4|=|S_5|=|S_6|=|S_7|=|S_8|=|S_9|=5$.
There remain exactly twenty invalid sequences.
Refer to Figure~\ref{fig:Ltwigs}.
\begin{figure}
	\begin{tabular}{>{\RaggedRight}p{3cm}>{\RaggedRight}p{3cm}>{\RaggedRight}p{4cm}}
		{} & {} & \scalebox{0.5}{\input{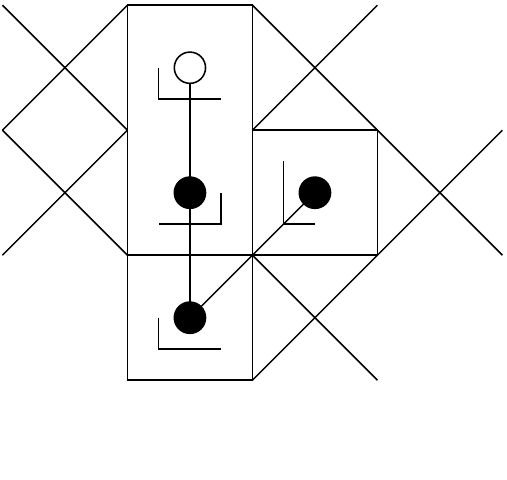_t}} \\
		{} & {} &\vspace{-8mm} (a) $L_3 \ast L_2 \ast L_1$ \\[0.2in]
		
		{} & {} & \scalebox{0.5}{\input{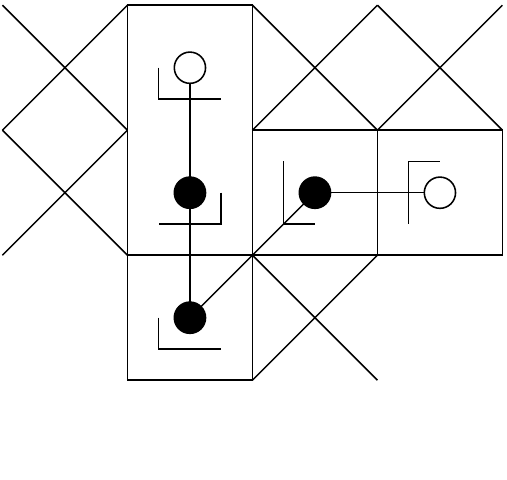_t}} \\
		{} & {} & \vspace{-8mm} (b) $L_3\ast L_2 \ast L_2$ \\[0.2in]
		
		{\scalebox{0.5}{\input{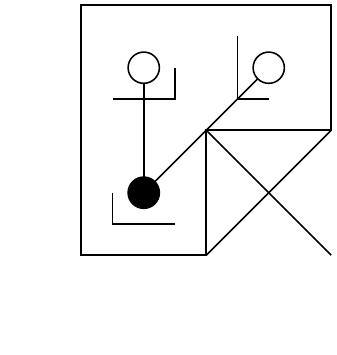_t}}} & {\scalebox{0.5}{\input{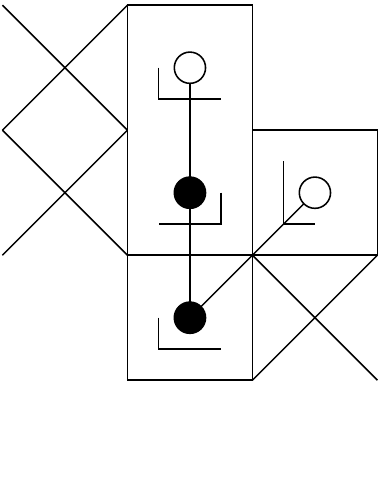_t}}} & \scalebox{0.5}{\input{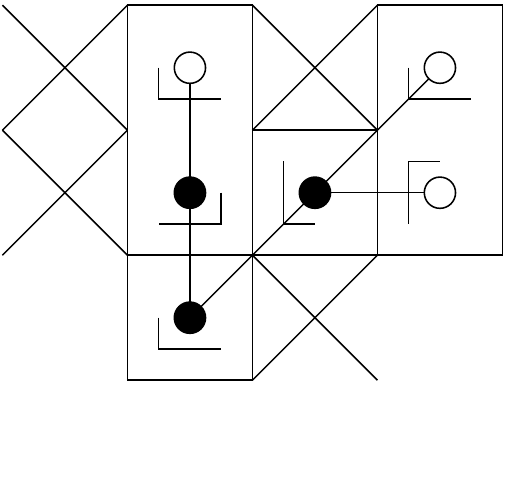_t}} \\
		\vspace{-8mm} $L_3$ & \vspace{-8mm} $L_3\ast L_2$ & \vspace{-8mm} (c) $L_3\ast L_2 \ast L_3$ \\[0.2in]
		
		{} & {} & \scalebox{0.5}{\input{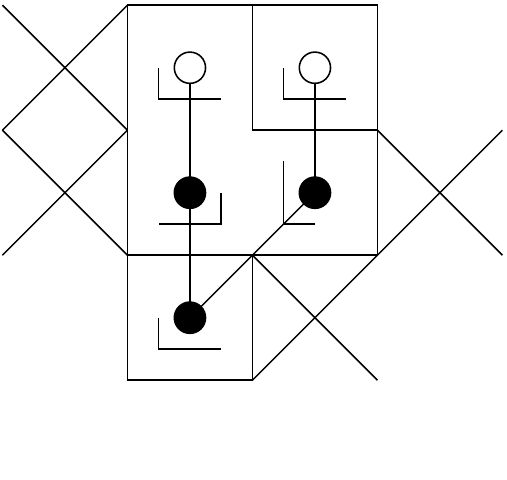_t}} \\
		{} & {} &\vspace{-8mm} (d) $ =L_3 \ast L_2 \ast L_4$ \\[0.2in]
		
		{} & {} & \scalebox{0.5}{\input{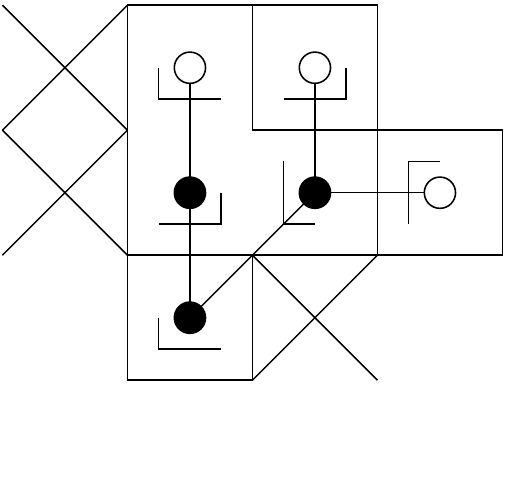_t}} \\
		{} & {} & \vspace{-8mm} (e) $L_3\ast L_2\ast L_5$ \\
	\end{tabular}
	\caption{Concatenation of $L_2$ and $L_3$.}
	\label{fig:l2l3}
\end{figure}
\begin{figure}
	\begin{tabular}{>{\RaggedRight}p{3cm}>{\RaggedRight}p{3cm}>{\RaggedRight}p{3cm}}
		{} & {} & \scalebox{0.5}{\input{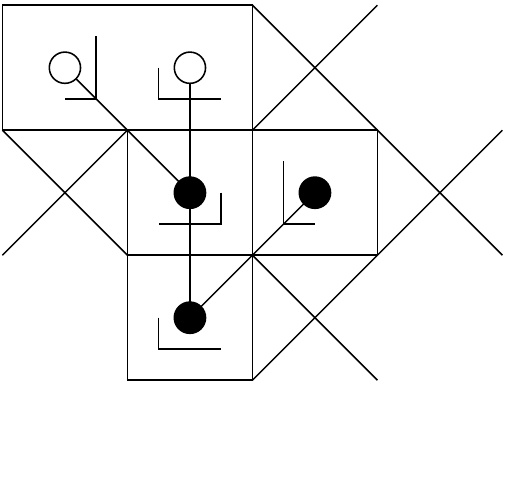_t}} \\
		{} & {} &\vspace{-8mm} (f) $L_3\ast L_3\ast L_1$ \\[0.2in]
		
		{} & {} & \scalebox{0.5}{\input{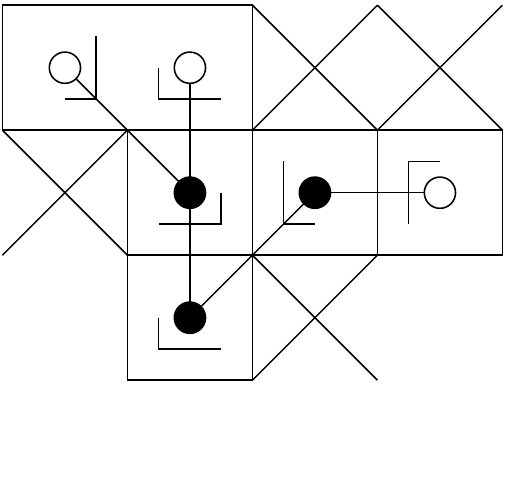_t}} \\
		{} & {} &\vspace{-8mm} (g) $L_3 \ast L_3\ast L_2$ \\[0.2in]
		
		{\scalebox{0.5}{\input{figures/L3.pdf_t}}} & {\scalebox{0.5}{\input{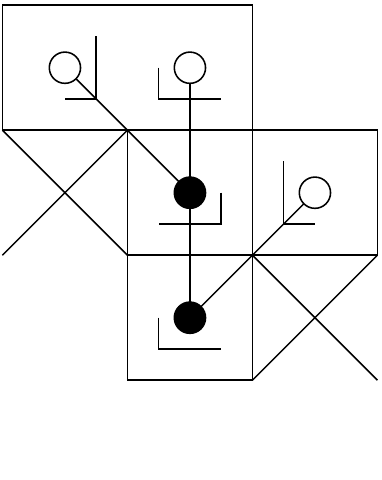_t}}} & \scalebox{0.5}{\input{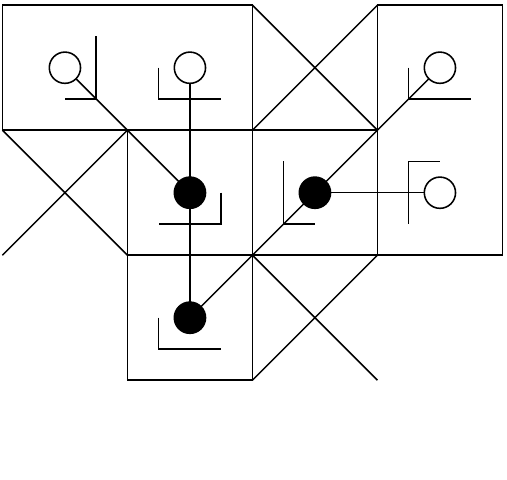_t}} \\
		\vspace{-8mm} $L_3$ & \vspace{-8mm} $L_3\ast L_3$ & \vspace{-8mm} (h) $L_3\ast L_3 \ast L_3$ \\[0.2in]
		
		{} & {} & \scalebox{0.5}{\input{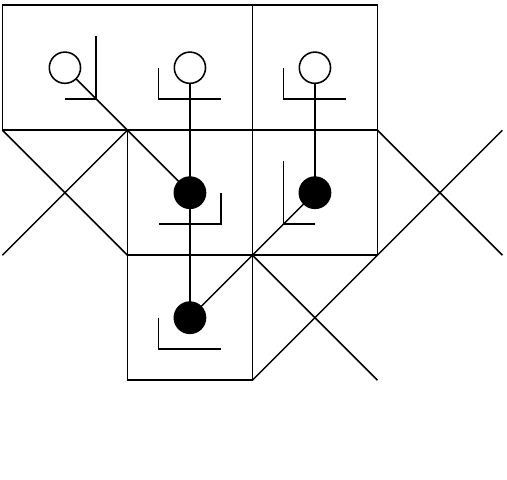_t}} \\
		{} & {} &\vspace{-8mm} (i) $L_3\ast L_3 \ast L_4$ \\[0.2in]

		{} & {} & \scalebox{0.5}{\input{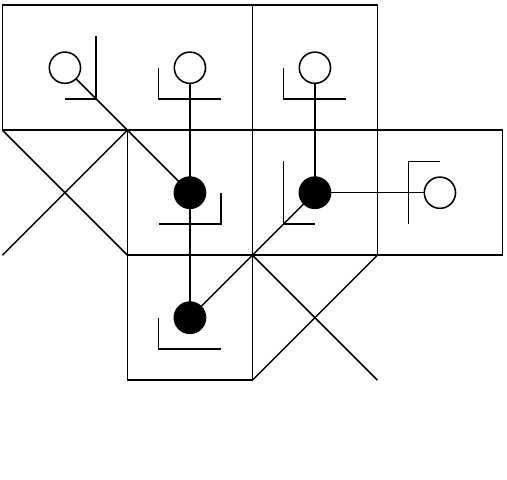_t}} \\
		{} & {} & \vspace{-8mm} (j) $L_3\ast L_3 \ast L_5$ \\
	\end{tabular}
	
	\caption{Concatenation of $L_3$ and $L_3$.}
	\label{fig:l3l3}
\end{figure}
The twigs~$L_4$ and~$L_5$ cannot be concatenated to any of the
configurations (a--j) (Figures~\ref{fig:l2l3} and~\ref{fig:l3l3}) since this would violate the first part of
condition~\cast, that is, a cell of~$L_4$ or~$L_5$ would overlap with
an occupied cell of the configuration or a cell marked as forbidden.
This results in $2{\cdot}10=20$ more invalid sequences.
Thus, we obtain that the number of twigs with four dead cells is
\[
|\LC|^4 - \sum_{i=1}^{9} |S_i| - 2 \cdot 10 =
625 - 125 - 2 \cdot 25 - 6 \cdot 5 - 20 = 400.
\]
Hence, adding items (1--3) above, we obtain that
$|\C_4| = 400 {+} 1 {+} 2 {+} 6 = 409$, which is the number provided
for~$|\C_4|$ by Klarner and Rivest as well (see Table~\ref{tab:results_2d}).
On the other hand, restricting the construction of~$\C_4$ further with
the second part of condition~\cast\ would imply that, for example,
concatenating~$L_1$ to configuration~(d) is invalid, which would
decrease the size of~$\C_4$.
Therefore, we conclude that Klarner and Rivest most probably did not use the
second part of condition~\cast.

\subsection*{Argument 2}

Consider the pentomino~$P$ shown in Figure~\ref{fig:pentomino}.
\begin{figure}
	\centering
	\includegraphics[scale=0.5]{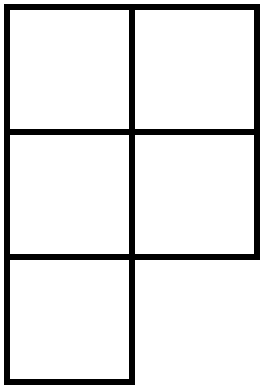}
	\caption{A pentomino}
	\label{fig:pentomino}
\end{figure}
The spanning tree of~$P$ corresponds to the
sequence~$L_3 {\ast} L_2 {\ast} L_4 {\ast} L_1 {\ast} L_1$ (which is
equivalent to concatenating~$L_1$ twice to configuration~(d) in
Figure~\ref{fig:l2l3}).
In terms of elements of~$\C_4$, this is a sequence of length two:
its first element~$T$ is the twig corresponding to adding~$L_1$ to
configuration~(d), and its second element is~$L_1$.
% [[[I SUGGEST TO DRAW THIS IN FIGURE~\ref{fig:pentomino}]]]
% MIRA: I THINK THIS IS NOT TRULY NEEDED, BUT LET'S SEE
By definition, the two twigs~$T$ and~$L_1$ belong to~$\C_4$.
However, if we were to use the second part of condition~\cast,
$T$~would be discarded as an element of~$\C_4$.
In such a situation, $\C_4$ would not be a \emph{complete} set of building
blocks for polyominoes, and~$P$ would have no corresponding sequence of
weight~$x^5 y^5$ of elements of~$\C_4$.
Therefore, the method would have failed to provide an upper bound
on~$\lambda_2$ if the second part of condition~\cast~had been used,
as some polyominoes (such as~$P$) would be overlooked.

\section{The Functions \boldmath{$W_i(x,y)$}}

\label{sec:GFs}

Here are the weight functions~$W_1(x,y),\dots,W_{21}(x,y)$: \\

\noindent
$W_1(x,y)=2x^2y +2xy +y$ \\[-0.12in]

\noindent
$W_2(x,y)= 4x^4y^2 +8x^3y^2 +6x^2y^2 +2xy^2 +y$ \\[-0.12in]

\noindent
$W_3(x,y)=8x^6y^3 +24x^5y^3 +32x^4y^3 +20x^3y^3 +6x^2y^3 +2xy^2 +y$ \\[-0.12in]

\noindent
\parbox{5.5in}{
	$W_4(x,y)=14x^8y^4 +58x^7y^4 +113x^6y^4 -124x^5y^4 +71x^4y^4 +20x^3y^4 +6x^2y^3 +2xy^2 +y$
} \\[-0.06in]

\noindent
\parbox{5.5in}{
	$W_5(x,y)=
	24x^{10}y^5 +124x^9y^5 +317x^8y^5 +494x^7y^5 +483x^6y^5 +261x^5y^5 +71x^4y^5 +20x^3y^4 +6x^2y^3 +2xy^2 +y$
} \\[0.07in]

\noindent
\parbox{5.5in}{
	$W_6(x,y)=
	36x^{12}y^6 +240x^{11}y^6 +772x^{10}y^6 +1550x^9y^6 +2099x^8y^6 +1895x^7y^6 +984x^6y^6  +261x^5y^6 +71x^4y^5 +20x^3y^4 +6x^2y^3 +2xy^2 +y$
} \\[0.07in]

\noindent
\parbox{5.5in}{
	$W_7(x,y)=
	64x^{14}y^7 +468x^{13}y^7 +1750x^{12}y^7 +4221x^{11}y^7 +7177x^{10}y^7 +8795x^9y^7 +7489x^8y^7 +3775x^7y^7 +984x^6y^7 +261x^5y^6 +71x^4y^5 +20x^3y^4 +6x^2y^3 +2xy^2 +y$
} \\[0.07in]

\noindent
\parbox{5.5in}{
	$W_8(x,y)=
	88x^{16}y^8 +780x^{15}y^8 +3487x^{14}y^8 +10135x^{13}y^8 +20921x^{12}y^8 +32015x^{11}y^8 +36517x^{10}y^8 +29738x^9y^8 +14657x^8y^8 +3775x^7y^8 +984x^6y^7 +261x^5y^6 +71x^4y^5 +20x^3y^4 +6x^2y^3 +2xy^2 +y$
} \\[0.07in]

\noindent
\parbox{5.5in}{
	$W_9(x,y)=
	96x^{18}y^9 +1092x^{17}*y^9 +6138x^{16}y^9 +21679x^{15}y^9 +53840x^{14}y^9 +99208x^{13}y^9 +139805x^{12}y^9 +150644x^{11}y^9 +118455x^{10}*y^9 +57394x^9y^9 +14657x^8y^9 +3775x^7y^8 +984x^6y^7 +261x^5y^6 +71x^4y^5 +20x^3y^4 +6x^2y^3 +2xy^2 +y$
} \\[0.07in]

\noindent
\parbox{5.5in}{
	$W_{10}(x,y)=
	64x^{20}y^{10} +1288x^{19}y^{10} +9620x^{18}y^{10} +41940x^{17}y^{10} +124236x^{16}y^{10} +271585x^{15}y^{10}  +455916x^{14}y^{10} +600672x^{13}y^{10} +618318x^{12}y^{10} +472966x^{11}y^{10} +226165x^{10}y^{10} +57394x^9y^{10} +14657x^8y^9 +3775x^7y^8 +984x^6y^7 +261x^5y^6 +71x^4y^5 +20x^3y^4 +6x^2y^3 +2xy^2 +y$
} \\[0.07in]

\noindent
\parbox{5.5in}{
	$W_{11}(x,y)=
	32x^{22}y^{11} + 1560x^{21}y^{11} +15116x^{20}y^{11} + 77222x^{19}y^{11} +265528x^{18}y^{11}
	+671900x^{17}y^{11} + 1315757x^{16}y^{11} +
	2043184x^{15}y^{11} + 2547938x^{14}y^{11} + 2528282x^{13}y^{11} + 1892135x^{12}y^{11} + 895513x^{11}y^{11} + 226165x^{10}y^{11} + 57394x^9y^{10} + 14657x^8y^9 + 3775x^7y^8 + 984x^6y^7 + 261x^5y^6 + 71x^4y^5 + 20x^3y^4 + 6x^2y^3 + 2xy^2 + y$
} \\[0.07in]

\noindent
\parbox{5.5in}{
	$W_{12}(x,y)=
	32x^{24}y^{12} +2448x^{23}y^{12} +24984x^{22}y^{12} +140612x^{21}y^{12} +537148x^{20}y^{12}
	+1535243x^{19}y^{12}+3428784x^{18}y^{12} +6148920x^{17}y^{12} + 8968766x^{16}y^{12} +10700784x^{15}y^{12}
	+10309921x^{14}y^{12} +7582080x^{13}y^{12} +3559132x^{12}y^{12} + 895513x^{11}y^{12} +226165x^{10}y^{11} + 57394x^9y^{10}
	+14657x^8y^9 +3775x^7y^8 +984x^6y^7 +261x^5y^6 +71x^4y^5 +20x^3y^4 +6x^2y^3 +2xy^2 +y$
} \\[0.07in]

\noindent
\parbox{5.5in}{
	$W_{13}(x,y)=
	64x^{26}y^{13} +3376x^{25}y^{13} +39052x^{24}y^{13} +242230x^{23}y^{13} +1029746x^{22}y^{13} +3276965x^{21}y^{13}  +8225862x^{20}y^{13} +16714930x^{19}y^{13} +27959240x^{18}y^{13} +38764654x^{17}y^{13} +44612842x^{16}y^{13} +41963681x^{15}y^{13} +30425691x^{14}y^{13} +14187563x^{13}y^{13} +3559132x^{12}y^{13} +895513x^{11}y^{12} +226165x^{10}y^{11} +57394x^9y^{10} +14657x^8y^9 +3775x^7y^8 +984x^6y^7 +261x^5y^6 +71x^4y^5 +20x^3y^4 +6x^2y^3 +2xy^2 +y$
} \\[0.03in]

\noindent
\parbox{5.5in}{
	$W_{14}(x,y)=3872x^{27}y^{14} +53860x^{26}y^{14} +388828x^{25}y^{14} +1856137x^24y^{14} +6593524x^{23}y^{14} +18410515x^{22}y^{14} +41847658x^{21}y^{14} +78846479x^{20}y^{14} +124566489x^{19}y^{14} +165600553x^{18}y^{14} +184977014x^{17}y^{14} +170581831x^{16}y^{14}  +122243680x^{15}y^{14}+56691193x^{14}y^{14} +14187563x^{13}y^{14}  +3559132x^{12}y^{13} +895513x^{11}y^{12} +226165x^{10}y^{11}+57394x^9y^{10} +14657x^8y^9 +3775x^7y^8 +984x^6y^7 +261x^5y^6 +71x^4y^5 +20x^3y^4+6x^2y^3 +2xy^2 +y$					
} \\[0.02in]

\noindent
\parbox{5.5in}{
	$W_{15}(x,y)=
	4400x^{29}y^{15} +74688x^{28}y^{15} +598128x^{27}y^{15} +3182157x^26y^{15} +12522050x^{25}y^{15} +38772694x^{24}y^{15} +97650143x^{23}y^{15} +204840498x^{22}y^{15} +362613604x^{21}y^{15} +546205155x^{20}y^{15} +701024617x^{19}y^{15} +763765263x^{18}y^{15} +692808602x^{17}y^{15} +491675078x^{16}y^{15} +226975964x^{15}y^{15} +56691193x^{14}y^{15} +14187563x^{13}y^{14} +3559132x^{12}y^{13} +895513x^{11}y^{12} +226165x^{10}y^{11} +57394x^9y^{10} +14657x^8y^9 +3775x^7y^8 +984x^6y^7 +261x^5y^6 +71x^4y^5 +20x^3y^4 +6x^2y^3 +2xy^2 +y$
} \\[0.02in]

\noindent
\parbox{5.5in}{
	$W_{16}(x,y)=6912x^{31}y^{16} +106640x^{30}y^{16} +902716x^{29}y^{16} +5194974x^{28}y^{16} +22502316x^{27}y^{16} +76836395x^{26}y^{16} +213862804x^{25}y^{16} +495599251x^{24}y^{16} +972881530x^{23}y^{16} +1634495588x^{22}y^{16} +2365001740x^{21}y^{16} +2946546711x^{20}y^{16} +3143569000x^{19}y^{16} +2812205702x^{18}y^{16} +1979423214x^{17}y^{16} +910239465x^{16}y^{16} +226975964x^{15}y^{16} +56691193x^{14}y^{15} +14187563x^{13}y^{14} +3559132x^{12}y^{13} +895513x^{11}y^{12} +226165x^{10}y^{11} +57394x^9y^{10} +14657x^8y^9 +3775x^7y^8 +984x^6y^7 +261x^5y^6 +71x^4y^5 +20x^3y^4 +6x^2y^3 +2xy^2 +y$
} \\[0.02in]

\noindent
\parbox{5.5in}{
	$W_{17}(x,y)=7488x^{33}y^{17} +136064x^{32}y^{17} +1260248x^{31}y^{17} +8008422x^{30}y^{17} +38155640x^{29}y^{17} +143730425x^{28}y^{17} +440598289x^{27}y^{17} +1124991218x^{26}y^{17} +2431711692x^{25}y^{17} +4512423641x^{24}y^{17} +7250819239x^{23}y^{17} +10138105194x^{22}y^{17} +12316094597x^{21}y^{17} +12907416312x^{20}y^{17} +11411126315x^{19}y^{17} +7975518589x^{18}y^{17} +3655351652x^{17}y^{17} +910239465x^{16}y^{17} +226975964x^{15}y^{16} +56691193x^{14}y^{15} +14187563x^{13}y^{14} +3559132x^{12}y^{13} +895513x^{11}y^{12} +226165x^{10}y^{11} +57394x^9y^{10} +14657x^8y^9 +3775x^7y^8 +984x^6y^7 +261x^5y^6 +71x^4y^5 +20x^3y^4 +6x^2y^3 +2xy^2 +y$
} \\[0.02in]

\noindent
\parbox{5.5in}{
	$W_{18}(x,y)=6656x^{35}y^{18}
	+144624x^{34}y^{18}
	+1596016x^{33}y^{18}
	+11499558x^{32}y^{18}
	+61231243x^{31}y^{18}
	+254612673x^{30}y^{18}
	+859000063x^{29}y^{18}
	+2406743605x^{28}y^{18}
	+5707780042x^{27}y^{18}
	+11616872329x^{26}y^{18}
	+20533746813x^{25}y^{18}
	+31753025591x^{24}y^{18}
	+43111962291x^{23}y^{18}
	+51254247441x^{22}y^{18}
	+52900498537x^{21}y^{18}
	+46293847405x^{20}y^{18}
	+32158564611x^{19}y^{18}
	+14696358415x^{18}y^{18}
	+3655351652x^{17}y^{18}
	+910239465x^{16}y^{17}
	+226975964x^{15}y^{16}
	+56691193x^{14}y^{15}
	+14187563x^{13}y^{14}
	+3559132x^{12}y^{13}
	+895513x^{11}y^{12}
	+226165x^{10}y^{11}
	+57394x^9y^{10}
	+14657x^8y^9
	+3775x^7y^8
	+984x^6y^7
	+261x^5y^6
	+71x^4y^5
	+20x^3y^4
	+6x^2y^3
	+2xy^2
	+y
	$
} \\[0.02in]

\noindent
\parbox{5.5in}{
	$W_{19}(x,y)=4160x^{37}y^{19}
	+132864x^{36}y^{19}
	+1806784x^{35}y^{19}
	+15379120x^{34}y^{19}
	+92764008x^{33}y^{19}
	+429323463x^{32}y^{19}
	+1592168897x^{31}y^{19}
	+4883567215x^{30}y^{19}
	+12646494275x^{29}y^{19}
	+28110031644x^{28}y^{19}
	+54277687090x^{27}y^{19}
	+91972916089x^{26}y^{19}
	+137593335616x^{25}y^{19}
	+182153289931x^{24}y^{19}
	+212563199986x^{23}y^{19}
	+216507646418x^{22}y^{19}
	+187791339852x^{21}y^{19}
	+129752205674x^{20}y^{19}
	+59145846645x^{19}y^{19}
	+14696358415x^{18}y^{19}
	+3655351652x^{17}y^{18}
	+910239465x^{16}y^{17}
	+226975964x^{15}y^{16}
	+56691193x^{14}y^{15}
	+14187563x^{13}y^{14}
	+3559132x^{12}y^{13}
	+895513x^{11}y^{12}
	+226165x^{10}y^{11}
	+57394x^9y^{10}
	+14657x^8y^9
	+3775x^7y^8
	+984x^6y^7
	+261x^5y^6
	+71x^4y^5
	+20x^3y^4
	+6x^2y^3
	+2xy^2
	+y
	$	
} \\[0.03in]

\noindent
\parbox{5.5in}{
	$W_{20}(x,y)=2176x^{39}y^{20}
	+105984x^{38}y^{20}
	+1912912x^{37}y^{20}
	+19332440x^{36}y^{20}
	+133729026x^{35}y^{20}
	+690201840x^{34}y^{20}
	+2818499583x^{33}y^{20}
	+9436504653x^{32}y^{20}
	+26597729598x^{31}y^{20}
	+64253320144x^{30}y^{20}
	+134933108961x^{29}y^{20}
	+248868733815x^{28}y^{20}
	+406568926658x^{27}y^{20}
	+591104290770x^{26}y^{20}
	+765646007641x^{25}y^{20}
	+879150587033x^{24}y^{20}
	+885155570880x^{23}y^{20}
	+761751496919x^{22}y^{20}
	+523818188901x^{21}y^{20}
	+238239106019x^{20}y^{20}
	+59145846645x^{19}y^{20}
	+14696358415x^{18}y^{19}
	+3655351652x^{17}y^{18}
	+910239465x^{16}y^{17}
	+226975964x^{15}y^{16}
	+56691193x^{14}y^{15}
	+14187563x^{13}y^{14}
	+3559132x^{12}y^{13}
	+895513x^{11}y^{12}
	+226165x^{10}y^{11}
	+57394x^9y^{10}
	+14657x^8y^9
	+3775x^7y^8
	+984x^6y^7
	+261x^5y^6
	+71x^4y^5
	+20x^3y^4
	+6x^2y^3
	+2xy^2
	+y
	$
} \\[0.15in]

\noindent
\parbox{5.5in}{
	$W_{21}(x,y)=
	1152x^{41}y^{21}
	+96256x^{40}y^{21}
	+2056288x^{39}y^{21}
	+24172920x^{38}y^{21}
	+187337816x^{37}y^{21}
	+1070722580x^{36}y^{21}
	+4785628001x^{35}y^{21}
	+17443616434x^{34}y^{21}
	+53301072439x^{33}y^{21}
	+139441849789x^{32}y^{21}
	+316994896043x^{31}y^{21}
	+633630643451x^{30}y^{21}
	+1123186767205x^{29}y^{21}
	+1777867306642x^{28}y^{21}
	+2521701005541x^{27}y^{21}
	+3204911295924x^{26}y^{21}
	+3628247213386x^{25}y^{21}
	+3615763450791x^{24}y^{21}
	+3089960078272x^{23}y^{21}
	+2115755445262x^{22}y^{21}
	+960344267887x^{21}y^{21}
	+238239106019x^{20}y^{21}
	+59145846645x^{19}y^{20}
	+14696358415x^{18}y^{19}
	+3655351652x^{17}y^{18}
	+910239465x^{16}y^{17}
	+226975964x^{15}y^{16}
	+56691193x^{14}y^{15}
	+14187563x^{13}y^{14}
	+3559132x^{12}y^{13}
	+895513x^{11}y^{12}
	+226165x^{10}y^{11}
	+57394x^9y^{10}
	+14657x^8y^9
	+3775x^7y^8
	+984x^6y^7
	+261x^5y^6
	+71x^4y^5
	+20x^3y^4
	+6x^2y^3
	+2xy^2
	+y
	$
}

\section{Three Dimensions}

\begin{table}
	\centering
	\begin{tabular}{crc}
		$i$ & \multicolumn{1}{c}{$|\C^3_i|$} & $1/\sigma_i$\\
		\hline \hline
		1 & 17 & 9.807295572 \\
		2 & 273 & 9.807295567 \\
		3 & 3,745 & 9.701430690 \\
		%3 & 3,471 & 9.553899640 \\
		4 & 51113 & 9.631827042 \\
		% 4 & 37,802 & 9.322364685 \\
		5 & 693,725 & 9.573610717 \\
		% 5 & 413,173 & 9.166209443 \\
		6 & 9,047,959 & 9.517471577 \\
		% 6 & 4,549,717 & 9.050849037 \\
		7 & 114,736,608 & 9.467046484 \\
		% 7 & 48,913,632 & 8.956292810 \\
		8 & 1,428,690,351 & 9.422618063 \\
		% 8 & 514,879,590 & 8.878133389
		9 & 17,538,443,750 & 9.383460515 \\
		%10 &
	\end{tabular}
	\caption{Our results in~3 dimensions}
	\label{tab:OURresults3D}
\end{table}

\label{app:3d}

The following is the weight function for~$i=8$, from which we computed the upper
bound~$\lambda_3 \leq 9.3835$. \\

\noindent
\parbox{5.5in}{
$W_8(x,y)=
y +4xy^2 +23x^2y^3 +150x^3y^4 +1051x^4y^5 +7661x^5y^6 +57337x^6y^7 +437050x^7y^8 +3376485x^8y^9 +26352274x^9y^9 +108757201x^{10}y^9 +306714778x^{11}y^9 +674917794x^{12}y^9 +1222175063x^{13}y^9 +1866911075x^{14}y^9 +2434995919x^{15}y^9 +2728046412x^{16}y^9 +2631637304x^{17}y^9 +2185885771x^{18}y^9 +1560584567x^{19}y^9 +954538066x^{20}y^9 +497886496x^{21}y^9 +220105634x^{22}y^9 +81810253x^{23}y^9 +25294655x^{24}y^9 +6411687x^{25}y^9 +1305352x^{26}y^9 +207134x^{27}y^9 +24462x^{28}y^9 +1992x^{29}y^9 +97x^{30}y^9 +2x^{31}y^9);
$
}

\section{Maple Code}
\label{app:maple}
Let \texttt{f(x,y)} be a rational two-variable generating function.
Klarner and Rivest~\cite[\S3]{KR73} showed how to obtain the the radius of
convergence of the diagonal of~\texttt{f(x,y)}.
This requires a change of variable in order to apply the residue theorem.
The diagonal function~$f_D(z) = \sum_n l(n,n)z^n$ could then be written as
a sum of residues.
The following is our \texttt{Maple} implementation of this method. \\

%f:=(x,y)->x/(1-y-x^4*y-4*x^3*y-4*x*y-7*x^2*y);
\noindent \texttt{g:=(x,y)->f(x,y)/x;} \\
\texttt{par := g(s,z/s);} \\
\texttt{d := denom(par);} \\
\noindent \texttt{with(Physics):} \\
\texttt{c := Coefficients(d,s,leading);} \\
\texttt{div := Coefficients(c,z,leading);} \\
\texttt{d := d / div;} \\
\texttt{dis := discrim(d,s);} \\
\texttt{sols := fsolve(dis=0,z);} \\
\texttt{maxroot := max(sols);} \\
\texttt{ub := evalf(1/maxroot);}

%~ \\ ~
%[[[DO YOU WANT TO PROVIDE THE GF OF THE DIAGONAL FUNCTION?]]] \\
%MIRA: LET ME THINK ABOUT THAT


\begin{thebibliography}{10}

\bibitem{oeis}
The {O}n-line {E}ncyclopedia of {I}nteger {S}equences, published on-line at
  \url{http://oeis.org}.

\bibitem{opp}
The {O}pen {P}roblems {P}roject, published on-line at
  \url{http://cs.smith.edu/~jorourke/TOPP }.

\bibitem{AB09a}
G.~Aleksandrowicz and G.~Barequet.
\newblock Counting $d$-dimensional polycubes and nonrectangular planar
  polyominoes.
\newblock {\em Int.\ J. of Computational Geometry and Applications},
  19:215--229, 2009.

\bibitem{AB09b}
G.~Aleksandrowicz and G.~Barequet.
\newblock Counting polycubes without the dimensionality curse.
\newblock {\em Discrete Mathematics}, 309:576--583, 2009.

\bibitem{BB15}
G.~Barequet and R.~Barequet.
\newblock An improved upper bound on the growth constant of polyominoes.
\newblock {\em Electronic Notes in Discrete Mathematics}, 49(Supplement
  C):{167--172}, 2015.
\newblock The 8th European Conf.\ on Combinatorics, Graph Theory, and
  Applications.

\bibitem{BMRR06}
G.~Barequet, M.~Moffie, A.~Rib\'{o}, and G.~Rote.
\newblock Counting polyominoes on twisted cylinders.
\newblock {\em INTEGERS: Electronic J. of Combinatorial Number Theory}, 6,
  2006.

\bibitem{BRS16}
G.~Barequet, G.~Rote, and M.~Shalah.
\newblock $\lambda > 4$ : An improved lower bound on the growth constant of
  polyominoes.
\newblock {\em Comm.\ of the ACM}, 59:88--95, July 2016.

\bibitem{BBR10}
R.~Barequet, G.~Barequet, and G.~Rote.
\newblock Formulae and growth rates of high-dimensional polycubes.
\newblock {\em Combinatorica}, 30:257--275, 2010.

\bibitem{Co95}
A.~Conway.
\newblock Enumerating 2d percolation series by the finite-lattice method:
  Theory.
\newblock {\em J. of Physics, A:~Mathematical and General}, 28:335--349, 1995.

\bibitem{DH83}
B.~Derrida and H.J. Herrmann.
\newblock Collapse of branched polymers.
\newblock {\em Le Journal de Physique}, 44:1365--1376, 1983.

\bibitem{Ed61}
M.~Eden.
\newblock A two-dimensional growth process.
\newblock In {\em Proc. of the 4th Berkeley Symp.\ on Mathematical Statistics
  and Probability}, volume~4, pages 223--239, Berkeley, CA, 1961.

\bibitem{FGSW94}
S.~Flesia, D.S. Gaunt, C.E. Soteros, and S.G. Whittington.
\newblock Statistics of collapsing lattice animals.
\newblock {\em J. of Physics, A:~Mathematical and General}, 27:5831--5846,
  1994.

\bibitem{Ga80}
D.S. Gaunt.
\newblock The critical dimension for lattice animals.
\newblock {\em J. of Physics, A:~Mathematical and General}, 13:L97--L101, 1980.

\bibitem{GSR76}
D.S. Gaunt, M.F. Sykes, and H.~Ruskin.
\newblock Percolation processes in $d$-dimensions.
\newblock {\em J. of Physics~A: Mathematical and General}, 9:1899--1911, 1976.

\bibitem{Gu09}
A.J. Guttmann, editor.
\newblock {\em Polygons, Polyominoes, and Polycubes}, volume 775.
\newblock Springer Netherlands, 2009.

\bibitem{Je03}
I.~Jensen.
\newblock Counting polyominoes: A parallel implementation for cluster
  computing.
\newblock In {\em Proc. of the Int. Conf. on Computational Science, part III},
  volume 2659, pages 203--212. Springer, June 2003.

\bibitem{Kl65}
D.A. Klarner.
\newblock Some results concerning polyominoes.
\newblock {\em Fibonacci Quarterly}, 3:9--20, 1965.

\bibitem{Kl67}
D.A. Klarner.
\newblock Cell growth problems.
\newblock {\em Canadian J. of Mathematics}, 19:851--863, 1967.

\bibitem{KR73}
D.A. Klarner and R.L. Rivest.
\newblock A procedure for improving the upper bound for the number of
  $n$-ominoes.
\newblock {\em Canadian J. of Mathematics}, 25:585--602, 1973.

\bibitem{LI79}
T.C. Lubensky and J.~Isaacson.
\newblock Statistics of lattice animals and dilute branched polymers.
\newblock {\em Physical Review A}, 20:2130--2146, 1979.

\bibitem{LM11}
S.~Luther and S.~Mertens.
\newblock Counting lattice animals in high dimensions.
\newblock {\em J. of Statistical Mechanics: Theory and Experiment}, 9:546--565,
  2011.

\bibitem{Ma99}
N.~Madras.
\newblock A pattern theorem for lattice clusters.
\newblock {\em Annals of Combinatorics}, 3:357--384, 1999.

\bibitem{MSWMSFG90}
N.~Madras, C.E. Soteros, S.G. Whittington, J.L. Martin, M.F. Sykes, S.~Flesia,
  and D.S. Gaunt.
\newblock The free energy of a collapsing branched polymer.
\newblock {\em J. of Physics, A:~Mathematical and General}, 23:5327--5350,
  1990.

\bibitem{Ma90}
J.L. Martin.
\newblock The impact of large-scale computing on lattice statistics.
\newblock {\em J.\ of Statistical Physics}, 58:749--774, 1990.

\bibitem{ML92}
S.~Mertens and M.E. Lautenbacher.
\newblock Counting lattice animals: A parallel attack.
\newblock {\em J. of Statistical Physics}, 66:669--678, 1992.

\bibitem{RW81}
B.M.I. Rands and D.J.A. Welsh.
\newblock Animals, trees and renewal sequences.
\newblock {\em IMA J. of Applied Mathemathics}, 27:1--17, 1981.

\bibitem{Re62}
R.C. Read.
\newblock Contributions to the cell growth problem.
\newblock {\em Canadian J. of Mathematics}, 14:1--20, 1962.

\bibitem{Re81}
D.H. Redelmeier.
\newblock Counting polyominoes: Yet another attack.
\newblock {\em Discrete Mathematics}, 36:191--203, 1981.

\bibitem{SG76}
M.F. Sykes and M.~Glen.
\newblock Percolation processes in two dimensions: I. low-density series
  expansions.
\newblock {\em J. of Physics, A:~Mathematical and General}, 9:87--95, 1976.

\end{thebibliography}
\end{document}